\documentclass[journal]{IEEEtran}
\usepackage{amsmath,amssymb,amsthm,bm,booktabs,graphicx,cite,multirow,algorithm,algorithmic,microtype}
\usepackage[hidelinks]{hyperref}
\newcommand{\chkIdentity}{$3.6\cdot10^{-15}$}
\newcommand{\chkMinor}{$1.7\cdot10^{-13}$}

\newcommand{\chkWMMSE}{$2.4\cdot10^{-11}$}
\newcommand{\chkXMMSE}{$1.2\cdot10^{-15}$}
\newcommand{\chkBfGap}{$1.2\cdot10^{-14}$}
\newcommand{\chkRateErr}{$3.4\cdot10^{-3}$}

\newcommand{\chkRateN}{10}
\newcommand{\chkRateFP}{0.991}
\newcommand{\chkRateSEFP}{0.898}
\newcommand{\chkRateICT}{0.763}

\newcommand{\pcCellSINR}{1.5}
\newcommand{\pcCellSINRp}{33.7}
\newcommand{\pcCellFPitNN}{406}
\newcommand{\pcCellFPitNNN}{1704}
\newcommand{\pcCellFPrelBest}{-0.77}
\newcommand{\pcCellFPwithin}{61.4}
\newcommand{\pcCellFPatFifty}{-4.53}

\newcommand{\pcCellFPrising}{67.6}
\newcommand{\pcCellFPkkt}{$1.7\cdot10^{-6}$}
\newcommand{\pcCellSEFPitNN}{17}
\newcommand{\pcCellSEFPitNNN}{34}
\newcommand{\pcCellSEFPrelBest}{-0.24}
\newcommand{\pcCellSEFPwithin}{96.8}
\newcommand{\pcCellSEFPatFifty}{-0.68}

\newcommand{\pcCellSEFPkkt}{below $10^{-15}$}
\newcommand{\pcCellICTitNN}{11}
\newcommand{\pcCellICTitNNN}{17}
\newcommand{\pcCellICTrelBest}{-0.21}
\newcommand{\pcCellICTwithin}{97.4}
\newcommand{\pcCellICTatFifty}{-0.44}

\newcommand{\pcCellICTvsFPmean}{+0.62}
\newcommand{\pcCellICTvsFPwins}{38.4}
\newcommand{\pcCellICTvsFPlosses}{2.6}
\newcommand{\pcCellICTvsSEFPmean}{+0.04}
\newcommand{\pcCellICTvsSEFPwins}{0.8}
\newcommand{\pcCellICTvsSEFPlosses}{0.2}

\newcommand{\pcDenseSINR}{15.6}
\newcommand{\pcDenseSINRp}{34.6}
\newcommand{\pcDenseFPitNN}{1962}
\newcommand{\pcDenseFPitNNN}{4350}
\newcommand{\pcDenseFPrelBest}{-0.43}
\newcommand{\pcDenseFPwithin}{37.8}
\newcommand{\pcDenseFPatFifty}{-3.50}

\newcommand{\pcDenseFPrising}{98.6}

\newcommand{\pcDenseSEFPitNN}{16}
\newcommand{\pcDenseSEFPitNNN}{35}
\newcommand{\pcDenseSEFPrelBest}{-0.04}
\newcommand{\pcDenseSEFPwithin}{94.8}
\newcommand{\pcDenseSEFPatFifty}{-0.09}

\newcommand{\pcDenseICTitNN}{10}
\newcommand{\pcDenseICTitNNN}{18}
\newcommand{\pcDenseICTrelBest}{-0.03}
\newcommand{\pcDenseICTwithin}{95.4}
\newcommand{\pcDenseICTatFifty}{-0.04}

\newcommand{\pcDenseICTvsFPmean}{+0.41}
\newcommand{\pcDenseICTvsFPwins}{61.4}
\newcommand{\pcDenseICTvsFPlosses}{4.0}
\newcommand{\pcDenseICTvsSEFPmean}{+0.01}
\newcommand{\pcDenseICTvsSEFPwins}{2.2}
\newcommand{\pcDenseICTvsSEFPlosses}{1.6}

\newcommand{\pcNdrops}{500}
\newcommand{\pcT}{5000}
\newcommand{\pfNdrops}{60}
\newcommand{\pfT}{500}
\newcommand{\pfFP}{98.37}
\newcommand{\pfStdFP}{5.05}
\newcommand{\pfItersFP}{49.7}
\newcommand{\pfCapFP}{98.5}
\newcommand{\pfOffFP}{0.10}
\newcommand{\pfRateFP}{1.82}
\newcommand{\pfSEFP}{102.04}
\newcommand{\pfStdSEFP}{5.34}
\newcommand{\pfItersSEFP}{38.0}
\newcommand{\pfCapSEFP}{27.7}
\newcommand{\pfOffSEFP}{0.00}
\newcommand{\pfRateSEFP}{1.64}
\newcommand{\pfICT}{105.39}
\newcommand{\pfStdICT}{5.56}
\newcommand{\pfItersICT}{18.7}
\newcommand{\pfCapICT}{3.4}
\newcommand{\pfOffICT}{0.00}
\newcommand{\pfRateICT}{1.47}
\newcommand{\pfICTminusFP}{+7.03}
\newcommand{\pfICTminusFPlo}{6.78}
\newcommand{\pfICTminusFPhi}{7.27}
\newcommand{\pfICTminusFPwins}{60}

\newcommand{\pfICTminusSEFP}{+3.35}
\newcommand{\pfICTminusSEFPlo}{3.24}
\newcommand{\pfICTminusSEFPhi}{3.46}
\newcommand{\pfICTminusSEFPwins}{60}

\newcommand{\pfSEFPminusFP}{+3.68}
\newcommand{\pfSEFPminusFPlo}{3.51}
\newcommand{\pfSEFPminusFPhi}{3.85}
\newcommand{\pfSEFPminusFPwins}{60}

\newcommand{\pfGeoFP}{3.23}

\newcommand{\pfGeoSEFP}{3.37}

\newcommand{\pfGeoICT}{3.51}
\newcommand{\ewN}{10{,}000}
\newcommand{\ewFP}{555.6}
\newcommand{\ewBestFP}{1.1}
\newcommand{\ewSEFP}{569.7}
\newcommand{\ewBestSEFP}{6.8}
\newcommand{\ewICT}{570.3}
\newcommand{\ewBestICT}{98.2}
\newcommand{\ewICTvsFP}{2.79}
\newcommand{\ewICTvsFPwins}{98.7}
\newcommand{\ewICTvsFPties}{0.7}
\newcommand{\ewICTvsFPlosses}{0.7}
\newcommand{\ewICTvsSEFP}{0.12}
\newcommand{\ewICTvsSEFPwins}{62.3}
\newcommand{\ewICTvsSEFPties}{36.5}
\newcommand{\ewICTvsSEFPlosses}{1.2}

\newcommand{\nrpN}{10{,}000}
\newcommand{\nrpFP}{698.2}
\newcommand{\nrpBestFP}{0.3}
\newcommand{\nrpSEFP}{762.2}
\newcommand{\nrpBestSEFP}{15.0}
\newcommand{\nrpICT}{781.9}
\newcommand{\nrpBestICT}{91.0}
\newcommand{\nrpICTvsFP}{12.49}
\newcommand{\nrpICTvsFPwins}{98.9}
\newcommand{\nrpICTvsFPties}{0.0}
\newcommand{\nrpICTvsFPlosses}{1.1}
\newcommand{\nrpICTvsSEFP}{2.84}
\newcommand{\nrpICTvsSEFPwins}{71.5}
\newcommand{\nrpICTvsSEFPties}{19.8}
\newcommand{\nrpICTvsSEFPlosses}{8.7}

\newcommand{\convN}{300}
\newcommand{\convFPitNNN}{941}

\newcommand{\convFPlimit}{0.89}

\newcommand{\convSEFPitNNN}{20}

\newcommand{\convICTitNNN}{7}

\newcommand{\budN}{20}

\newcommand{\budFPgain}{+1.47}

\newcommand{\budSEFPgain}{-0.01}

\newcommand{\budICTgain}{+0.00}
\newcommand{\budICTfiftyMinusFPfiveh}{+5.66}
\newcommand{\budICTfiftyMinusFPfivehWins}{20}
\newcommand{\budICTfiftyMinusSEFPfiveh}{+3.43}

\newcommand{\initN}{20}

\newcommand{\initOffWGFP}{16.28}

\newcommand{\initFPdrop}{-42.93}
\newcommand{\initSEFPdrop}{-6.54}
\newcommand{\initICTdrop}{+0.35}
\newcommand{\snrN}{30}

\newcommand{\snrMinWins}{30}
\newcommand{\snrItFPlo}{20.2}
\newcommand{\snrItFPhi}{49.9}
\newcommand{\snrItSEFPlo}{7.7}
\newcommand{\snrItSEFPhi}{44.6}
\newcommand{\snrItICTlo}{4.7}
\newcommand{\snrItICThi}{24.7}

\newcommand{\snrLo}{$-$6}
\newcommand{\snrHi}{14}
\newcommand{\snrGmSEFPlo}{+5.14\%}
\newcommand{\snrGmSEFPhi}{+3.61\%}
\newcommand{\snrGmFPlo}{+12.61\%}
\newcommand{\snrGmFPhi}{+8.28\%}

\newcommand{\snrGmSEFPloB}{+5.57\%}
\newcommand{\snrGmSEFPhiB}{+4.01\%}
\newcommand{\snrGmFPloB}{+13.71\%}
\newcommand{\snrGmFPhiB}{+9.11\%}

\newcommand{\isdN}{200}
\newcommand{\isdFPitNearLo}{1367}
\newcommand{\isdFPitNearHi}{1800}
\newcommand{\isdFPitTwo}{616}
\newcommand{\isdFPitFive}{45}
\newcommand{\isdFPitTen}{5}
\newcommand{\isdICTitMax}{19}
\newcommand{\isdICTitMin}{2}

\newcommand{\isdWithinTen}{99.0}
\newcommand{\bfMfourN}{60}
\newcommand{\bfMfourT}{1500}
\newcommand{\bfMfourSINR}{31.6}
\newcommand{\bfMfourSINRp}{54.1}
\newcommand{\bfMfourEqFPW}{$1.3\cdot10^{-9}$}
\newcommand{\bfMfourEqXR}{$5.2\cdot10^{-11}$}
\newcommand{\bfMfourFPitNN}{777}
\newcommand{\bfMfourFPitNNN}{1390}
\newcommand{\bfMfourFPrelBest}{-11.21}
\newcommand{\bfMfourFPwithin}{5.0}
\newcommand{\bfMfourFPatFifty}{-19.04}
\newcommand{\bfMfourFPatThreeH}{-14.69}
\newcommand{\bfMfourFPrising}{96.7}
\newcommand{\bfMfourFPtime}{0.06}

\newcommand{\bfMfourXMitNN}{14}
\newcommand{\bfMfourXMitNNN}{21}
\newcommand{\bfMfourXMrelBest}{-0.74}
\newcommand{\bfMfourXMwithin}{85.0}
\newcommand{\bfMfourXMatFifty}{-0.97}
\newcommand{\bfMfourXMatThreeH}{-0.74}

\newcommand{\bfMfourXMtime}{0.07}

\newcommand{\bfMfourICTitNN}{12}
\newcommand{\bfMfourICTitNNN}{15}
\newcommand{\bfMfourICTrelBest}{-0.38}
\newcommand{\bfMfourICTwithin}{91.7}
\newcommand{\bfMfourICTatFifty}{-0.62}
\newcommand{\bfMfourICTatThreeH}{-0.38}

\newcommand{\bfMfourICTtime}{0.09}

\newcommand{\bfMfourICTvsFPmean}{+13.38}
\newcommand{\bfMfourICTvsFPwins}{93.3}
\newcommand{\bfMfourICTvsFPlosses}{3.3}

\newcommand{\bfMfourICTvsXMmean}{+0.43}
\newcommand{\bfMfourICTvsXMwins}{13.3}
\newcommand{\bfMfourICTvsXMlosses}{6.7}

\newcommand{\bfMfourICTatFiftyVsFPfinal}{+13.11}

\newcommand{\bfMeightN}{40}
\newcommand{\bfMeightT}{1200}
\newcommand{\bfMeightSINR}{48.8}
\newcommand{\bfMeightSINRp}{63.1}

\newcommand{\bfMeightFPitNN}{840}
\newcommand{\bfMeightFPitNNN}{1160}
\newcommand{\bfMeightFPrelBest}{-26.94}
\newcommand{\bfMeightFPwithin}{0.0}
\newcommand{\bfMeightFPatFifty}{-34.07}
\newcommand{\bfMeightFPatThreeH}{-29.99}
\newcommand{\bfMeightFPrising}{100.0}
\newcommand{\bfMeightFPtime}{0.13}

\newcommand{\bfMeightXMitNN}{14}
\newcommand{\bfMeightXMitNNN}{16}
\newcommand{\bfMeightXMrelBest}{-0.12}
\newcommand{\bfMeightXMwithin}{95.0}
\newcommand{\bfMeightXMatFifty}{-0.30}
\newcommand{\bfMeightXMatThreeH}{-0.12}

\newcommand{\bfMeightXMtime}{0.14}

\newcommand{\bfMeightICTitNN}{8}
\newcommand{\bfMeightICTitNNN}{9}
\newcommand{\bfMeightICTrelBest}{-0.00}
\newcommand{\bfMeightICTwithin}{100.0}
\newcommand{\bfMeightICTatFifty}{-0.00}
\newcommand{\bfMeightICTatThreeH}{-0.00}

\newcommand{\bfMeightICTtime}{0.18}

\newcommand{\bfMeightICTvsFPmean}{+38.20}
\newcommand{\bfMeightICTvsFPwins}{100.0}
\newcommand{\bfMeightICTvsFPlosses}{0.0}

\newcommand{\bfMeightICTvsXMmean}{+0.13}
\newcommand{\bfMeightICTvsXMwins}{5.0}
\newcommand{\bfMeightICTvsXMlosses}{0.0}

\newcommand{\bfMeightICTatFiftyVsFPfinal}{+38.20}

\newcommand{\bfMfourWN}{40}

\newcommand{\bfMfourWFPitNNN}{1415}

\newcommand{\bfMfourWXMitNNN}{16}

\newcommand{\bfMfourWICTitNNN}{12}

\newcommand{\bfMfourWICTvsFPmean}{+18.72}

\newcommand{\bfMfourWICTvsXMmean}{+0.27}

\newcommand{\slqpKAfiftysixSINR}{-17.9}

\newcommand{\slqpKAfiftysixQFTitNNN}{6}

\newcommand{\slqpKAfiftysixQFTwithin}{100.0}

\newcommand{\slqpKAfiftysixLFTrelBest}{-64.70}

\newcommand{\slqpKAfiftysixFPitNNN}{6}

\newcommand{\slqpKAfiftysixICTitNNN}{4}

\newcommand{\slqpKAfiftysixICTwithin}{100.0}

\newcommand{\slqpKAfiftysixICTtotal}{0.19}
\newcommand{\slqpKAfiftysixICTNewton}{29.9}

\newcommand{\slqpKAfiftysixICTvsQFTtimeRatio}{1.8}

\newcommand{\slqpKAseventyQFTitNNN}{5}

\newcommand{\slqpKAseventyQFTwithin}{100.0}

\newcommand{\slqpKAseventyLFTrelBest}{-83.92}

\newcommand{\slqpKAseventyICTitNNN}{5}

\newcommand{\slqpKAseventyICTwithin}{95.0}

\newcommand{\slqpKAseventyICTvsQFTtimeRatio}{2.1}

\newcommand{\slqpSYSINR}{-10.2}

\newcommand{\slqpSYFPitNNN}{12}

\newcommand{\slqpSYICTitNNN}{8}

\newcommand{\slqpSYN}{100}
\newcommand{\slqpSYICTvsQFTmean}{+1.41}
\newcommand{\slqpSYICTvsQFTwins}{13.0}
\newcommand{\slqpSYICTvsQFTlosses}{6.0}

\newcommand{\slqpSYICTvsSEFPwins}{8.0}
\newcommand{\slqpSYICTvsSEFPlosses}{3.0}

\newcommand{\slqpTableRows}{\multirow{5}{*}{$K=56$, $K_q=14$} & QFT \cite{Khan24} & Clarabel & 4 & 6 & -0.00\% & 100.0\% & 34 & 0.38\\
 & LFT \cite{Khan24} & Clarabel & 195 & 200 & -64.70\% & 0.0\% & 42 & 8.39\\
 & LDT+QT, dual & Newton & 6 & 6 & -0.00\% & 100.0\% & 18 & 0.23\\
 & RIT+QT, dual & Newton & 4 & 4 & -0.00\% & 100.0\% & 17 & 0.17\\
 & \textbf{ICT, dual (Alg.~\ref{alg:slqp})} & Newton & 4 & 4 & -0.00\% & 100.0\% & 19 & 0.19\\
\midrule
\multirow{5}{*}{$K=70$, $K_q=7$} & QFT \cite{Khan24} & Clarabel & 4 & 5 & -0.00\% & 100.0\% & 51 & 0.46\\
 & LFT \cite{Khan24} & Clarabel & 172 & 198 & -83.92\% & 0.0\% & 134 & 24.62\\
 & LDT+QT, dual & Newton & 6 & 6 & -0.00\% & 100.0\% & 21 & 0.32\\
 & RIT+QT, dual & Newton & 4 & 5 & -0.00\% & 100.0\% & 21 & 0.18\\
 & \textbf{ICT, dual (Alg.~\ref{alg:slqp})} & Newton & 4 & 5 & -0.01\% & 95.0\% & 24 & 0.22\\
\midrule
\multirow{5}{*}{$K=28$, $K_q=7$ (interf.-lim.)} & QFT \cite{Khan24} & Clarabel & 6 & 8 & -2.59\% & 79.0\% & 21 & 0.35\\
 & LFT \cite{Khan24} & Clarabel & 198 & 200 & -61.98\% & 1.0\% & 18 & 3.59\\
 & LDT+QT, dual & Newton & 10 & 12 & -2.32\% & 80.0\% & 13 & 0.26\\
 & RIT+QT, dual & Newton & 6 & 7 & -2.45\% & 82.0\% & 13 & 0.19\\
 & \textbf{ICT, dual (Alg.~\ref{alg:slqp})} & Newton & 6 & 8 & -2.06\% & 85.0\% & 15 & 0.23\\}
\newcommand{\slqpN}{20}
\newcommand{\slqpAscentN}{140}
\newcommand{\slqpAscentSEFP}{0.10}
\newcommand{\slqpAscentFP}{8.25}

\newcommand{\scalingTableRows}{56 & 14 & 25 & 21 & 20 & 5854\\
70 & 7 & 26 & 25 & 45$^\dagger$ & 436\\
140 & 14 & 60 & 57 & 744 & 2160\\
280 & 28 & 196 & 228 & 2866$^\dagger$ & --\\}
\newcommand{\scalingRatioOneForty}{13}
\newcommand{\scalingRatioTwoEighty}{13}
\newcommand{\scalingScsRatioFiftySix}{283}
\newcommand{\scalingScsRatioOneForty}{38}

\newtheorem{theorem}{Theorem}
\newtheorem{proposition}{Proposition}
\newtheorem{corollary}{Corollary}

\newtheorem{remark}{Remark}

\newcommand{\R}{\mathbb{R}}
\newcommand{\C}{\mathbb{C}}
\newcommand{\Om}{\Omega}
\newcommand{\rhof}{\varrho}
\newcommand{\LQ}{\mathrm{LQ}}
\newcommand{\RQ}{\mathrm{RQ}}
\newcommand{\ICT}{\mathrm{ICT}}
\newcommand{\tr}{\operatorname{tr}}
\newcommand{\diag}{\operatorname{diag}}
\newcommand{\Real}{\operatorname{Re}}

\newcommand{\mbps}{\,\mathrm{Mbps}}
\DeclareMathOperator*{\argmax}{arg\,max}

\begin{document}

\title{The Interference-Conjugate Transform: The Tightest Interference-Linear Surrogate for Power Control, Scheduling and Beamforming in Wireless Networks}

\author{Ahmad~Khan%
\thanks{Manuscript draft, October 2026. The author is with Ericsson Ottawa, 349 Terry Fox Drive, Kanata, ON K2K 2V6, Canada.}}

\markboth{Draft}{Khan: The Interference-Conjugate Transform}

\makeatletter
\def\@IEEENORMtitlevspace{3.0\baselineskip}\def\@IEEEMINtitlevspace{1\baselineskip}
\makeatother
\maketitle

\begin{abstract}
Closed-form fractional programming (FP), the weighted minimum mean-squared error (WMMSE) algorithm, and the two surrogate-enhanced methods proposed in 2026---the surrogate-enhanced FP (SEFP) of Jiao \emph{et al.} and the XMMSE algorithm of Yang and Zhao---all maximize the weighted sum rate (WSR) by iteratively maximizing a surrogate function that is \emph{linear in the interference power} of each link. In this paper, we show that this family admits a unique best member. We propose the interference-conjugate transform (ICT), which replaces each rate term $\log(1+a/b)$ by $\max_{\kappa\ge0}[\Omega(\kappa a)-\kappa b]$, where $\Omega$ is the closed-form partial convex conjugate of the rate in the interference variable and the optimal multiplier is the classical interference price $a/(b(a+b))$. We prove that, for any fixed price, $\Omega(\kappa\,\cdot)$ is the pointwise largest own-signal function compatible with a valid interference-linear lower bound, and that tightness at an anchor point forces the price. Consequently, all four benchmark surrogates, which we show share this price and differ only in their own-signal term, are dominated pointwise by the ICT surrogate; as a by-product, we prove that the single-stream XMMSE surrogate coincides exactly with that of SEFP. At a signal-to-interference-plus-noise ratio $\gamma$, the own-signal curvature of the benchmark surrogates exceeds the true curvature by factors that behave as $\gamma/2$ (FP/WMMSE) and $(\log\gamma)/2$ (SEFP/XMMSE), against $1+1/(\gamma(\gamma+2))$ for ICT, and this ordering carries over to the certified ascent per iteration and to the asymptotic linear convergence rate. The ICT yields closed-form updates of the same per-iteration cost as the benchmarks for single-antenna power control, uplink joint scheduling and power control, and multi-cell MISO beamforming, and it yields the first solver-free algorithm for sum-least-$q$th-percentile rate maximization, whose subproblem is solved through a $K$-dimensional convex dual with closed-form gradient and Hessian and a certified duality gap. In the uplink scheduling setting of Jiao \emph{et al.}, ICT improves the proportional-fairness utility over FP by \pfICTminusFP{} and over SEFP by \pfICTminusSEFP{} (\pfICTminusSEFPwins{}/\pfNdrops{} drops) with \pfItersICT{} iterations per slot instead of \pfItersFP{} and \pfItersSEFP{}; in MISO beamforming with $M=8$ antennas, it reaches $99.9\%$ of its final value in a median of \bfMeightICTitNNN{} iterations against \bfMeightXMitNNN{} for XMMSE and \bfMeightFPitNNN{} for FP/WMMSE.
\end{abstract}

\begin{IEEEkeywords}
Fractional programming, minorization--maximization, weighted sum rate, interference pricing, WMMSE, scheduling, beamforming, percentile optimization, convex conjugate.
\end{IEEEkeywords}

\section{Introduction}
\IEEEPARstart{W}{eighted} sum-rate (WSR) maximization under the treatment of interference as noise is the mainstay formulation of radio resource management in interference-limited wireless networks. Power control, user scheduling and beamforming are all instances of
\begin{equation}
\max_{\mathbf x\in\mathcal X}\; \sum_{i} w_i \log\Big(1+\frac{a_i(\mathbf x)}{b_i(\mathbf x)}\Big),
\label{eq:wsr}
\end{equation}
in which $a_i$ denotes the useful signal power of link $i$ and $b_i$ its interference-plus-noise power. Problem \eqref{eq:wsr} is non-convex and NP-hard in general \cite{LuoZhang08}, and the practical state-of-the-art consists of iterative algorithms that are guaranteed to converge to stationary points. Three ingredients have dominated this class of problems for more than a decade: the WMMSE algorithm \cite{Christensen08,Shi11}, interference pricing \cite{Huang06,ShiBerryHonig09}, and the fractional programming (FP) framework of Shen and Yu \cite{ShenYuI,ShenYuII}, whose closed-form variant composes a Lagrangian dual transform (LDT) with a quadratic transform (QT). It was later recognized \cite{ShenMFP19,Zhang23,Shen24} that all of these are instances of minorization--maximization (MM) \cite{Sun17}: each iteration replaces \eqref{eq:wsr} by a surrogate function that lies below the objective, touches it at the current point, and is easy to maximize; furthermore, closed-form FP and WMMSE generate identical iterates.

Two recent papers improve the quality of the surrogate rather than the algorithmic wrapper around it. Jiao \emph{et al.} \cite{Jiao26} observe that the LDT is a tangent bound in the reciprocal coordinate $z=1/(1+r)$ of the SINR $r$ which makes it conservative, and propose the reciprocal-inversion transform (RIT), a tangent bound of the concave function $1/\log(1+1/\tau)$ in the coordinate $\tau=1/r$; combined with the QT, it yields the surrogate-enhanced FP (SEFP) algorithm, which retains the per-cell separability and closed-form updates of FP and consistently outperforms it in uplink joint scheduling and power control. Independently, Yang and Zhao \cite{YangZhao26} start from the exact integral relation $\log\det(\mathbf I+\mathbf X)=\int_0^1\tr[(\mathbf I+\varepsilon\mathbf X)^{-1}\mathbf X]\,d\varepsilon$ between the rate and a family of regularized mean-squared errors, and derive the XMMSE algorithm, whose quadratic surrogate is provably tighter than that of WMMSE and converges faster, in particular at high SNR.

Both papers improve one specific bound and compare it with one specific benchmark. In this paper, we address the obvious question that these results invite: \emph{among all surrogates with the structure that makes the updates separable and closed form, which one is the best, and how far from it are FP, WMMSE, SEFP and XMMSE?} The structure in question is \emph{linearity in the interference power}. Interference is the only quantity that couples the links; hence, if a surrogate charges a fixed price per unit of interference, the maximization decouples into per-link subproblems, which is precisely why closed-form FP, SEFP, XMMSE and the pricing algorithms are computationally cheap. We show that this family contains a unique tightest surrogate, that it admits a closed form, that the four benchmarks are strictly dominated members of the family, and that the algorithms it yields have the same per-iteration cost as the benchmarks.

\subsection{Contributions}
The contributions of this paper can be summarized as follows.
\begin{itemize}
\item \textbf{The interference-conjugate transform (ICT).} For $a\ge0$ and $b>0$, we establish the identity
\begin{equation*}
\log\Big(1+\frac ab\Big)=\max_{\kappa\ge0}\big[\Om(\kappa a)-\kappa b\big],\qquad \kappa^\star=\frac{a}{b(a+b)},
\end{equation*}
with $\Om(u)=\log(1+\rhof(u))+\rhof(u)/(1+\rhof(u))$ and $\rhof(u)=\tfrac12(u+\sqrt{u^2+4u})$. $\Om$ is the partial convex conjugate of the rate in the interference variable alone, and $\kappa^\star$ is the classical interference price \cite{Huang06}. For fixed $\kappa$ the transformed objective is concave and separable in the powers, and the transformed problem is block-concave, so that alternating maximization is a block-coordinate ascent whose fixed points are the KKT points of \eqref{eq:wsr}.
\item \textbf{Tightness (Theorem~\ref{thm:tightest}).} For any fixed price $\kappa$, every function $\phi$ such that $\phi(a)-\kappa b\le\log(1+a/b)$ for all $(a,b)$ satisfies $\phi\le\Om(\kappa\,\cdot)$, and a surrogate of this form that is tight at an anchor $(a_0,b_0)$ must use $\kappa=\kappa^\star(a_0,b_0)$. Hence, ICT is the pointwise largest interference-linear minorizer, and the interference price is a consequence of tightness rather than a design choice.
\item \textbf{One family, one price, four dominated members (Propositions~\ref{prop:family} and \ref{prop:xmmse}).} We show that LDT+QT (closed-form FP), WMMSE, RIT+QT (SEFP) and single-stream XMMSE are interference-linear surrogates with the common price $w_i\kappa_i^\star$, whose own-signal functions in the normalized variable $\nu=a/a_0$ are
\begin{align*}
\phi_{\LQ}(\nu)&=\log(1+\gamma)-\gamma+2\gamma\sqrt\nu-\tfrac{\gamma^2}{1+\gamma}\nu,\\
\phi_{\RQ}(\nu)&=2\log(1+\gamma)\sqrt\nu-\tfrac{b(\gamma)}{1+\gamma}\nu,\\
\phi_{\ICT}(\nu)&=\Om\big(\tfrac{\gamma^2}{1+\gamma}\nu\big),
\end{align*}
with $b(\gamma)=(1+\gamma)\log(1+\gamma)-\gamma$, and that $\phi_{\LQ}\le\phi_{\RQ}\le\phi_{\ICT}\le\log(1+\gamma\nu)+\gamma/(1+\gamma)$ with equality only at $\nu=1$. Crucially, we prove that the single-stream XMMSE surrogate equals $\phi_{\RQ}$ exactly, including its complex phase structure for beamforming; hence XMMSE and SEFP are the same method for single-stream users.
\item \textbf{Quantitative consequences (Propositions~\ref{prop:curv}--\ref{prop:rate}).} The own-signal curvature at the anchor exceeds the true curvature by the factors $(1+\gamma)^2/(2\gamma)$, $(1+\gamma)^2\log(1+\gamma)/(2\gamma^2)$ and $1+1/(\gamma(\gamma+2))$, respectively, and the certified ascent of one MM step and the asymptotic linear rate at a non-degenerate stationary point are ordered accordingly.
\item \textbf{Algorithms.} We derive closed-form ICT updates for single-antenna power control, for uplink joint scheduling and power control (where the candidate score becomes $w_k\log(1+\rho_k)$, the exact best response to the current prices, in place of the bounded LDT score of \cite{ShenYuII}), and for multi-cell MISO beamforming, where a rank-one price and a Cauchy--Schwarz argument reduce the per-base-station problem to an eigendecomposition and a scalar bisection. All of these have the per-iteration complexity of their FP counterparts.
\item \textbf{A solver-free algorithm for percentile optimization (Section~\ref{sec:slqp}).} For the sum-least-$q$th-percentile (SLqP) rate problem of Khan and Adve \cite{Khan24}, whose QFT and LFT subproblems require a conic solver and for which closed-form algorithms were stated to be unknown, we show that the ICT subproblem admits a $K$-dimensional convex dual with closed-form gradient and Hessian and a certified duality gap. To the best of our knowledge, this is the first solver-free algorithm for this class of problems; it reaches the same values as QFT in fewer outer iterations, and its certified subproblem solve is an order of magnitude faster than an interior-point conic solver from about 140 links upward.
\item \textbf{Simulations.} In the settings of \cite{ShenYuI,ShenYuII,Jiao26,Khan24}, ICT reaches equal or better stationary points in one to two orders of magnitude fewer iterations than FP and in fewer iterations than SEFP/XMMSE, and under proportional-fairness scheduling its utility gain over SEFP is as large as the gain of SEFP over FP.
\end{itemize}

\subsection{Related Work, Organization and Notation}
The MM interpretation of FP and WMMSE is due to \cite{ShenMFP19,Zhang23}, and \cite{Shen24} accelerates QT/WMMSE by extrapolation, which is compatible with any choice of surrogate. Interference pricing \cite{Huang06,ShiBerryHonig09} uses the price $\kappa^\star$ with a best-response own-signal term, which is not a valid global minorizer, so that monotonicity requires additional conditions; ICT is the tightest \emph{valid} choice. SCALE \cite{Papandriopoulos09} uses the bound $\log(1+z)\ge\alpha\log z+\beta$, which is linear neither in the interference nor in the signal. The QFT and LFT of Khan and Adve \cite{Khan24} are logarithms of affine functions of the interference and require a conic solver per iteration; they lie outside the family studied here and are not pointwise comparable with ICT, but Section~\ref{sec:slqp} compares with them on their own SLqP problem. We confine ourselves to single-stream users; the multi-stream $\log\det$ case is discussed in Section~\ref{sec:concl}.

Section~\ref{sec:prelim} formulates the problem and places the four benchmark surrogates in a common notation; Section~\ref{sec:ict} introduces the ICT, proves its tightness and derives its consequences; Sections~\ref{sec:siso}, \ref{sec:miso} and \ref{sec:slqp} apply it to SISO power control and scheduling, MISO beamforming and percentile optimization, each with a proposed approach followed by simulation results; Section~\ref{sec:concl} concludes. Throughout, $\log$ denotes the natural logarithm; rates in figures are in Mbps over the system bandwidth; $\gamma=a_0/b_0$ denotes the SINR at the anchor point; and for a Hermitian matrix $\mathbf A$, $\mathbf A\succeq0$ means positive semidefinite.

\section{Preliminaries}
\label{sec:prelim}
\subsection{Problem Formulation}
We consider \eqref{eq:wsr} with $a_i,b_i$ determined by the decision variables, in three instances. \emph{SISO power control:} $K$ links, $a_i=g_{ii}p_i$, $b_i=\sum_{j\ne i}g_{ij}p_j+\sigma^2$, $p_{\min}\le p_i\le P_{\max}$. \emph{Uplink joint scheduling and power control} \cite{ShenYuII,Jiao26}: cells $i\in\mathcal B$, each scheduling one user $s_i\in\mathcal K_i\cup\{\varnothing\}$ at power $p_{s_i}$; $a_i=g_{i,s_i}p_{s_i}$, $b_i=\sum_{j\ne i}g_{i,s_j}p_{s_j}+\sigma^2$, with $g_{i,k}$ the uplink gain of user $k$ at base station (BS) $i$ and weight $w_{s_i}$. \emph{Multi-cell MISO beamforming} \cite{ShenYuI}: BS $j$ with $M$ antennas serves user $j$ with $\mathbf v_j\in\C^M$, $\|\mathbf v_j\|^2\le P$; $a_i=|\mathbf h_{ii}^H\mathbf v_i|^2$, $b_i=\sum_{j\ne i}|\mathbf h_{ij}^H\mathbf v_j|^2+\sigma^2$, with $\mathbf h_{ij}$ the channel from BS $j$ to user $i$.

An MM algorithm \cite{Sun17} for \eqref{eq:wsr} chooses, at the current point $\bar{\mathbf x}$, a surrogate $S(\mathbf x;\bar{\mathbf x})\le f(\mathbf x)$ with $S(\bar{\mathbf x};\bar{\mathbf x})=f(\bar{\mathbf x})$ and moves to a maximizer of $S$. The objective is then non-decreasing, and the limit points are stationary under mild conditions. For WSR problems every surrogate of interest is built term by term from a scalar minorizer of $\log(1+a/b)$ at the anchor $(a_0,b_0)$, which we refer to as the \emph{single-ratio surrogate}.

\subsection{The Four Benchmark Surrogates}
\label{sec:benchmarks}
\emph{LDT+QT (closed-form FP).} The LDT \cite{ShenYuI} is the tangent of $-\log z$ at $z_0=1/(1+\gamma)$ in the coordinate $z=1/(1+r)$, $r=a/b$:
\begin{equation}
\log(1+r)\ \ge\ \ell_{\mathrm{LDT}}(r;\gamma)\triangleq\log(1+\gamma)-\gamma+\frac{(1+\gamma)r}{1+r}.
\label{eq:ldt}
\end{equation}
The ratio $r/(1+r)=a/(a+b)$ is then decoupled by the QT \cite{ShenYuI}, $A/B=\max_y[2y\sqrt A-y^2B]$ with $y^\star=\sqrt A/B$. With $y$ fixed at its anchor value, the resulting single-ratio surrogate is
\begin{equation}
\ell^{\LQ}(a,b)=\log(1+\gamma)-\gamma+\frac{2(1+\gamma)\sqrt{a_0a}}{a_0+b_0}-\frac{(1+\gamma)a_0}{(a_0+b_0)^2}(a+b).
\label{eq:lq}
\end{equation}

\emph{WMMSE.} With the MMSE receiver $u_0=\sqrt{a_0}/(a_0+b_0)$, the MSE $e(a,b)=1-2u_0\sqrt a+u_0^2(a+b)$ and the weight $\tilde w=1/e(a_0,b_0)=1+\gamma$, the WMMSE surrogate \cite{Shi11,Zhang23} is $\log\tilde w-\tilde w\,e(a,b)+1$, which equals \eqref{eq:lq} identically (this is the scalar form of the equivalence of closed-form FP and WMMSE established in \cite{Zhang23}).

\emph{RIT+QT (SEFP).} Jiao \emph{et al.} \cite{Jiao26} define $b(\gamma)=(1+\gamma)\log(1+\gamma)-\gamma$ and $c(\gamma)=(1+\gamma)\log^2(1+\gamma)$ and prove that
\begin{equation}
\log(1+r)\ \ge\ \ell_{\mathrm{RIT}}(r;\gamma)\triangleq\frac{c(\gamma)\,r}{\gamma^2+b(\gamma)\,r},
\label{eq:rit}
\end{equation}
a tangent bound of the concave function $\tau\mapsto1/\log(1+1/\tau)$ at $\tau=1/\gamma$. Applying the QT to the ratio in \eqref{eq:rit} with $y$ fixed at its anchor value gives
\begin{equation}
\ell^{\RQ}(a,b)=\frac{2\log(1+\gamma)}{\sqrt{a_0}}\sqrt a-\frac{b(\gamma)}{(1+\gamma)a_0}a-\frac{\gamma}{(1+\gamma)b_0}\,b,
\label{eq:rq}
\end{equation}
where we have used $y^\star=\sqrt{c(\gamma)a_0}/(\gamma^2b_0+b(\gamma)a_0)=1/\sqrt{(1+\gamma)a_0}$.

\emph{XMMSE.} For a single stream, the sum-regularized-MSE surrogate of \cite{YangZhao26} is a quadratic in the complex scalar $z=\mathbf h^H\mathbf v$ (Section~\ref{sec:miso}); its scalar form at the anchor $x_0=\sqrt{a_0}$, $b=b_0$, is the quadratic in the amplitude $x=\sqrt a$ given in \cite[Sec.~IV]{YangZhao26},
\begin{multline}
g_X(x)=\Big[\frac1{1+\gamma}-\frac{\log(1+\gamma)}{\gamma}\Big](x-x_0)^2\\
+\frac{2x_0}{1+\gamma}(x-x_0)+\log(1+\gamma),
\label{eq:xmmse-scalar}
\end{multline}
to be compared with the WMMSE quadratic $g_W$, which has $1/(1+\gamma)-1$ in place of the bracket.

It is critical to recognize that all four surrogates have the form $\phi(a)-\lambda b$: they are \emph{affine in the interference}, with a price $\lambda$ that does not depend on $a$. This is the structural property that makes the per-link updates separable and closed form, and it is the family over which we optimize in the sequel.

\section{The Interference-Conjugate Transform}
\label{sec:ict}
\subsection{The Transform}
\begin{theorem}[Interference-conjugate transform]
\label{thm:ict}
Let $\rhof(u)=\tfrac12\big(u+\sqrt{u^2+4u}\big)$ for $u\ge0$, the unique non-negative root of $\rhof^2/(1+\rhof)=u$, and define
\begin{equation}
\Om(u)=\log\big(1+\rhof(u)\big)+\frac{\rhof(u)}{1+\rhof(u)}.
\label{eq:omega}
\end{equation}
For every $a\ge0$ and $b>0$,
\begin{equation}
\log\Big(1+\frac ab\Big)=\max_{\kappa\ge0}\big[\Om(\kappa a)-\kappa b\big],\qquad \kappa^\star=\frac{a}{b(a+b)},
\label{eq:ict}
\end{equation}
and the maximizer is unique when $a>0$. Consequently, $\Om(\kappa a)-\kappa b\le\log(1+a/b)$ for every $\kappa\ge0$.
\end{theorem}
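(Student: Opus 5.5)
The plan is to treat $h(\kappa)\triangleq\Om(\kappa a)-\kappa b$ as a one-dimensional concave maximization in $\kappa$. I will compute its derivative in closed form through the parametrization by $\rhof$, locate the unique stationary point, and evaluate $h$ there.

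\emph{Step 1: the root $\rhof$.} First I will verify that $\rhof(u)=\tfrac12(u+\sqrt{u^2+4u})$ is the unique non-negative root of $\rhof^2=u(1+\rhof)$. This is the positive root of the quadratic $\rhof^2-u\rhof-u=0$; the other root is $\le0$. The map $\rhof\mapsto\rhof^2/(1+\rhof)$ is a strictly increasing bijection of $[0,\infty)$ onto itself. Hence $\rhof(u)$ is continuous and strictly increasing on $[0,\infty)$, smooth for $u>0$, with $\rhof(0)=0$ and $\rhof(u)\to\infty$ as $u\to\infty$.

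\emph{Step 2: the derivative of $\Om$.} For $u>0$ I will differentiate $\Om$ through $\rhof$:
\[
\frac{d\Om}{d\rhof}=\frac1{1+\rhof}+\frac1{(1+\rhof)^2}=\frac{2+\rhof}{(1+\rhof)^2},\qquad \frac{du}{d\rhof}=\frac{\rhof(2+\rhof)}{(1+\rhof)^2}.
\]
This gives the key identity $\Om'(u)=1/\rhof(u)$. Consequently, for $a>0$ and $\kappa>0$,
\[
h'(\kappa)=\frac{a}{\rhof(\kappa a)}-b .
\]
By Step 1 this is strictly decreasing, so $h$ is strictly concave on $(0,\infty)$. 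Moreover $h'(\kappa)\to+\infty$ as $\kappa\downarrow0$, since $\rhof\to0$, and $h'(\kappa)\to-b<0$ as $\kappa\to\infty$. Since $h$ is continuous at $\kappa=0$, the maximum over $[0,\infty)$ is attained at the unique root of $h'$, and $\kappa=0$ is not a maximizer.

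\emph{Step 3: the maximizer and the value.} Setting $h'=0$ gives $\rhof(\kappa a)=a/b\triangleq r$. By Step 1 this means $\kappa a=r^2/(1+r)$, that is,
\[
\kappa^\star=\frac{r^2}{a(1+r)}=\frac{a}{b(a+b)} .
\]
Substituting,
\[
\Om(\kappa^\star a)=\log(1+r)+\frac{r}{1+r},\qquad \kappa^\star b=\frac{r^2}{1+r}\cdot\frac ba=\frac{r}{1+r},
\]
so $h(\kappa^\star)=\log(1+a/b)$. Uniqueness of the maximizer follows from strict concavity.

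\emph{Step 4: the case $a=0$ and the inequality.} If $a=0$, then $h(\kappa)=\Om(0)-\kappa b=-\kappa b$, because $\rhof(0)=0$ gives $\Om(0)=0$. The maximum is $0=\log 1$, attained at $\kappa=0=\kappa^\star$. The inequality $\Om(\kappa a)-\kappa b\le\log(1+a/b)$ for every $\kappa\ge0$ is then immediate from the max representation.

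The only delicate point is the endpoint $u=0$, where $\sqrt{u^2+4u}$ is not differentiable and $\Om'(u)=1/\rhof(u)$ blows up. I will handle it by doing all the calculus on $\kappa>0$ and using only continuity of $h$ at $0$ together with the sign of $h'$ near $0^+$, rather than any derivative at the endpoint.
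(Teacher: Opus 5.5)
Your proof is correct, but it goes in the opposite direction from the paper's. The paper never works with $\Omega'$ in this proof. It fixes $a>0$ and $\kappa>0$ and minimizes the strictly convex function $b\mapsto\log(1+a/b)+\kappa b$. The stationarity condition $\kappa=a/(b(a+b))$ becomes $\kappa a=\rho^2/(1+\rho)$ with $\rho=a/b$, so the minimum value is exactly $\Omega(\kappa a)$. This gives the conjugate representation $\Omega(\kappa a)=\inf_{b>0}[\log(1+a/b)+\kappa b]$. The minorization is then immediate (weak duality). The identity follows because, for the given $b$, the infimum is attained exactly when $\kappa=\kappa^\star$. Uniqueness is referred forward to the strict concavity in Proposition~\ref{prop:omega}(a).

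You instead take the closed form of $\Omega$ as given and maximize $h(\kappa)=\Omega(\kappa a)-\kappa b$ directly, using the parametric computation $\Omega'(u)=1/\varrho(u)$. This is elementary and self-contained. It gives strict concavity in $\kappa$ on the spot, so uniqueness needs no forward reference. Your handling of the endpoint $\kappa=0$, using continuity of $h$ and $h'>0$ near $0^+$, is sound.

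What the paper's route buys is the infimum representation itself. It shows where $\Omega$ comes from, and it is reused verbatim in three places:
\begin{itemize}
\item Theorem~\ref{thm:tightest}(i), the tightness result;
\item Proposition~\ref{prop:omega}(a), where $\Omega'=1/\varrho$ comes from the envelope theorem;
\item Proposition~\ref{prop:omega}(c), where joint concavity follows because $\Omega$ is an infimum of concave functions.
\end{itemize}
Your argument only yields $\Omega(\kappa a)\le\inf_{b>0}[\log(1+a/b)+\kappa b]$. To recover equality, add one observation: for fixed $a>0$, the map $b\mapsto a/(b(a+b))$ is a strictly decreasing bijection of $(0,\infty)$ onto itself. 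So every $\kappa>0$ equals $\kappa^\star(a,b)$ for some $b$, and at that $b$ your identity is attained.
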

\begin{IEEEproof}
Fix $a>0$ and $\kappa>0$, and consider $g(b)=\log(1+a/b)+\kappa b$ on $b>0$. Since $\log(1+a/b)=\log(a+b)-\log b$ has second derivative $1/b^2-1/(a+b)^2>0$, $g$ is strictly convex; it tends to $+\infty$ as $b\to0^+$ and as $b\to\infty$, so it has a unique minimizer $b^\star$, characterized by $g'(b^\star)=0$, i.e., $\kappa=a/(b^\star(a+b^\star))$. Writing $\rho\triangleq a/b^\star$, this condition reads $\kappa a=\rho^2/(1+\rho)$, hence $\rho=\rhof(\kappa a)$, and the minimum value is $\log(1+\rho)+\kappa b^\star=\log(1+\rho)+\rho/(1+\rho)=\Om(\kappa a)$. Hence $\Om(\kappa a)\le\log(1+a/b)+\kappa b$ for all $b>0$, with equality if and only if $b=b^\star$, i.e., if and only if $\kappa=a/(b(a+b))=\kappa^\star$. This proves the minorization and the identity \eqref{eq:ict} for $a>0$ (for $\kappa=0$ the inequality is trivial, and for $a=0$ both sides of \eqref{eq:ict} vanish at $\kappa=0$). Uniqueness of the maximizer follows from the strict concavity of $\kappa\mapsto\Om(\kappa a)$ established in Proposition~\ref{prop:omega}(a) below.
\end{IEEEproof}

The identity \eqref{eq:ict} is the Fenchel--Legendre duality of the rate in the interference variable alone: $b\mapsto\log(1+a/b)$ is convex, and $\Om(\kappa a)=\inf_b[\log(1+a/b)+\kappa b]$ is the intercept of its tightest affine minorant of slope $-\kappa$. We emphasize that the own-signal variable $a$ is never approximated.

\begin{proposition}[Properties of $\Om$]
\label{prop:omega}
(a) $\Om$ is strictly increasing and strictly concave on $[0,\infty)$, $\Om(0)=0$, $\Om'(u)=1/\rhof(u)$ and $\Om''(u)=-(1+\rhof)^2/(\rhof^3(\rhof+2))$.
(b) $\Om(u)=2\sqrt u-\tfrac u2+\tfrac{u^{3/2}}{12}+O(u^2)$ as $u\to0$, and $\Om(u)=\log u+1+\tfrac1u+O(u^{-2})$ as $u\to\infty$.
(c) $(a,\kappa)\mapsto\Om(\kappa a)$ is jointly concave on $\R_+^2$.
(d) At the maximizer of \eqref{eq:ict}, $\rhof(\kappa^\star a)=a/b$, i.e., $\rhof$ returns the SINR.
\end{proposition}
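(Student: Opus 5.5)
The plan is to work throughout in the variable $\rho=\rhof(u)$, which is a smooth bijection of $[0,\infty)$ onto itself. It satisfies $u=\rho^2/(1+\rho)$, so
\[
\frac{du}{d\rho}=\frac{\rho(\rho+2)}{(1+\rho)^2}>0 \quad\text{for } \rho>0 .
\]
For (a), we have $\Om(0)=\log 1+0=0$ since $\rhof(0)=0$. Next we differentiate $\Om=\log(1+\rho)+\rho/(1+\rho)$ in $\rho$:
\[
\frac{d\Om}{d\rho}=\frac1{1+\rho}+\frac1{(1+\rho)^2}=\frac{\rho+2}{(1+\rho)^2}.
\]
Dividing by $du/d\rho$ gives $\Om'(u)=1/\rho$. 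Alternatively, this follows by the envelope theorem from $\Om(\kappa a)=\min_b[\log(1+a/b)+\kappa b]$ in the proof of Theorem~\ref{thm:ict}. Then $\Om''(u)=-\rho^{-2}\,d\rho/du=-(1+\rho)^2/(\rho^3(\rho+2))$. This quantity is negative for $u>0$, which gives strict monotonicity and strict concavity on $[0,\infty)$; continuity of $\Om$ at $0$ handles the endpoint.

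For (b), the small-$u$ regime is the main bookkeeping obstacle. I would invert $u=\rho^2/(1+\rho)$ as a series in $s=\sqrt u$. From $\rho^2=u(1+\rho)$ one gets
\[
\rho=s+\tfrac{s^2}{2}+\tfrac{s^3}{8}+O(s^4).
\]
A cleaner route is to integrate $\Om'(u)=1/\rho$, using
\[
\frac1\rho=\frac1s\Big(1-\frac s2+\frac{s^2}{8}+O(s^3)\Big)=u^{-1/2}-\tfrac12+\tfrac{u^{1/2}}{8}+O(u).
\]
Integrating from $0$ with $\Om(0)=0$ gives $2\sqrt u-u/2+u^{3/2}/12+O(u^2)$. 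The large-$u$ regime goes the same way. There $\rho=u+1-1/u+O(u^{-2})$, obtained by expanding $\tfrac12(u+u\sqrt{1+4/u})$. Substituting this into $\log(1+\rho)+1-1/(1+\rho)$ gives $\log u+1+1/u+O(u^{-2})$; I would carry the $\log(1+\rho)=\log u+\log(1+2/u-\dots)$ term to second order to confirm the $1/u$ coefficient.

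For (c), I would use the conjugate representation rather than the Hessian. By Theorem~\ref{thm:ict}, $\Om(\kappa a)=\inf_{b>0}[\log(1+a/b)+\kappa b]$. Substituting $b=a/t$ for $a>0$ gives
\[
\Om(\kappa a)=\inf_{t>0}\big[\log(1+t)+\kappa a/t\big].
\]
This form is not obviously jointly concave. So instead I would check the Hessian of $h(a,\kappa)=\Om(\kappa a)$ directly:
\[
h_{aa}=\kappa^2\Om'',\qquad h_{\kappa\kappa}=a^2\Om'',\qquad h_{a\kappa}=\Om'+u\Om''\quad\text{with } u=\kappa a .
\]
The diagonal entries are $\le0$. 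The determinant is $u^2\Om''^2-(\Om'+u\Om'')^2=-\Om'(\Om'+2u\Om'')$, so joint concavity requires $\Om'+2u\Om''\le0$, which I expect to be the crux. In $\rho$, using $u=\rho^2/(1+\rho)$, this reads
\[
\frac1\rho-\frac{2(1+\rho)}{\rho(\rho+2)}=\frac{\rho+2-2-2\rho}{\rho(\rho+2)}=\frac{-\rho}{\rho(\rho+2)}<0 .
\]
This yields concavity on the open quadrant, which extends to the closed quadrant $\R_+^2$ by continuity.

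For (d), the proof of Theorem~\ref{thm:ict} already shows that the minimizing $b^\star$ satisfies $\rhof(\kappa a)=a/b^\star$. At $\kappa=\kappa^\star$ we have $b^\star=b$, so $\rhof(\kappa^\star a)=a/b$. One can also verify this directly: $\kappa^\star a=r^2/(1+r)$ with $r=a/b$, and $\rhof$ inverts this map.
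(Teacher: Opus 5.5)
Your proof is correct. Parts (a), (b) and (d) follow the paper up to presentation. For (a), the paper obtains $\Om'(u)=1/\rhof(u)$ from the envelope theorem applied to $\Om(u)=\inf_{b>0}[\log(1+1/b)+ub]$, with minimizer $b^\star=1/\rhof(u)$, rather than from your chain rule in $\rho$. For (b), it gets the small-$u$ expansion by substituting the series of $\rhof$ into $\Om$ rather than by integrating the series of $1/\rho$. Both variants check out, and your coefficients $-\tfrac12$ and $\tfrac1{12}$ are right.

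The real difference is in (c). The paper never computes a Hessian. For each fixed $b>0$, the map $(a,\kappa)\mapsto\log(1+a/b)+\kappa b$ is jointly concave (concave in $a$, affine in $\kappa$). Hence $\Om(\kappa a)=\inf_{b>0}[\log(1+a/b)+\kappa b]$ is jointly concave as a pointwise infimum of concave functions.

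You had this representation in hand, but the substitution $b=a/t$ moved $a$ into the term $\kappa a/t$ and hid the structure. In the original variable, the representation settles (c) in one line and uses no second derivatives. It also covers the boundary of $\R_+^2$ directly, since the infimum is $0=\Om(0)$ when $a=0$ or $\kappa=0$.

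Your Hessian route is nevertheless valid. Since $\Om'+2u\Om''=-1/(\rho+2)$, the determinant equals $1/(\rho(\rho+2))>0$. It even gives slightly more, namely strict joint concavity on the open quadrant. The price is that it depends on the explicit $\Om''$ from (a) and needs a continuity argument at the boundary.

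One wording slip in (a): strict monotonicity comes from $\Om'=1/\rho>0$, not from the sign of $\Om''$.
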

\begin{IEEEproof}
(a) Taking $a=1$ in the proof of Theorem~\ref{thm:ict}, $\Om(u)=\inf_{b>0}[\log(1+1/b)+ub]$ with minimizer $b^\star(u)=1/\rhof(u)$. By the envelope theorem, $\Om'(u)=b^\star(u)=1/\rhof(u)>0$; since $\rhof$ is strictly increasing, $\Om$ is strictly concave, and $\Om''=-\rhof'/\rhof^2$ with $\rhof'=(1+\rhof)^2/(\rhof(\rhof+2))$ obtained by implicit differentiation of $\rhof^2/(1+\rhof)=u$. (b) Substituting $\rhof(u)=\sqrt u+u/2+u^{3/2}/8+O(u^2)$ and $\rhof(u)=u+1-1/u+O(u^{-2})$ into \eqref{eq:omega} yields the expansions. (c) For each fixed $b>0$, $(a,\kappa)\mapsto\log(1+a/b)+\kappa b$ is concave (concave in $a$ and affine in $\kappa$), and an infimum of concave functions is concave. (d) was shown in the proof of Theorem~\ref{thm:ict}, where $\rho=a/b^\star$ and $b^\star=b$ at $\kappa=\kappa^\star$.
\end{IEEEproof}

Part (b) exhibits the two regimes of the ICT: at low SINR, $\Om(\kappa a)\approx2\sqrt{\kappa a}$ is the QT-type square-root bound, so that ICT and FP agree to first order; at high SINR, $\Om(\kappa a)\approx\log(\kappa a)+1$ has the logarithmic growth of the rate itself. Part (c) makes the transformed objective $\sum_iw_i[\Om(\kappa_ia_i)-\kappa_ib_i]$ jointly concave in $(\mathbf a,\boldsymbol\kappa)$ and, for fixed $\boldsymbol\kappa$, concave and separable in the powers.

\subsection{Derivation and Relation to Existing Transforms}
\label{sec:derivation}
The construction has three steps, each of which is forced once the goal is stated. (i) \emph{What must be approximated.} The links of \eqref{eq:wsr} are coupled only through $b_i$; for fixed $b_i$ the rate is already concave in the own signal. Hence, the interference is the only variable that needs replacing, and approximating the own signal as well---as the QT does when it conjugates the whole denominator $a+b$ of the LDT ratio $a/(a+b)$---gives away accuracy that was never needed. (ii) \emph{The simpler form.} A function of the sum $\sum_{j\ne i}g_{ij}p_j$ splits into per-link terms only if it is affine, with a slope independent of $a_i$ (otherwise bilinear terms $p_ip_j$ appear); this leads to a fixed price $\kappa_i$ per unit of interference. (iii) \emph{The tightest such form.} Since $b\mapsto\log(1+a/b)$ is convex, the tightest affine minorant of slope $-\kappa$ that is valid for \emph{every} $b$ has intercept $\inf_b[\log(1+a/b)+\kappa b]$, the partial Fenchel conjugate, and \eqref{eq:omega} is the result; the dual variable is an interference price by construction.

Three remarks place the ICT among the known transforms. First, it is the same operation as the QT, which is the partial conjugate of the ratio $A/B$ in $B$; the QT conjugates a ratio after the logarithm has been removed by a separate bound (LDT or RIT), whereas the ICT conjugates the logarithm of the ratio in one step, and one bound loses less than two composed bounds (Proposition~\ref{prop:family}). Second, the closed form rests on the fact that the stationarity condition in $b$ is a quadratic equation (the origin of $\sqrt{u^2+4u}$), because the rate is the logarithm of a ratio of affine functions; for multi-stream $\log\det$ rates the analogous equation is of Riccati type and does not collapse (Remark~\ref{rem:mimo}). Third, the ICT is distinct from Dinkelbach's and the Charnes--Cooper transforms, from the LDT and the RIT (tangents in $1/(1+r)$ and $1/r$), from SCALE's bound, and from the QFT/LFT of \cite{Khan24}, which lie outside the family; by Proposition~\ref{prop:omega}(b), its own-signal term is the QT square root at low SINR and the logarithm at high SINR.

\subsection{Tightness}
\begin{theorem}[Tightest interference-linear minorizer]
\label{thm:tightest}
Fix $\kappa\ge0$.
(i) If $\phi:\R_+\to\R$ satisfies $\phi(a)-\kappa b\le\log(1+a/b)$ for all $a\ge0$, $b>0$, then $\phi(a)\le\Om(\kappa a)$ for all $a\ge0$.
(ii) If in addition $\phi(a_0)-\kappa b_0=\log(1+a_0/b_0)$ for some $a_0>0$, $b_0>0$, then $\kappa=\kappa^\star(a_0,b_0)$ and $\phi(a_0)=\Om(\kappa a_0)$.
\end{theorem}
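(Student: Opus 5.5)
Part (i) follows by taking an infimum over $b$ in the assumed inequality. For fixed $a\ge0$, the hypothesis gives $\phi(a)\le\log(1+a/b)+\kappa b$ for every $b>0$, and hence
\[
\phi(a)\le\inf_{b>0}\big[\log(1+a/b)+\kappa b\big].
\]
For $a>0$ and $\kappa>0$, the proof of Theorem~\ref{thm:ict} shows that this infimum equals $\Om(\kappa a)$. The boundary cases are handled separately.
\begin{itemize}
\item If $a=0$, the infimum is $\inf_b\kappa b=0=\Om(0)$, using $\Om(0)=0$ from Proposition~\ref{prop:omega}(a).
\item If $\kappa=0$ and $a>0$, the infimum is $\inf_b\log(1+a/b)=0$, attained only in the limit $b\to\infty$. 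Since $\Om(0\cdot a)=\Om(0)=0$, the bound $\phi(a)\le\Om(\kappa a)$ still holds.
\end{itemize}

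For part (ii), combine the tightness assumption with part (i):
\[
\log(1+a_0/b_0)+\kappa b_0=\phi(a_0)\le\Om(\kappa a_0)\le\log(1+a_0/b_0)+\kappa b_0.
\]
The second inequality is Theorem~\ref{thm:ict} applied at $b=b_0$. Hence all three quantities are equal, which gives $\phi(a_0)=\Om(\kappa a_0)$.

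It also follows that $b_0$ attains the infimum of $g(b)=\log(1+a_0/b)+\kappa b$. The remaining task is to deduce $\kappa=\kappa^\star(a_0,b_0)$, again splitting on $\kappa$.
\begin{itemize}
\item If $\kappa>0$, the proof of Theorem~\ref{thm:ict} shows that equality holds if and only if $\kappa=a_0/(b_0(a_0+b_0))$. Equivalently, $g$ is strictly convex and differentiable, so a minimizer satisfies $g'(b_0)=0$, which is exactly this condition.
\item If $\kappa=0$, the infimum $0$ would have to equal $\log(1+a_0/b_0)$. This is impossible because $\log(1+a_0/b_0)>0$ for $a_0>0$, so $\kappa=0$ is excluded.
\end{itemize}

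The only delicate point is the $\kappa=0$ boundary, where the infimum is not attained. It is resolved by the strict positivity $\log(1+a_0/b_0)>0$, and everything else is a direct reuse of the computation in Theorem~\ref{thm:ict}.
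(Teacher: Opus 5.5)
Your proof is correct. Part (i) is the paper's argument, and you make the boundary cases $a=0$ and $\kappa=0$ explicit; the paper just cites the proof of Theorem~\ref{thm:ict}, which treats only $a>0$, $\kappa>0$.

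For part (ii) you work on the primal side of the duality, where the paper works on the dual side. The paper observes that tightness makes $\kappa$ a maximizer of $\kappa'\mapsto\Om(\kappa'a_0)-\kappa'b_0$. It then gets $\kappa=\kappa^\star$ from uniqueness of that maximizer, i.e., from the strict concavity of $\Om$ in Proposition~\ref{prop:omega}(a). You instead observe that $b_0$ minimizes $g(b)=\log(1+a_0/b)+\kappa b$ and read $\kappa$ off the stationarity condition $g'(b_0)=0$.

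Your route is more elementary. It needs only differentiability of $g$ at the interior point $b_0$; strict convexity of $g$ is not needed for this direction, and neither is the envelope-theorem computation behind the strict concavity of $\Om$. The paper's route states the conclusion directly as ``the price is the unique dual optimizer of \eqref{eq:ict}'', which is the interpretation the paper is after.

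Your case split on $\kappa$ is also unnecessary. The hypothesis and tightness already give $g(b_0)=\phi(a_0)\le g(b)$ for every $b>0$, without passing through $\Om$. So $b_0$ is an interior minimizer for every $\kappa\ge0$, and $g'(b_0)=\kappa-a_0/(b_0(a_0+b_0))=0$ forces $\kappa=\kappa^\star>0$. This rules out $\kappa=0$ automatically, with no separate positivity argument.
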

\begin{IEEEproof}
(i) For fixed $a$, $\phi(a)\le\inf_{b>0}[\log(1+a/b)+\kappa b]$, and this infimum equals $\Om(\kappa a)$ by the proof of Theorem~\ref{thm:ict}. (ii) From (i) and the hypothesis, $\Om(\kappa a_0)-\kappa b_0\ge\phi(a_0)-\kappa b_0=\log(1+a_0/b_0)=\max_{\kappa'}[\Om(\kappa'a_0)-\kappa'b_0]$, so that $\kappa$ attains the maximum in \eqref{eq:ict}, which is unique for $a_0>0$ by the strict concavity of $\kappa\mapsto\Om(\kappa a_0)$ (Proposition~\ref{prop:omega}(a)). Hence $\kappa=\kappa^\star$, and the two inequalities are equalities.
\end{IEEEproof}

Theorem~\ref{thm:tightest} says two things. First, the interference price of every tight interference-linear surrogate is forced: it must be the marginal rate loss per unit of interference, $-\partial\log(1+a/b)/\partial b=a/(b(a+b))$, the price of \cite{Huang06,ShiBerryHonig09}; that WMMSE, FP, SEFP and XMMSE, derived by unrelated routes, all charge this price is therefore explained rather than coincidental. Second, once the price is fixed, the own-signal function is bounded above by $\Om(\kappa\,\cdot)$: everything a surrogate loses relative to ICT is lost on the own-signal side. The proof is short because the construction \emph{is} the optimality argument; the content lies in the closed form \eqref{eq:omega} and in the next proposition, which places the benchmarks in the family.

\subsection{One Family, One Price}
\begin{proposition}[Common structure of the benchmarks]
\label{prop:family}
At an anchor $(a_0,b_0)$ with $\gamma=a_0/b_0$, let $\nu=a/a_0$. The surrogates \eqref{eq:lq}, \eqref{eq:rq} and the ICT surrogate $\Om(\kappa^\star a)-\kappa^\star b$ can be written as $\phi_\bullet(\nu)-\kappa^\star b$ with the \emph{same} price $\kappa^\star=\gamma/((1+\gamma)b_0)$ and the own-signal functions
\begin{align}
\phi_{\LQ}(\nu)&=\log(1+\gamma)-\gamma+2\gamma\sqrt\nu-\frac{\gamma^2}{1+\gamma}\,\nu,\label{eq:philq}\\
\phi_{\RQ}(\nu)&=2\log(1+\gamma)\sqrt\nu-\frac{b(\gamma)}{1+\gamma}\,\nu,\label{eq:phirq}\\
\phi_{\ICT}(\nu)&=\Om\Big(\frac{\gamma^2}{1+\gamma}\,\nu\Big).\label{eq:phiict}
\end{align}
All three equal $\log(1+\gamma)+\gamma/(1+\gamma)=\Om(\kappa^\star a_0)$ at $\nu=1$ with derivative $\gamma/(1+\gamma)$, and
\begin{equation}
\phi_{\LQ}\le\phi_{\RQ}\le\phi_{\ICT}\le\log(1+\gamma\nu)+\frac{\gamma}{1+\gamma}
\label{eq:chain}
\end{equation}
on $\nu\ge0$, with equalities only at $\nu=1$. Explicitly, $\phi_{\RQ}(\nu)-\phi_{\LQ}(\nu)=(\gamma-\log(1+\gamma))(\sqrt\nu-1)^2$.
\end{proposition}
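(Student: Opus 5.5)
The plan is a change of variables followed by three separate comparisons. Throughout I substitute $a=a_0\nu$, $a_0=\gamma b_0$ and $a_0+b_0=(1+\gamma)b_0$, and assume $\gamma>0$.

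\emph{Common price and own-signal functions.} In \eqref{eq:lq} the coefficient of $b$ is $(1+\gamma)a_0/(a_0+b_0)^2=\gamma/((1+\gamma)b_0)=\kappa^\star$. The same coefficient multiplies $a$, so the own-signal part of \eqref{eq:lq} is $\log(1+\gamma)-\gamma+2\gamma\sqrt\nu-\kappa^\star a_0\nu$ with $\kappa^\star a_0=\gamma^2/(1+\gamma)$, which is \eqref{eq:philq}. In \eqref{eq:rq} the $b$-coefficient is already $\kappa^\star$, and the own-signal terms become \eqref{eq:phirq} directly. For ICT, $\kappa^\star a=\frac{\gamma^2}{1+\gamma}\nu$, which gives \eqref{eq:phiict}.

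\emph{Values and slopes at $\nu=1$.}
\begin{itemize}
\item For $\phi_{\ICT}$: at $u=\gamma^2/(1+\gamma)$ we have $\rhof(u)=\gamma$, since $\gamma^2/(1+\gamma)=u$. Hence $\phi_{\ICT}(1)=\log(1+\gamma)+\gamma/(1+\gamma)$. By Proposition~\ref{prop:omega}(a), the slope is $\frac{\gamma^2}{1+\gamma}\cdot\frac1\gamma=\frac{\gamma}{1+\gamma}$.
\item For $\phi_{\LQ}$ and $\phi_{\RQ}$ the value and slope are direct substitutions, using $b(\gamma)=(1+\gamma)\log(1+\gamma)-\gamma$.
\end{itemize}

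\emph{First inequality and the explicit difference.} I would subtract coefficient by coefficient:
\begin{itemize}
\item the constant term gives $\gamma-\log(1+\gamma)$;
\item the $\sqrt\nu$ term gives $2(\log(1+\gamma)-\gamma)$;
\item the $\nu$ term gives $(\gamma^2-b(\gamma))/(1+\gamma)=\gamma-\log(1+\gamma)$.
\end{itemize}
Together these give $\phi_{\RQ}-\phi_{\LQ}=(\gamma-\log(1+\gamma))(\sqrt\nu-1)^2$. This is $\ge0$, and it vanishes only at $\nu=1$ because $\gamma>\log(1+\gamma)$ for $\gamma>0$.

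\emph{Last inequality.} Evaluate the infimum defining $\Om(\kappa^\star a)$ at $b=b_0$. This gives $\phi_{\ICT}(\nu)\le\log(1+\gamma\nu)+\kappa^\star b_0$, and $\kappa^\star b_0=\gamma/(1+\gamma)$. By the proof of Theorem~\ref{thm:ict}, equality holds iff $\kappa^\star=a/(b_0(a+b_0))$. The right-hand side is strictly increasing in $a$, so equality forces $a=a_0$, i.e.\ $\nu=1$.

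\emph{Middle inequality.} The RQ surrogate is a valid global minorizer: \eqref{eq:rit} holds, and the QT with $y$ fixed is a lower bound on the ratio. It also has the form $\phi(a)-\kappa^\star b$. Theorem~\ref{thm:tightest}(i) then gives $\phi_{\RQ}\le\phi_{\ICT}$.

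\emph{Strictness of the middle inequality (main obstacle).} Suppose equality holds at some $\nu\neq1$. Take $b^\star$ to be the minimizer of $\log(1+a/b)+\kappa^\star b$. Then the RQ surrogate is tight at $(a,b^\star)$, which requires both bounds to be tight there.
\begin{itemize}
\item The RIT bound \eqref{eq:rit} is a tangent of the strictly concave function $\tau\mapsto1/\log(1+1/\tau)$ at $\tau=1/\gamma$. I expect strict concavity to need a short verification. Tightness therefore requires $a/b^\star=\gamma$.
\item Theorem~\ref{thm:tightest}(ii) then forces $\kappa^\star=\kappa^\star(a,a/\gamma)=\gamma^2/((1+\gamma)a)$.
\item Equating this with $\gamma/((1+\gamma)b_0)$ gives $a=\gamma b_0=a_0$, a contradiction.
\end{itemize}
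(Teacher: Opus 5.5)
Your argument matches the paper's proof except at one step. The paper also reads $\kappa^\star$ and the own-signal functions off the coefficients of \eqref{eq:lq} and \eqref{eq:rq}. It obtains the first inequality from the explicit square, the middle one from Theorem~\ref{thm:tightest}(i) applied to $\ell^{\RQ}$, and the last one from Theorem~\ref{thm:ict} with $b=b_0$.

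The difference is strictness of the middle inequality. There the paper only cites strict concavity of $\kappa\mapsto\Om(\kappa a)$, which is not enough. If $\phi_{\RQ}(\nu)=\phi_{\ICT}(\nu)$, then $\ell^{\RQ}$ is tight at $(a,b^\star)$. But Theorem~\ref{thm:tightest}(ii) at that anchor only returns the stationarity condition that already defines $b^\star$. Your extra ingredient is what closes the argument: \eqref{eq:rit} is tight only at $r=\gamma$, so $b^\star=a/\gamma$, and the price then forces $a=a_0$.

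The verification you anticipate is short. With $L(\tau)=\log(1+1/\tau)$ one gets $(1/L)''=[2-(1+2\tau)L]/(\tau^2(1+\tau)^2L^3)$. With $x=1/\tau$, this is negative iff $\log(1+x)>2x/(2+x)$. That holds for $x>0$, since the difference vanishes at $x=0$ and has derivative $x^2/((1+x)(2+x)^2)>0$.

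One correction is needed at $\nu=0$, and it applies equally to the statement and to the paper's proof. The minimizer $b^\star$ of $\log(1+a/b)+\kappa^\star b$ does not exist when $a=0$. In fact the middle inequality is an equality at $\nu=0$: $\phi_{\RQ}(0)=0=\Om(0)=\phi_{\ICT}(0)$. Likewise $\kappa\mapsto\Om(\kappa\cdot 0)$ is constant, so the paper's appeal to strict concavity also fails there.

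What you have actually proved, correctly, is strictness of the middle inequality for $\nu>0$, $\nu\neq1$. That is the right claim; equality in the middle inequality holds at $\nu\in\{0,1\}$.

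The outer inequalities remain strict at $\nu=0$, since $\phi_{\LQ}(0)=\log(1+\gamma)-\gamma<0$ and $\phi_{\ICT}(0)=0<\gamma/(1+\gamma)$. Check the latter directly rather than through the ``iff'' from the proof of Theorem~\ref{thm:ict}, which is established only for $a>0$.
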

\begin{IEEEproof}
Substituting $y^\star$ into \eqref{eq:lq}, we have $2y^\star\sqrt{(1+\gamma)a}=2\gamma\sqrt\nu$ because $a_0/(a_0+b_0)=\gamma/(1+\gamma)$, and $(y^\star)^2=(1+\gamma)a_0/(a_0+b_0)^2=\kappa^\star$, so that the coefficient of $b$ is $-\kappa^\star$ and that of $a$ is $-\kappa^\star a_0\nu=-\gamma^2\nu/(1+\gamma)$. For \eqref{eq:rq}, $y^\star=1/\sqrt{(1+\gamma)a_0}$ gives $2y^\star\sqrt{c(\gamma)a}=2\log(1+\gamma)\sqrt\nu$, the coefficient of $a$ is $-(y^\star)^2b(\gamma)$, and the price is $(y^\star)^2\gamma^2=\gamma^2/((1+\gamma)a_0)=\kappa^\star$. The ICT surrogate is $\Om(\kappa^\star a_0\nu)-\kappa^\star b$ with $\kappa^\star a_0=\gamma^2/(1+\gamma)$. The difference $\phi_{\RQ}-\phi_{\LQ}$ is a direct computation (it is Proposition~3 of \cite{Jiao26}); $\phi_{\RQ}\le\phi_{\ICT}$ is Theorem~\ref{thm:tightest}(i) applied to the valid minorizer $\ell^{\RQ}$; and the last inequality is Theorem~\ref{thm:ict} with $b=b_0$. Strictness follows from the strict concavity of $\kappa\mapsto\Om(\kappa a)$ for the last two inequalities and from the explicit square for the first.
\end{IEEEproof}

\begin{proposition}[XMMSE coincides with RIT+QT for one stream]
\label{prop:xmmse}
For a single-stream user with scalar effective channel $z=\mathbf h^H\mathbf v$, interference-plus-noise $b$ and anchor $(z_0,b_0)$, $\gamma=|z_0|^2/b_0$, the XMMSE surrogate of \cite{YangZhao26} is
\begin{equation}
\frac{2\log(1+\gamma)}{\gamma\,b_0}\,\Real(z_0^*z)-\frac{b(\gamma)}{\gamma(1+\gamma)b_0}\,|z|^2-\frac{\gamma}{(1+\gamma)b_0}\,b,
\label{eq:xmmse-vec}
\end{equation}
and so is the RIT+QT surrogate obtained by applying the complex QT to \eqref{eq:rit}. In particular, the scalar forms satisfy $g_X(x)\equiv\ell^{\RQ}(x^2,b_0)$ and $g_W(x)\equiv\ell^{\LQ}(x^2,b_0)$, and the XMMSE--WMMSE gap $[1-\log(1+\gamma)/\gamma](x-x_0)^2$ of \cite{YangZhao26} is the SEFP--FP gap of \cite{Jiao26} written in the amplitude variable.
\end{proposition}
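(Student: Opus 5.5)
The proof will be a direct computation in three parts. We first write down the XMMSE surrogate for one stream and reduce it to \eqref{eq:xmmse-vec}. Next we apply the complex QT to \eqref{eq:rit} and check that it gives the same expression. Finally we specialize to real amplitudes to obtain the scalar identities.

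\emph{Step 1: the XMMSE surrogate.} For a single stream, the regularized MSE of \cite{YangZhao26} at level $\varepsilon$ with receiver $u$ is $e_\varepsilon(u)=1-2\varepsilon\Real(u^*z)+|u|^2(\varepsilon|z|^2+b)$, up to the paper's normalization. At the anchor, the surrogate is obtained by fixing $u_\varepsilon=z_0/(\varepsilon|z_0|^2+b_0)$ and integrating the resulting linear-in-MSE tangent over $\varepsilon\in[0,1]$. The surrogate is then a linear combination of $\Real(z_0^*z)$, $|z|^2$ and $b$. Its three coefficients are integrals of the form $\int_0^1 (\varepsilon\gamma+1)^{-1}d\varepsilon$ and $\int_0^1(\varepsilon\gamma+1)^{-2}d\varepsilon$, together with the $\varepsilon$-weighted versions.

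These evaluate to $\log(1+\gamma)/\gamma$, $1/(1+\gamma)$, and combinations of the two, which produce $b(\gamma)$. We will check each coefficient against \eqref{eq:xmmse-vec}. A consistency check is available: the coefficient of $b$ must equal $\kappa^\star=\gamma/((1+\gamma)b_0)$ by Theorem~\ref{thm:tightest}(ii), since the surrogate is tight and interference-linear. Two further checks fix the constant: the value at $z=z_0$, $b=b_0$ must be $\log(1+\gamma)$, and the gradient at the anchor must match that of the rate.

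\emph{Step 2: the complex QT of \eqref{eq:rit}.} We write $\ell_{\mathrm{RIT}}=c(\gamma)|z|^2/(\gamma^2b+b(\gamma)|z|^2)$ and apply $|A|^2/B=\max_y[2\Real(y^*A)-|y|^2B]$ with $A=\sqrt{c(\gamma)}\,z$. Evaluating at the anchor gives $y^\star=\sqrt{c}\,z_0/(\gamma^2b_0+b(\gamma)|z_0|^2)$. The identity $\gamma^2+b(\gamma)\gamma=\gamma(1+\gamma)\log(1+\gamma)$ simplifies the denominator. The three coefficients then follow exactly as in the proof of Proposition~\ref{prop:family}, with $\sqrt{a_0}$ replaced by $z_0$: the linear coefficient is $2\log(1+\gamma)/(\gamma b_0)$ times $\Real(z_0^*z)$, the quadratic coefficient is $(y^\star)^2b(\gamma)$, and the price is $(y^\star)^2\gamma^2=\kappa^\star$. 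These match \eqref{eq:xmmse-vec}.

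\emph{Step 3: scalar forms.} We take $z_0=x_0>0$ real, $z=x$ and $b=b_0$. Expression \eqref{eq:xmmse-vec} becomes $\ell^{\RQ}(x^2,b_0)$ in the form \eqref{eq:rq}, which is a quadratic in $x$. To show $g_X\equiv\ell^{\RQ}(x^2,b_0)$, we compare three quantities at $x_0$: the value, the first derivative, and the leading coefficient. The value of both is $\log(1+\gamma)$. The derivative of both is $2x_0/(1+\gamma)$. For the leading coefficient, $\ell^{\RQ}$ gives $-b(\gamma)/((1+\gamma)x_0^2)$, and dividing by $b_0$ and using $x_0^2=\gamma b_0$ shows this equals the bracket in \eqref{eq:xmmse-scalar}, i.e.\ $1/(1+\gamma)-\log(1+\gamma)/\gamma$ (with the normalization $b_0$ absorbed as in \cite{YangZhao26}). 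The identity for $g_W$ and $\ell^{\LQ}$ follows in the same way, with $\gamma$ in place of $\log(1+\gamma)$.

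Subtracting the two identities gives the gap $[1-\log(1+\gamma)/\gamma](x-x_0)^2$ in the normalization of \eqref{eq:xmmse-scalar}. In the variable $\nu=x^2/x_0^2$ this is proportional to $(\gamma-\log(1+\gamma))(\sqrt\nu-1)^2$, which is the explicit difference in Proposition~\ref{prop:family}.

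The main obstacle is Step 1: pinning down the exact normalization and the receiver and weight choices of \cite{YangZhao26}, and carrying out the $\varepsilon$-integrals. To guard against convention errors, we will not rely on the integrals alone. Instead we will use tightness, matching gradients and Theorem~\ref{thm:tightest}(ii) to fix two of the three coefficients, and compute only the $|z|^2$ coefficient explicitly.
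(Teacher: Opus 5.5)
Your proposal is correct and follows essentially the paper's proof. The paper also evaluates the one-stream XMMSE coefficients, and its $\mathbf B$, $\tilde{\mathbf A}$ and $\mathbf A_{i,l}$ are built from exactly your integrals $\int_0^1(1+\varepsilon\gamma)^{-1}d\varepsilon=\log(1+\gamma)/\gamma$, $\int_0^1\varepsilon(1+\varepsilon\gamma)^{-2}d\varepsilon=\gamma^{-1}[\log(1+\gamma)/\gamma-1/(1+\gamma)]$ and $\int_0^1(1+\varepsilon\gamma)^{-2}d\varepsilon=1/(1+\gamma)$; it then applies the complex QT to \eqref{eq:rit} with the same simplification of $y^\star$, and specializes to real amplitudes, where, as you note, \eqref{eq:xmmse-scalar} implicitly takes $b_0=1$.

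One small repair: your $e_\varepsilon$ is inconsistent with your $u_\varepsilon$, since its minimizer carries an extra factor $\varepsilon$. The clean version applies the QT to the integrand of $\log(1+|z|^2/b)=\int_0^1|z|^2/(b+\varepsilon|z|^2)\,d\varepsilon$ with $u_\varepsilon=z_0/(b_0+\varepsilon|z_0|^2)$ fixed. This yields \eqref{eq:xmmse-vec} directly with no constant term, so your tightness and gradient-matching checks are valid but redundant.
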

The proof, given in Appendix~\ref{app:xmmse}, evaluates the integrals of \cite{YangZhao26} for one stream and compares coefficients. Proposition~\ref{prop:xmmse} reveals that SEFP and XMMSE are the same surrogate derived from two directions---a tangent in the coordinate $1/r$ and an integral representation of the rate---so that one dominance theorem covers both. It also explains why the advantage of XMMSE over WMMSE grows with the SNR in \cite{YangZhao26}: the gap $(\gamma-\log(1+\gamma))(\sqrt\nu-1)^2$ grows linearly in $\gamma$, whereas the remaining gap to ICT grows only like $\log\gamma$ (Proposition~\ref{prop:curv}).

\begin{corollary}[Dominance]
\label{cor:dom}
For every anchor and every $(a,b)$, $S_{\LQ}=S_{\mathrm{WMMSE}}\le S_{\RQ}=S_{\mathrm{XMMSE}}\le S_{\ICT}\le\log(1+a/b)$, where $S_\bullet$ denotes the surrogate obtained from the respective single-ratio bound with its auxiliary variables fixed at the anchor. The same holds term by term for the WSR surrogates of Section~\ref{sec:prelim}.
\end{corollary}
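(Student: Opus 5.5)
The corollary follows by assembling results already established; I would prove it in three steps. The chain concerns single-ratio surrogates, all of which can be written in the common form $\phi_\bullet(\nu)-\kappa^\star b$ of Proposition~\ref{prop:family}. \emph{Step 1 (the equalities).} The identity $S_{\LQ}=S_{\mathrm{WMMSE}}$ is the scalar identity noted in Section~\ref{sec:benchmarks}. To see it, I would expand $\log\tilde w-\tilde w e(a,b)+1$ with $\tilde w=1+\gamma$ and $u_0=\sqrt{a_0}/(a_0+b_0)$ and compare its coefficients with those of \eqref{eq:lq}. The identity $S_{\RQ}=S_{\mathrm{XMMSE}}$ is Proposition~\ref{prop:xmmse}, which I take as given since its proof is in the appendix. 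In the beamforming case the complex form \eqref{eq:xmmse-vec} holds for general $(z,b)$, not only for $b=b_0$. This matters, because the corollary quantifies over every $(a,b)$, whereas the scalar identity $g_X\equiv\ell^{\RQ}(x^2,b_0)$ only fixes $b=b_0$.

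\emph{Step 2 (the inequalities, for arbitrary $b$).} All three surrogates share the price term $-\kappa^\star b$. Hence $S_\bullet(a,b)-S_\circ(a,b)=\phi_\bullet(\nu)-\phi_\circ(\nu)$ is independent of $b$. The two middle inequalities therefore follow from the first two inequalities of the chain \eqref{eq:chain}, which hold for all $\nu\ge0$.

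\emph{Step 3 (the last inequality).} I would not use the last link of \eqref{eq:chain} here, because it is stated only at $b=b_0$. Instead I would apply Theorem~\ref{thm:ict} directly: $\Om(\kappa^\star a)-\kappa^\star b\le\log(1+a/b)$ holds for every $a\ge0$ and $b>0$, since $\kappa^\star\ge0$ is a fixed number. For complex beamforming variables the same argument applies with $a=|z|^2$. One caveat: the RIT+QT and XMMSE surrogates depend on the phase of $z$ through $\Real(z_0^*z)\le|z_0||z|$. At fixed $|z|$ they are therefore largest when $z$ is phase-aligned with $z_0$, where they reduce to $\ell^{\RQ}(|z|^2,b)$. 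So the inequality against the phase-free ICT surrogate still holds.

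\emph{Step 4 (WSR surrogates).} Every WSR surrogate of Section~\ref{sec:prelim} is $\sum_i w_i S_\bullet(a_i(\mathbf x),b_i(\mathbf x);\bar{\mathbf x})$ with $w_i\ge0$, and $a_i,b_i$ are evaluated at the same $\mathbf x$ for each method. The term-by-term inequalities therefore sum to the stated ordering.

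The only delicate point is the one flagged in Steps 1 and 3: checking that the XMMSE equality and the comparison with ICT are used in their full-$(a,b)$ (respectively complex) form and not merely at $b=b_0$. This is exactly why the separation into a common price and own-signal functions is used. Everything else is bookkeeping.
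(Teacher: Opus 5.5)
Your proof is correct and follows the route the paper leaves implicit. The common price $\kappa^\star$ of Proposition~\ref{prop:family} turns the first two inequalities into $b$-independent consequences of \eqref{eq:chain}, and Theorem~\ref{thm:ict} with $\kappa=\kappa^\star$ gives the last one for arbitrary $b$; you are right not to use the $b=b_0$ link of \eqref{eq:chain}. The equalities come from Section~\ref{sec:benchmarks} and Proposition~\ref{prop:xmmse}, and nonnegative weights give the WSR version.

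One remark applies only to your optional complex-variable extension. In beamforming both the FP and XMMSE surrogates depend on the phase of $z$, so the $\nu$-only argument of your Step~2 no longer applies. There you need the complex form of the gap in Proposition~\ref{prop:xmmse}, $(\gamma-\log(1+\gamma))\,|z-z_0|^2/|z_0|^2\ge0$, in addition to the phase-alignment bound you use for the comparison with ICT.
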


Fig.~\ref{fig:surr} illustrates the own-signal functions \eqref{eq:philq}--\eqref{eq:phiict} at two SINRs. Observe that the LDT+QT bound is bounded above ($\phi_{\LQ}\le\log(1+\gamma)+1$, attained at $\nu=(1+\gamma)^2/\gamma^2$) and negative for small $\nu$; the RIT+QT bound removes the saturation at $\nu=0$ but retains a square-root shape; the ICT bound tracks the logarithm.

\begin{figure}[t]
\centering
\includegraphics[width=\columnwidth]{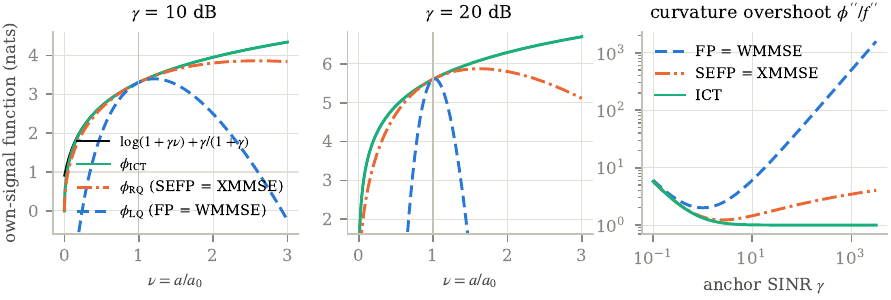}
\caption{Own-signal functions \eqref{eq:philq}--\eqref{eq:phiict} and the true rate (plus the constant $\gamma/(1+\gamma)$) versus the normalized own signal $\nu=a/a_0$, at $\gamma=10$\,dB (left) and $\gamma=20$\,dB (center). Right: curvature overshoot factors of Proposition~\ref{prop:curv} versus $\gamma$.}
\label{fig:surr}
\end{figure}

\subsection{Curvature, Certified Ascent and Asymptotic Rate}
\begin{proposition}[Own-signal curvature at the anchor]
\label{prop:curv}
With $f(\nu)=\log(1+\gamma\nu)$ (the rate at fixed interference), $f''(1)=-\gamma^2/(1+\gamma)^2$, and
\begin{equation}
\begin{aligned}
\frac{\phi_{\LQ}''(1)}{f''(1)}&=\frac{(1+\gamma)^2}{2\gamma},\qquad
\frac{\phi_{\RQ}''(1)}{f''(1)}=\frac{(1+\gamma)^2\log(1+\gamma)}{2\gamma^2},\\
\frac{\phi_{\ICT}''(1)}{f''(1)}&=1+\frac{1}{\gamma(\gamma+2)}.
\end{aligned}
\label{eq:curv}
\end{equation}
In the original variable, $\phi''_{\ICT}=-1/(a_0(a_0+2b_0))$, $\phi''_{\RQ}=-\log(1+\gamma)/(2a_0^2)$ and $\phi''_{\LQ}=-(1+\gamma)/(2a_0(a_0+b_0))$, against $f''=-1/(a_0+b_0)^2$.
\end{proposition}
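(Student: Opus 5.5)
The proof is a direct second-derivative computation at $\nu=1$ using the explicit own-signal functions of Proposition~\ref{prop:family}. For the true rate, $f(\nu)=\log(1+\gamma\nu)$ gives $f''(\nu)=-\gamma^2/(1+\gamma\nu)^2$, hence $f''(1)=-\gamma^2/(1+\gamma)^2$.

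For the two QT-based surrogates only the square-root term in \eqref{eq:philq} and \eqref{eq:phirq} contributes, since the linear terms have zero curvature. Because $(\sqrt\nu)''=-\tfrac14\nu^{-3/2}$, we get
\[
\phi_{\LQ}''(1)=-\gamma/2, \qquad \phi_{\RQ}''(1)=-\log(1+\gamma)/2 .
\]
Dividing by $f''(1)$ yields the first two ratios in \eqref{eq:curv}.

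For the ICT we use the chain rule with $c=\gamma^2/(1+\gamma)$, so that $\phi_{\ICT}''(1)=c^2\,\Om''(c)$. Proposition~\ref{prop:omega}(a) gives
\[
\Om''(u)=-\frac{(1+\rhof)^2}{\rhof^3(\rhof+2)} .
\]
The key observation is that $\rhof(c)=\gamma$: indeed $\gamma^2/(1+\gamma)=c$ is exactly the defining equation of $\rhof$, which is also Proposition~\ref{prop:omega}(d) at the anchor. Substituting,
\[
\phi_{\ICT}''(1)=-\frac{\gamma^4}{(1+\gamma)^2}\cdot\frac{(1+\gamma)^2}{\gamma^3(\gamma+2)}=-\frac{\gamma}{\gamma+2}.
\]
The ratio to $f''(1)$ is then $(1+\gamma)^2/(\gamma(\gamma+2))$. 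Since $(1+\gamma)^2=\gamma(\gamma+2)+1$, this equals $1+1/(\gamma(\gamma+2))$.

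For the original variable $a=a_0\nu$, each second derivative is divided by $a_0^2$, and we then substitute $\gamma=a_0/b_0$. This gives
\[
-\frac{\gamma}{(\gamma+2)a_0^2}=-\frac{1}{a_0(a_0+2b_0)}, \qquad -\frac{\gamma}{2a_0^2}=-\frac{1}{2a_0b_0}, \qquad -\frac{\log(1+\gamma)}{2a_0^2},
\]
and $f''$ becomes $-1/(a_0+b_0)^2$.

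This leaves the stated $\phi''_{\LQ}=-(1+\gamma)/(2a_0(a_0+b_0))$, which I would reconcile with the computed value. Writing $-(1+\gamma)/(2a_0(a_0+b_0))=-(a_0+b_0)/(2a_0b_0(a_0+b_0))=-1/(2a_0b_0)$ shows the two expressions agree. The only step needing care is recognizing $\rhof(c)=\gamma$. Everything else is routine algebra, so I expect no real obstacle.
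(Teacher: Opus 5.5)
Your proof is correct and follows the paper's own argument essentially line by line. Only the $\sqrt\nu$ terms contribute for LDT+QT and RIT+QT, and $\phi_{\ICT}''(1)=c^2\Om''(c)$ with $\rhof(c)=\gamma$: you verify this directly from the defining equation $\rhof^2/(1+\rhof)=u$, which is equivalent to the paper's appeal to Proposition~\ref{prop:omega}(d). The factor $a_0^{-2}$ then converts to the original variable, and your reconciliation $-(1+\gamma)/(2a_0(a_0+b_0))=-1/(2a_0b_0)$ confirms that the stated $\phi_{\LQ}''$ is the same value written differently.
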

The proof is given in Appendix~\ref{app:curv}. The three ratios behave as $\gamma/2$, $(\log\gamma)/2$ and $1$ for large $\gamma$: at a cell-centre SINR of 20--30\,dB, the FP/WMMSE surrogate is 50--500 times too curved in the own-signal direction, SEFP/XMMSE 2.4--3.5 times, and ICT is within $10^{-4}$ of the true curvature. Along the interference direction all three surrogates are linear, so that their curvature deficit there is identical; the differences between the methods are entirely in the own-signal direction.

\begin{proposition}[Certified ascent]
\label{prop:ascent}
Let $S_1\ge S_2$ be two MM surrogates of $f$ tight at $\bar{\mathbf x}$, and let $\mathbf x_i^+\in\argmax_{\mathbf x\in\mathcal X}S_i(\mathbf x;\bar{\mathbf x})$. Then
$f(\mathbf x_1^+)-f(\bar{\mathbf x})\ \ge\ \max_{\mathcal X}S_1-f(\bar{\mathbf x})\ \ge\ \max_{\mathcal X}S_2-f(\bar{\mathbf x})\ \ge0$. By Corollary~\ref{cor:dom}, the certified ascent of one ICT step is never smaller than that of one SEFP/XMMSE step, which in turn is never smaller than that of one FP/WMMSE step.
\end{proposition}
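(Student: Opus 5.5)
The plan is to chain three elementary inequalities, each following from one defining property of an MM surrogate or from the pointwise order $S_1\ge S_2$. I will treat them from left to right, then obtain the final sentence by specializing to the surrogates of Corollary~\ref{cor:dom}.

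\emph{First inequality.} Since $S_1$ is a minorizer, $f(\mathbf x)\ge S_1(\mathbf x;\bar{\mathbf x})$ for every $\mathbf x\in\mathcal X$. Taking $\mathbf x=\mathbf x_1^+$ gives $f(\mathbf x_1^+)\ge S_1(\mathbf x_1^+;\bar{\mathbf x})=\max_{\mathcal X}S_1$. Subtracting $f(\bar{\mathbf x})$ yields the first inequality.

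\emph{Second inequality.} The pointwise order $S_1(\mathbf x;\bar{\mathbf x})\ge S_2(\mathbf x;\bar{\mathbf x})$ holds on $\mathcal X$. Evaluating it at $\mathbf x_2^+$ gives
\[
\max_{\mathcal X}S_1\ge S_1(\mathbf x_2^+;\bar{\mathbf x})\ge S_2(\mathbf x_2^+;\bar{\mathbf x})=\max_{\mathcal X}S_2 .
\]

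\emph{Third inequality.} Tightness gives $S_2(\bar{\mathbf x};\bar{\mathbf x})=f(\bar{\mathbf x})$. Since $\bar{\mathbf x}\in\mathcal X$ is feasible, $\max_{\mathcal X}S_2\ge f(\bar{\mathbf x})$.

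The only care needed is that the maxima are attained, which the statement already assumes through the existence of $\mathbf x_i^+$, and that all surrogates share the feasible set $\mathcal X$.

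\emph{Consequence for the three methods.} I will invoke Corollary~\ref{cor:dom}, which gives the WSR-level order $S_{\LQ}=S_{\mathrm{WMMSE}}\le S_{\RQ}=S_{\mathrm{XMMSE}}\le S_{\ICT}\le f$. This order holds term by term, and the weights $w_i\ge0$ preserve it, so it holds for the weighted sum. All three surrogates are tight at $\bar{\mathbf x}$, since each single-ratio bound equals $\log(1+\gamma)$ at $\nu=1$, $b=b_0$ by Proposition~\ref{prop:family}. Applying the chain first with $(S_1,S_2)=(S_{\ICT},S_{\RQ})$ and then with $(S_{\RQ},S_{\LQ})$ gives the stated ordering of certified ascents $\max_{\mathcal X}S_\bullet-f(\bar{\mathbf x})$.

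There is no real obstacle. The subtle point is interpretive: the ordering is of \emph{certified} ascent, meaning lower bounds on the true improvement, and not of the actual values $f(\mathbf x_i^+)$. The write-up should note this explicitly.
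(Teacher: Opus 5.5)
Your proof is correct and matches the paper's own argument: the chain $f(\mathbf x_1^+)\ge S_1(\mathbf x_1^+;\bar{\mathbf x})=\max_{\mathcal X}S_1\ge\max_{\mathcal X}S_2\ge S_2(\bar{\mathbf x};\bar{\mathbf x})=f(\bar{\mathbf x})$, using minorization, the pointwise order and tightness, with the ordering of methods read off from Corollary~\ref{cor:dom}. Your caveat that only the certified ascent is ordered, not the realized values $f(\mathbf x_i^+)$, is the same one the paper states right after the proposition.
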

We emphasize that the proposition orders guarantees, not realized iterates: $f(\mathbf x_1^+)$ and $f(\mathbf x_2^+)$ can be in either order, because $f$ may exceed either surrogate by different amounts away from the anchor. The simulations in Sections~\ref{sec:siso}--\ref{sec:slqp} examine the realized behavior.

\begin{proposition}[Asymptotic linear rate]
\label{prop:rate}
Let $\mathbf x^\star$ be a stationary point of $f$ in the interior of $\mathcal X$ with $\mathbf H_f=\nabla^2f(\mathbf x^\star)\prec0$, and let $S_1\ge S_2$ be MM surrogates, twice differentiable and tight at every anchor, with Hessians $\mathbf H_1,\mathbf H_2$ in $\mathbf x$ at $(\mathbf x^\star;\mathbf x^\star)$. Then $\mathbf H_f\succeq\mathbf H_1\succeq\mathbf H_2$, the MM maps $T_i$ have Jacobians $\mathbf J_i=\mathbf I-\mathbf H_i^{-1}\mathbf H_f$ at $\mathbf x^\star$ with spectral radii
\begin{equation}
\varrho(\mathbf J_i)=1-\min_{\mathbf d\ne0}\frac{\mathbf d^T(-\mathbf H_f)\mathbf d}{\mathbf d^T(-\mathbf H_i)\mathbf d}\in[0,1),
\end{equation}
and $\varrho(\mathbf J_1)\le\varrho(\mathbf J_2)$. In SISO power control, $\mathbf H_i$ is diagonal with entries $w_j\phi_{j}''(a_j)g_{jj}^2$ (the interference terms being linear in $\mathbf p$), so that the ordering of Proposition~\ref{prop:curv} transfers entry-wise to the surrogate Hessians.
\end{proposition}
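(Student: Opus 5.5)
The plan is to follow the standard Taylor-expansion argument for MM maps, in three steps: the Hessian ordering, the Jacobian formula, and the spectral-radius comparison.

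\emph{Hessian ordering.} Define $D_i(\mathbf x;\bar{\mathbf x})=f(\mathbf x)-S_i(\mathbf x;\bar{\mathbf x})$. By assumption $D_i\ge0$ and $D_i(\bar{\mathbf x};\bar{\mathbf x})=0$. So for fixed $\bar{\mathbf x}=\mathbf x^\star$, the function $\mathbf x\mapsto D_i(\mathbf x;\mathbf x^\star)$ has an interior minimum at $\mathbf x^\star$. Its gradient therefore vanishes there, so $\nabla_{\mathbf x}S_i(\mathbf x^\star;\mathbf x^\star)=\nabla f(\mathbf x^\star)=\mathbf 0$. Its Hessian is positive semidefinite, which gives $\mathbf H_f\succeq\mathbf H_i$. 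Applying the same argument to $S_1-S_2\ge0$, which also vanishes at the anchor, gives $\mathbf H_1\succeq\mathbf H_2$. Since $\mathbf H_f\prec0$, both $\mathbf H_i$ are negative definite and hence invertible.

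\emph{Jacobian of the MM map.} The map $T_i$ is defined by the first-order condition $\nabla_{\mathbf x}S_i(T_i(\bar{\mathbf x});\bar{\mathbf x})=\mathbf 0$, and $\mathbf x^\star$ is a fixed point. Since $\mathbf H_i$ is nonsingular, the implicit function theorem gives $\mathbf J_i=-\mathbf H_i^{-1}\mathbf M_i$, where $\mathbf M_i=\nabla_{\bar{\mathbf x}}\nabla_{\mathbf x}S_i$ is evaluated at $(\mathbf x^\star;\mathbf x^\star)$. To find $\mathbf M_i$, I use the tangency identity $\nabla_{\mathbf x}S_i(\bar{\mathbf x};\bar{\mathbf x})=\nabla f(\bar{\mathbf x})$, which holds for every $\bar{\mathbf x}$ by the first step applied at an arbitrary anchor. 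Differentiating it in $\bar{\mathbf x}$ gives $\mathbf H_i+\mathbf M_i=\mathbf H_f$. Hence $\mathbf J_i=\mathbf I-\mathbf H_i^{-1}\mathbf H_f$.

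This step needs mixed second derivatives of $S_i$ in $(\mathbf x,\bar{\mathbf x})$ and local smoothness of $T_i$, which I read into ``twice differentiable.'' I expect it to be the main technical point. I would state the needed regularity explicitly rather than try to derive it.

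\emph{Spectrum.} Set $\mathbf A=-\mathbf H_i\succ0$ and $\mathbf B=-\mathbf H_f\succ0$, so that $\mathbf J_i=\mathbf I-\mathbf A^{-1}\mathbf B$. The matrix $\mathbf A^{-1}\mathbf B$ is similar to $\mathbf A^{-1/2}\mathbf B\mathbf A^{-1/2}$, so its eigenvalues are real and equal the generalized Rayleigh quotients $\mathbf d^T\mathbf B\mathbf d/\mathbf d^T\mathbf A\mathbf d$. Because $\mathbf A\succeq\mathbf B$, these quotients lie in $(0,1]$. The eigenvalues of $\mathbf J_i$ therefore lie in $[0,1)$, and
\[
\varrho(\mathbf J_i)=1-\min_{\mathbf d\ne0}\frac{\mathbf d^T\mathbf B\mathbf d}{\mathbf d^T\mathbf A\mathbf d}.
\]
From $\mathbf H_1\succeq\mathbf H_2$ we get $-\mathbf H_2\succeq-\mathbf H_1$. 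So each quotient for $S_2$ is at most the corresponding quotient for $S_1$, the minimum for $S_2$ is no larger, and $\varrho(\mathbf J_1)\le\varrho(\mathbf J_2)$.

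\emph{SISO specialization.} In SISO power control the surrogate is $\sum_j w_j[\phi_j(g_{jj}p_j)-\lambda_j b_j(\mathbf p)]$, where each $b_j$ is affine in $\mathbf p$. The interference terms contribute nothing to the Hessian. Each $\phi_j$ depends only on $p_j$, so the Hessian is diagonal with entries $w_j\phi_j''g_{jj}^2$. Proposition~\ref{prop:curv} together with \eqref{eq:chain} then orders these entries, and the comparison above applies directly.
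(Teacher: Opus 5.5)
Your proposal is correct and follows the paper's proof in Appendix~\ref{app:rate} essentially line by line: the second-order condition at the interior minimum of $f-S_i(\cdot;\bar{\mathbf x})$ and of $S_1-S_2$, differentiation of the tangency identity $\nabla_{\mathbf x}S_i(\bar{\mathbf x};\bar{\mathbf x})=\nabla f(\bar{\mathbf x})$ combined with implicit differentiation of the first-order condition defining $T_i$, similarity of $\mathbf J_i$ to a symmetric matrix via $(-\mathbf H_i)^{1/2}$, and the Rayleigh-quotient comparison. The paper also leaves implicit the regularity you propose to state, namely mixed second derivatives and $T_i$ coinciding locally with the smooth solution of the first-order condition, so stating it as an explicit hypothesis only sharpens the same argument.
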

The proof is given in Appendix~\ref{app:rate}. Along the own-power direction of a link with SINR $\gamma$, the contraction factor $1-f''/\phi''$ is $\approx1-2/\gamma$ for FP/WMMSE, $\approx1-2/\log\gamma$ for SEFP/XMMSE and $1/(1+\gamma)^2$ for ICT. The first explains why closed-form FP needs thousands of iterations at 20--30\,dB, the second why SEFP/XMMSE need tens, and the third why the residual slowness of ICT, where present, comes from the shared interference directions rather than from the own-signal term.

\subsection{The Transformed Problem and Its Convergence}
Substituting \eqref{eq:ict} into \eqref{eq:wsr} gives the equivalent problem
\begin{equation}
\max_{\mathbf x\in\mathcal X,\,\boldsymbol\kappa\ge0}\ \sum_iw_i\big[\Om(\kappa_ia_i(\mathbf x))-\kappa_ib_i(\mathbf x)\big],
\label{eq:transformed}
\end{equation}
whose maximization over $\boldsymbol\kappa$ recovers \eqref{eq:wsr} exactly. Alternating between the closed-form update $\kappa_i=a_i/(b_i(a_i+b_i))$ and the maximization over $\mathbf x$ is an MM algorithm with the ICT surrogate. The degenerate anchor $a_i=0$ (zero price, flat surrogate) is excluded by $p_{\min}>0$, exactly as for FP and WMMSE.
\begin{proposition}[Convergence]
\label{prop:conv}
Let $\mathcal X$ be compact and let the $\mathbf x$ step maximize the ICT surrogate exactly (or, for the scheduling problem, maximize the separable surrogate over the schedule and the powers). Then the WSR is non-decreasing along the iterates and converges, and every limit point $\mathbf x^\star$ with $a_i(\mathbf x^\star)>0$ for all $i$ is a stationary point of \eqref{eq:wsr} (a KKT point when $\mathcal X$ is described by the box or power constraints of Section~\ref{sec:prelim}).
\end{proposition}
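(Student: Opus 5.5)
The plan is to treat the algorithm as a standard MM scheme and to import the convergence theory of \cite{Sun17}, checking its hypotheses for the ICT surrogate. Write $f(\mathbf x)=\sum_iw_i\log(1+a_i(\mathbf x)/b_i(\mathbf x))$ and $S(\mathbf x;\bar{\mathbf x})=\sum_iw_i[\Om(\bar\kappa_ia_i(\mathbf x))-\bar\kappa_ib_i(\mathbf x)]$ with $\bar\kappa_i=a_i(\bar{\mathbf x})/(b_i(\bar{\mathbf x})(a_i(\bar{\mathbf x})+b_i(\bar{\mathbf x})))$.

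\emph{Monotonicity.} By Theorem~\ref{thm:ict}, $S(\mathbf x;\bar{\mathbf x})\le f(\mathbf x)$ for all $\mathbf x$, with equality at $\mathbf x=\bar{\mathbf x}$ because $\bar\kappa_i$ is the maximizer in \eqref{eq:ict}. Hence $f(\mathbf x^{t+1})\ge S(\mathbf x^{t+1};\mathbf x^t)\ge S(\mathbf x^t;\mathbf x^t)=f(\mathbf x^t)$. The middle inequality only needs the $\mathbf x$ step not to decrease $S$, so it also covers the scheduling variant, in which the joint maximization runs over schedule and powers. Since $\mathcal X$ is compact and $b_i\ge\sigma^2>0$, $f$ is continuous and bounded above. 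The sequence $f(\mathbf x^t)$ is therefore non-decreasing and bounded, and it converges.

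\emph{Stationarity of limit points.} Take a subsequence $\mathbf x^{t_k}\to\mathbf x^\star$ with $a_i(\mathbf x^\star)>0$. The anchor-to-price map $\bar{\mathbf x}\mapsto\bar{\boldsymbol\kappa}$ is continuous wherever $a_i>0$, and $\Om$ is continuous. So $S(\cdot;\cdot)$ is jointly continuous near $(\mathbf x^\star,\mathbf x^\star)$. The usual argument then goes as follows. For any $\mathbf x\in\mathcal X$,
\begin{equation*}
S(\mathbf x;\mathbf x^{t_k})\le S(\mathbf x^{t_k+1};\mathbf x^{t_k})\le f(\mathbf x^{t_k+1})\le f(\mathbf x^{t_{k+1}}).
\end{equation*}
Letting $k\to\infty$ and using convergence of $f(\mathbf x^t)$ gives $S(\mathbf x;\mathbf x^\star)\le f(\mathbf x^\star)=S(\mathbf x^\star;\mathbf x^\star)$. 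Thus $\mathbf x^\star$ maximizes $S(\cdot;\mathbf x^\star)$ over $\mathcal X$.

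The next step is first-order agreement of surrogate and objective. The function $\mathbf x\mapsto f(\mathbf x)-S(\mathbf x;\mathbf x^\star)$ is nonnegative and vanishes at $\mathbf x^\star$, so where both are differentiable their directional derivatives coincide at $\mathbf x^\star$ along feasible directions. This can also be checked directly. By Proposition~\ref{prop:omega}(a),(d), $\partial_a\Om(\kappa^\star a)=\kappa^\star/\rhof=\kappa^\star b/a=1/(a+b)=\partial_a\log(1+a/b)$, and the $b$-derivative is $-\kappa^\star$ on both sides. Differentiability of $\Om$ at $\kappa^\star a_i>0$ is exactly where $a_i(\mathbf x^\star)>0$ is needed, since $\Om'(u)=1/\rhof(u)$ blows up at $u=0$. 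For MISO the maps $a_i,b_i$ are smooth in $\Re\mathbf v,\Im\mathbf v$; for SISO they are linear. Optimality of $\mathbf x^\star$ for $S(\cdot;\mathbf x^\star)$ gives $\nabla_{\mathbf x}S(\mathbf x^\star;\mathbf x^\star)^T(\mathbf x-\mathbf x^\star)\le0$ for all $\mathbf x\in\mathcal X$, since $\mathcal X$ is convex. Transferring the gradient to $f$ gives stationarity. For box or norm constraints the linear constraint qualification, respectively Slater's condition, holds, so stationarity is equivalent to KKT.

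\emph{Main obstacle.} The main obstacle is the scheduling case, where $\mathcal X$ contains discrete schedules and is not convex. There I would define stationarity as stationarity in the powers for the limiting schedule. Because the schedule set is finite, the schedule is eventually constant along the subsequence, and the argument above can be repeated with powers only. A second delicate point is keeping the anchors away from $a_i=0$ along the subsequence, so that the prices stay continuous. This is guaranteed in a neighbourhood of $\mathbf x^\star$ by the hypothesis $a_i(\mathbf x^\star)>0$, and globally by $p_{\min}>0$.
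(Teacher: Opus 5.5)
Your proposal is correct and follows essentially the paper's route: monotonicity from the minorization and tightness of Theorem~\ref{thm:ict}, then gradient matching at the anchor via Proposition~\ref{prop:omega}(a),(d), which is exactly what the paper feeds into the MM convergence theorem of \cite{Sun17,Razaviyayn13}. The difference is that you write out that theorem's limit argument inline, and it needs continuity of $\bar{\mathbf x}\mapsto S(\mathbf x;\bar{\mathbf x})$ at $\mathbf x^\star$ for every fixed $\mathbf x$, not just near $(\mathbf x^\star,\mathbf x^\star)$; this holds because $b_i\ge\sigma^2>0$ makes the prices continuous. For scheduling, you pass to a subsequence with constant schedule and prove power-stationarity for that schedule, which is a more explicit substitute for the paper's appeal to \cite[Prop.~1]{ShenYuII}.
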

\begin{IEEEproof}
By Theorem~\ref{thm:ict}, the surrogate $S(\mathbf x;\bar{\mathbf x})=\sum_iw_i[\Om(\kappa_i^\star(\bar{\mathbf x})a_i(\mathbf x))-\kappa_i^\star(\bar{\mathbf x})b_i(\mathbf x)]$ satisfies $S\le f$ with equality at $\bar{\mathbf x}$, so that $f(\mathbf x^{t+1})\ge S(\mathbf x^{t+1};\mathbf x^t)\ge S(\mathbf x^t;\mathbf x^t)=f(\mathbf x^t)$; since $f$ is bounded on $\mathcal X$, $f(\mathbf x^t)$ converges. $S$ is continuously differentiable in $(\mathbf x,\bar{\mathbf x})$ wherever $a_i>0$, with $\nabla_{\mathbf x}S(\bar{\mathbf x};\bar{\mathbf x})=\nabla f(\bar{\mathbf x})$ by Proposition~\ref{prop:omega}(d). These are the hypotheses of the MM convergence theorem of \cite[Thm.~1]{Sun17} and \cite[Thm.~2]{Razaviyayn13}, which gives the stationarity of the limit points; for the scheduling problem, the argument of \cite[Prop.~1]{ShenYuII} applies verbatim with the ICT surrogate, since the schedule is updated by an exact maximization of the separable surrogate.
\end{IEEEproof}

\section{Power Control and Scheduling in SISO Networks}
\label{sec:siso}
\subsection{Power Control: Proposed Approach}
For fixed $\boldsymbol\kappa$, the surrogate of \eqref{eq:transformed} in the SISO power-control model is $\sum_iw_i\Om(\kappa_ig_{ii}p_i)-\sum_i\mu_ip_i+\mathrm{const}$, with $\mu_i=\sum_{j\ne i}w_j\kappa_jg_{ji}$ the total price that link $i$ pays for the interference it causes; it is concave and separable. Setting the derivative to zero, $w_i\kappa_ig_{ii}/\rhof(\kappa_ig_{ii}p_i)=\mu_i$, and using $\rhof^2/(1+\rhof)=u$, we obtain
\begin{equation}
\rho_i=\frac{w_i\kappa_ig_{ii}}{\mu_i},\qquad p_i=\Big[\frac{\rho_i^2}{(1+\rho_i)\kappa_ig_{ii}}\Big]_{p_{\min}}^{P_{\max}}.
\label{eq:pc}
\end{equation}
The resulting procedure is summarized in Algorithm~\ref{alg:pc}. Each iteration costs $O(K^2)$, exactly as closed-form FP \cite{ShenYuI} and SEFP \cite{Jiao26}, whose updates have the same structure with $\rho_i^2/(1+\rho_i)$ replaced by the corresponding QT expressions. The quantity $\rho_i$ has a direct interpretation: it is the SINR that link $i$ would attain if it responded optimally to the current prices, and $w_i\log(1+\rho_i)$ is the unclipped value of its subproblem. In power control, the ICT and FP updates differ in the value they produce rather than in its sign; the sign of the own-signal term matters in scheduling, where it decides whether a cell transmits at all (Section~\ref{sec:sched}).

\begin{algorithm}[t]
\caption{ICT power control (ICT-PC)}
\label{alg:pc}
\begin{algorithmic}[1]
\STATE Initialize $p_i\in[p_{\min},P_{\max}]$.
\REPEAT
\STATE $a_i\leftarrow g_{ii}p_i$, $b_i\leftarrow\sum_{j\ne i}g_{ij}p_j+\sigma^2$, $\kappa_i\leftarrow a_i/(b_i(a_i+b_i))$.
\STATE $\mu_i\leftarrow\sum_{j\ne i}w_j\kappa_jg_{ji}$, $\rho_i\leftarrow w_i\kappa_ig_{ii}/\mu_i$.
\STATE $p_i\leftarrow[\rho_i^2/((1+\rho_i)\kappa_ig_{ii})]_{p_{\min}}^{P_{\max}}$ for all $i$.
\UNTIL{the WSR has converged}
\end{algorithmic}
\end{algorithm}

\subsection{Power Control: Simulation Results}
\label{sec:num-pc}
\emph{Simulation methodology.} All simulations in this paper, except those of Section~\ref{sec:slqp}, use the propagation model of \cite{ShenYuI,ShenYuII,Jiao26}: 7 wrap-around hexagonal cells, path loss $128.1+37.6\log_{10}d$\,dB ($d$ in km), 8\,dB log-normal shadowing, Rayleigh fading, and noise $-169$\,dBm/Hz over 10\,MHz. The downlink power-control and beamforming settings use the 3GPP Case~1 layout \cite{TR36814}---inter-site distance (ISD) 500\,m and 46\,dBm per BS---and the D2D and uplink scheduling settings the 0.8\,km ISD and 23\,dBm of \cite{ShenYuII,Jiao26,Khan24}. The methods under comparison are always run on the same channel realizations (drops) from the same initial point under the same stopping rule, and all reported statistics are paired across drops. All methods use the power floor $p_{\min}=10^{-4}P_{\max}$ in power control, beamforming and percentile optimization, whereas the scheduling problems have $p_k\in[0,P_{\max}]$ as in \cite{ShenYuII,Jiao26}. Every identity of Sections~\ref{sec:ict}--\ref{sec:slqp} is verified numerically by the accompanying code: the ICT identity to \chkIdentity{}, the minorization for random $\kappa$ to \chkMinor{}, the dominance chain and $S_{\mathrm{WMMSE}}=S_{\LQ}$, $S_{\mathrm{XMMSE}}=S_{\RQ}$ to \chkWMMSE{} and \chkXMMSE{} on $2\cdot10^5$ random points, the curvature formulas \eqref{eq:curv} and the derivatives of Proposition~\ref{prop:dual} by finite differences; the beamformer of Proposition~\ref{prop:bf} is never exceeded by a brute-force search (gap \chkBfGap{}); and the spectral radius of Proposition~\ref{prop:rate} matches the empirical contraction factors on \chkRateN{} random 7-link instances to within \chkRateErr{} (mean factors \chkRateFP{}, \chkRateSEFP{} and \chkRateICT{} for FP, SEFP and ICT; the ICT factor is set by the interference directions, which all three surrogates flatten identically).

\emph{Setting.} We consider two settings with equal weights: (A) one downlink link per cell in the Case~1 layout ($K=7$, as in \cite[Fig.~3]{ShenYuI} but at 500\,m and 46\,dBm), and (B) $K=28$ device-to-device links dropped uniformly in the 0.8\,km layout with link lengths of 20--100\,m and 23\,dBm, an interference-limited regime. The initial powers are drawn uniformly at random in $[p_{\min},P_{\max}]$, as in \cite{Khan24}; \pcNdrops{} drops are simulated, and all methods run for \pcT{} iterations. The median SINR of the ICT solution is \pcCellSINR{}\,dB in (A) and \pcDenseSINR{}\,dB in (B), with 90th percentiles of \pcCellSINRp{} and \pcDenseSINRp{}\,dB. Table~\ref{tab:pc} and Fig.~\ref{fig:pc} report the results.

\begin{table*}[t]
\centering\small
\caption{SISO power control (\pcNdrops{} drops, \pcT{} iterations, random initial powers): median iterations to 99\% / 99.9\% of the method's own final value; mean final value relative to the best of the three on the same drop; drops within 0.1\% of the best; value at 50 iterations relative to the best final value; paired final-value comparison with ICT (mean, drops where ICT is better / worse by $>0.1\%$).}
\label{tab:pc}
\begin{tabular}{llcccrrc}
\toprule
& method & it.\ 99\% & it.\ 99.9\% & final vs best & within 0.1\% & at 50 it. & ICT vs method\\
\midrule
\multirow{3}{*}{(A) $K=7$}
& FP (=WMMSE) & \pcCellFPitNN & \pcCellFPitNNN & \pcCellFPrelBest\% & \pcCellFPwithin\% & \pcCellFPatFifty\% & \pcCellICTvsFPmean\% (\pcCellICTvsFPwins\% / \pcCellICTvsFPlosses\%)\\
& SEFP (=XMMSE) & \pcCellSEFPitNN & \pcCellSEFPitNNN & \pcCellSEFPrelBest\% & \pcCellSEFPwithin\% & \pcCellSEFPatFifty\% & \pcCellICTvsSEFPmean\% (\pcCellICTvsSEFPwins\% / \pcCellICTvsSEFPlosses\%)\\
& \textbf{ICT} & \textbf{\pcCellICTitNN} & \textbf{\pcCellICTitNNN} & \pcCellICTrelBest\% & \pcCellICTwithin\% & \pcCellICTatFifty\% & --\\
\midrule
\multirow{3}{*}{(B) $K=28$}
& FP (=WMMSE) & \pcDenseFPitNN & \pcDenseFPitNNN & \pcDenseFPrelBest\% & \pcDenseFPwithin\% & \pcDenseFPatFifty\% & \pcDenseICTvsFPmean\% (\pcDenseICTvsFPwins\% / \pcDenseICTvsFPlosses\%)\\
& SEFP (=XMMSE) & \pcDenseSEFPitNN & \pcDenseSEFPitNNN & \pcDenseSEFPrelBest\% & \pcDenseSEFPwithin\% & \pcDenseSEFPatFifty\% & \pcDenseICTvsSEFPmean\% (\pcDenseICTvsSEFPwins\% / \pcDenseICTvsSEFPlosses\%)\\
& \textbf{ICT} & \textbf{\pcDenseICTitNN} & \textbf{\pcDenseICTitNNN} & \pcDenseICTrelBest\% & \pcDenseICTwithin\% & \pcDenseICTatFifty\% & --\\
\bottomrule
\end{tabular}
\end{table*}

\begin{figure}[t]
\centering
\includegraphics[width=\columnwidth]{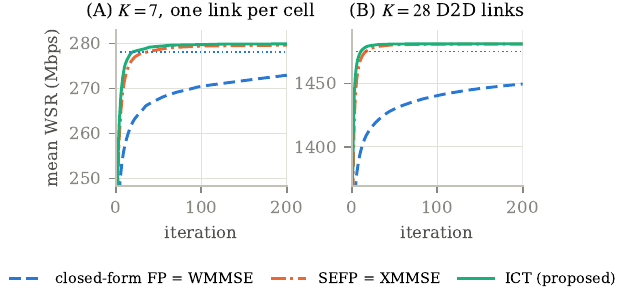}
\caption{SISO power control: mean WSR over the first 200 of the \pcT{} iterations in settings (A) and (B); the dotted line is the mean final value of FP after \pcT{} iterations.}
\label{fig:pc}
\end{figure}

\emph{Discussion.} We make three observations. First, the iteration counts follow Proposition~\ref{prop:rate}: closed-form FP needs a median of \pcCellFPitNNN{} (A) and \pcDenseFPitNNN{} (B) iterations to reach 99.9\% of its own final value, SEFP \pcCellSEFPitNNN{} and \pcDenseSEFPitNNN{}, and ICT \pcCellICTitNNN{} and \pcDenseICTitNNN{}; FP is still increasing at the end of the budget in \pcCellFPrising\% and \pcDenseFPrising\% of the drops, and its median projected-gradient residual $\|\mathbf p-\Pi[\mathbf p+P_{\max}\nabla f(\mathbf p)]\|_\infty/P_{\max}$ after \pcT{} iterations is \pcCellFPkkt{} in (A), against values \pcCellSEFPkkt{} for SEFP and ICT. The regime matters: sweeping the ISD of setting (A) at 46\,dBm (\isdN{} drops per point), FP needs a median of \isdFPitNearLo{}--\isdFPitNearHi{} iterations for ISDs up to 1\,km, \isdFPitTwo{} at 2\,km, \isdFPitFive{} at 5\,km and \isdFPitTen{} at 10\,km, where the network is noise-limited and all three methods reach the same point (within 0.1\% in at least \isdWithinTen\% of the drops); ICT needs \isdICTitMax{} to \isdICTitMin{}. Since scaling the ISD by $s$ is equivalent to scaling the transmit power by $s^{-3.76}$, the same holds for the power budget. Second, the methods do not always reach the same stationary point, and ICT reaches the better one on average: after \pcT{} iterations its final WSR exceeds that of FP by \pcCellICTvsFPmean\% in (A) and \pcDenseICTvsFPmean\% in (B), being higher by more than 0.1\% in \pcCellICTvsFPwins\% and \pcDenseICTvsFPwins\% of the drops and lower in only \pcCellICTvsFPlosses\% and \pcDenseICTvsFPlosses\% (part unfinished convergence, part worse points). ICT and SEFP, whose surrogates differ only by the curvature factor of Proposition~\ref{prop:curv}, reach the same points in nearly all drops (ICT ahead by \pcCellICTvsSEFPmean\% and \pcDenseICTvsSEFPmean\%), so that in equal-weight power control the advantage of ICT over SEFP is speed: half the iterations. Third, at a budget of 50 iterations the ordering is strict in both settings: ICT is within \pcCellICTatFifty\% / \pcDenseICTatFifty\% of the best final value, SEFP within \pcCellSEFPatFifty\% / \pcDenseSEFPatFifty\%, and FP within \pcCellFPatFifty\% / \pcDenseFPatFifty\%.

\subsection{Joint Uplink Scheduling and Power Control: Proposed Approach}
\label{sec:sched}
Shen and Yu \cite{ShenYuII} showed that the LDT+QT surrogate, after fixing the QT variables $y_i$, separates across cells into a sum over scheduled users of a reward minus the interference prices paid to the other cells, so that each BS can score its candidate users independently; Jiao \emph{et al.} \cite{Jiao26} retain this structure with RIT-induced coefficients. The ICT surrogate separates in the same way. Let $\lambda_i=w_{s_i}\kappa_i$ be the current price of cell $i$ (zero if the cell is off), and for a candidate $k\in\mathcal K_i$ let $\mu_k=\sum_{j\ne i}\lambda_jg_{j,k}$ be the price it would pay. Replacing the own-signal term of cell $i$ by $w_k\Om(\lambda_ig_{i,k}p_k/w_k)$ yields a valid, separable minorizer that is tight at the current schedule (Appendix~\ref{app:sched}); maximizing it over $p_k$ gives the candidate power and score
\begin{equation}
\begin{aligned}
\rho_k&=\frac{\lambda_ig_{i,k}}{\mu_k},\qquad
p_k=\min\Big\{P_{\max},\frac{w_k\rho_k^2}{(1+\rho_k)\lambda_ig_{i,k}}\Big\},\\
\Xi_i(k)&=w_k\log(1+\rho_k)
\end{aligned}
\label{eq:score}
\end{equation}
when the power is not clipped, and $\Xi_i(k)=w_k\Om(\lambda_ig_{i,k}P_{\max}/w_k)-\mu_kP_{\max}$ otherwise. Each BS then schedules $s_i=\argmax_{k\in\mathcal K_i}\Xi_i(k)$, or $\varnothing$ if all scores are non-positive.

\begin{proposition}[Structure of the three scheduling metrics]
\label{prop:sched}
With $\rho_k$ as in \eqref{eq:score}, the unclipped candidate scores of closed-form FP \cite[eqs.~(31)--(33)]{ShenYuII}, SEFP \cite[eqs.~(55)--(59)]{Jiao26} and ICT are
$\Xi_i(k)=w_k\,\ell(\rho_k;\gamma_i)$ with $\ell=\ell_{\mathrm{LDT}}$, $\ell_{\mathrm{RIT}}$ and $\log(1+\cdot)$, respectively, where $\gamma_i$ is the SINR of the user currently scheduled in cell $i$. Hence, the three methods charge the same prices $\lambda_i$ and differ only in $\ell$, and since $\ell_{\mathrm{LDT}}\le\ell_{\mathrm{RIT}}\le\log(1+\cdot)$, Proposition~\ref{prop:ascent} applies to each scheduling step.
\end{proposition}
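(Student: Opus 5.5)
The plan is to compute each method's candidate score explicitly in the per-cell separable form and show that it collapses to $w_k\,\ell(\rho_k;\gamma_i)$ once the power is optimized.

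\textbf{Common structure.} By Proposition~\ref{prop:family}, all three single-ratio surrogates have the form $\phi_\bullet(a)-\kappa_i^\star b$ with the same price $\kappa_i^\star$. Weighting by $w_{s_i}$, the interference that candidate $k$ creates at cell $j$ is charged at $\lambda_j g_{j,k}$ in every method. Hence the separated per-cell subproblem for candidate $k$ is
\[
\max_{0\le p_k\le P_{\max}}\; w_k\,\psi_\bullet(g_{i,k}p_k)-\mu_k p_k ,
\]
where $\psi_\bullet$ is the method's own-signal function, re-expressed for the candidate exactly as in \cite{ShenYuII,Jiao26} and in Appendix~\ref{app:sched}. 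This immediately gives the claim that the prices $\lambda_i$ are shared. What remains is to evaluate the unclipped maximum in each case.

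\textbf{ICT.} The own-signal term is $w_k\Om(\lambda_i g_{i,k}p/w_k)$. The stationarity condition uses $\Om'=1/\rhof$ from Proposition~\ref{prop:omega}(a). Setting $u=\lambda_i g_{i,k}p/w_k$, it reads $\lambda_i g_{i,k}/\rhof(u)=\mu_k$, so $\rhof(u)=\rho_k$. Substituting, and using $u=\rho_k^2/(1+\rho_k)$, the value is
\[
w_k\big[\Om(u)-u\,\rho_k\big]=w_k\Big[\log(1+\rho_k)+\frac{\rho_k}{1+\rho_k}-\frac{\rho_k^2}{1+\rho_k}\cdot\frac{1}{\rho_k}\Big]=w_k\log(1+\rho_k).
\]
This recovers \eqref{eq:score}.

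\textbf{FP and SEFP.} The QT-based surrogate has the form $w_k[\alpha+2\beta\sqrt{p}-\delta p]-\mu_k p$. Its unclipped maximum is $w_k\alpha+w_k^2\beta^2/(w_k\delta+\mu_k)$. For FP, I will substitute the coefficients of \cite[eqs.~(31)--(33)]{ShenYuII} expressed through $y_i$ and the price identity $(y^\star)^2=\kappa^\star$ from the proof of Proposition~\ref{prop:family}. The goal is to rewrite the value as $w_k[\log(1+\gamma_i)-\gamma_i+(1+\gamma_i)\rho_k/(1+\rho_k)]$. The key algebraic point is that the QT maximum of $2y\sqrt A-y^2(A+B)$ over $A=g_{i,k}p$, with the linear price, is a ratio $\rho/(1+\rho)$ whose argument is exactly $\lambda_i g_{i,k}/\mu_k$. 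For SEFP, the same computation applies with $\alpha=0$ and the RIT coefficients $c(\gamma_i)$, $b(\gamma_i)$, $\gamma_i^2$ from \eqref{eq:rit}, \eqref{eq:rq}. It should yield $c\rho_k/(\gamma_i^2+b\rho_k)$, i.e.\ $\ell_{\mathrm{RIT}}(\rho_k;\gamma_i)$.

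\textbf{Main obstacle.} The hard part is reconciling notation: confirming that the denominators $w_k\delta+\mu_k$ in \cite{ShenYuII,Jiao26} reduce to the form that makes the effective SINR equal to $\rho_k$. This relies on $\lambda_i=w_{s_i}\kappa_i^\star$ and on the candidate's own-signal coefficient being the anchor coefficient rescaled by $w_k$. I would verify this first for $k=s_i$ at the anchor, where all three scores must equal $w\log(1+\gamma_i)$.

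\textbf{Conclusion.} The ordering $\ell_{\mathrm{LDT}}\le\ell_{\mathrm{RIT}}\le\log(1+\cdot)$ follows from \eqref{eq:ldt}, \eqref{eq:rit} and Proposition~\ref{prop:family}. Since each method's per-cell surrogate with the same prices is pointwise ordered, the separable WSR surrogates satisfy $S_1\ge S_2$, both tight at the current schedule. Proposition~\ref{prop:ascent} then applies with $\mathcal X$ the joint schedule–power set.
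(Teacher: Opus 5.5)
Your plan matches the paper's proof in Appendix~\ref{app:sched}: maximize each candidate's per-cell surrogate in closed form, then close the FP and SEFP cases with $y_i^2=\lambda_i$ and $\gamma_i^2y_i^2=\lambda_i$. These give the own-signal linear coefficient $w_k\delta=\lambda_ig_{i,k}$ (resp.\ $b(\gamma_i)\lambda_ig_{i,k}/\gamma_i^2$), independent of $w_k$, so $\rho_k=\lambda_ig_{i,k}/\mu_k$ appears directly and your ``main obstacle'' is only this substitution. One slip: the ICT value is $w_k[\Om(u)-u/\rho_k]$, not $w_k[\Om(u)-u\rho_k]$, as your own next expression correctly uses.
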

The proposition is proved in Appendix~\ref{app:sched} by writing $y_i^2=\lambda_i$ for FP and $\alpha_i^2y_i^2=\lambda_i$ for SEFP. The ICT score is the exact best response of candidate $k$ to the current prices. In contrast, the LDT score $w_k[\log(1+\gamma_i)-\gamma_i+(1+\gamma_i)\rho_k/(1+\rho_k)]$ is bounded above by $w_k[\log(1+\gamma_i)+1]$ and is \emph{negative} whenever $\rho_k<(\gamma_i-\log(1+\gamma_i))/(1+\log(1+\gamma_i))$ (about $17$ at $\gamma_i=20$\,dB): a candidate with a large weight but a modest $\rho_k$ cannot be scheduled by FP, and if the condition holds for every candidate of a cell, FP switches the cell off for the rest of the slot. ICT and SEFP never switch a cell off, since their scores are positive for every $\rho_k>0$. We show in Section~\ref{sec:num-sched} that this is the mechanism behind the utility deficit of FP under proportional fairness. The ICT algorithm is Algorithm~1 of \cite{ShenYuII} with \eqref{eq:score} in place of its eqs.~(30)--(33); its complexity of $O(|\mathcal K||\mathcal B|)$ per iteration and its monotone convergence (Proposition~1 of \cite{ShenYuII}) are unchanged.

\subsection{Joint Uplink Scheduling and Power Control: Simulation Results}
\label{sec:num-sched}
We reproduce the setting of \cite[Sec.~IV-D]{ShenYuII} and \cite[Sec.~V]{Jiao26}: 84 single-antenna users dropped uniformly, associated with the strongest BS (every BS has at least one user), $P_{\max}=23$\,dBm ($-47$\,dBm/Hz), fading redrawn every slot, at most 50 iterations per slot, stopping when the WSR changes by less than $10^{-10}$. In every slot all three methods start from the same point---each cell schedules its highest-weight user at full power, the initialization that \cite{ShenYuII} uses for its MIMO algorithm---and the schedule and powers at the end of the slot determine the instantaneous rates. Under proportional fairness (PF), the weights are $w_k=1/\max(\bar R_k,0.01\mbps)$ with $\bar R_k$ the cumulative average rate of user $k$, and the utility is $\sum_k\log\bar R_k(T)$ over $T=\pfT$ slots (users never scheduled are floored at $10^{-12}$ as in \cite{Jiao26}). Absolute utilities depend on these conventions, which \cite{ShenYuII,Jiao26} do not fully specify; the paired differences are the meaningful quantity, and we note that our SEFP$-$FP difference reproduces the one reported in \cite{Jiao26}.

\begin{table}[t]
\centering\scriptsize\setlength{\tabcolsep}{4pt}
\caption{Uplink joint scheduling and power control. PF: log-utility, \pfNdrops{} drops $\times$ \pfT{} slots, with paired differences, 95\% confidence intervals and number of drops won. EW-SR: equal-weight sum rate over \ewN{} single-slot problems; NRP-WSR: normalized random-priority WSR over \nrpN{} problems (500 drops $\times$ 20 weight vectors). BCP: fraction of problems in which the method attains the best value.}
\label{tab:sched}
\begin{tabular}{lrrr}
\toprule
& FP \cite{ShenYuII} & SEFP \cite{Jiao26} & \textbf{ICT}\\
\midrule
PF utility (mean) & \pfFP & \pfSEFP & \textbf{\pfICT}\\
PF utility (std) & \pfStdFP & \pfStdSEFP & \pfStdICT\\
iterations per slot & \pfItersFP & \pfItersSEFP & \pfItersICT\\
slots at the 50-iteration cap & \pfCapFP\% & \pfCapSEFP\% & \pfCapICT\%\\
cells switched off (per slot) & \pfOffFP\% & \pfOffSEFP\% & \pfOffICT\%\\
5th-percentile user rate (Mbps) & \pfRateFP & \pfRateSEFP & \pfRateICT\\
\midrule
EW-SR (Mbps) & \ewFP & \ewSEFP & \textbf{\ewICT}\\
EW-SR BCP & \ewBestFP\% & \ewBestSEFP\% & \ewBestICT\%\\
NRP-WSR & \nrpFP & \nrpSEFP & \textbf{\nrpICT}\\
NRP-WSR BCP & \nrpBestFP\% & \nrpBestSEFP\% & \nrpBestICT\%\\
\midrule
\multicolumn{4}{l}{\emph{Paired differences (PF utility)}}\\
\multicolumn{3}{l}{SEFP $-$ FP} & \pfSEFPminusFP{} [\pfSEFPminusFPlo, \pfSEFPminusFPhi], \pfSEFPminusFPwins/\pfNdrops\\
\multicolumn{3}{l}{ICT $-$ FP} & \pfICTminusFP{} [\pfICTminusFPlo, \pfICTminusFPhi], \pfICTminusFPwins/\pfNdrops\\
\multicolumn{3}{l}{ICT $-$ SEFP} & \pfICTminusSEFP{} [\pfICTminusSEFPlo, \pfICTminusSEFPhi], \pfICTminusSEFPwins/\pfNdrops\\
\multicolumn{4}{l}{\emph{Paired differences (single-slot problems, mean \% and win/tie/loss \%)}}\\
\multicolumn{3}{l}{EW-SR: ICT vs FP} & +\ewICTvsFP\% \quad \ewICTvsFPwins/\ewICTvsFPties/\ewICTvsFPlosses\\
\multicolumn{3}{l}{EW-SR: ICT vs SEFP} & +\ewICTvsSEFP\% \quad \ewICTvsSEFPwins/\ewICTvsSEFPties/\ewICTvsSEFPlosses\\
\multicolumn{3}{l}{NRP-WSR: ICT vs FP} & +\nrpICTvsFP\% \quad \nrpICTvsFPwins/\nrpICTvsFPties/\nrpICTvsFPlosses\\
\multicolumn{3}{l}{NRP-WSR: ICT vs SEFP} & +\nrpICTvsSEFP\% \quad \nrpICTvsSEFPwins/\nrpICTvsSEFPties/\nrpICTvsSEFPlosses\\
\bottomrule
\end{tabular}
\end{table}

\emph{Proportional fairness.} Table~\ref{tab:sched} gives the main result. The SEFP$-$FP gain of \pfSEFPminusFP{} (\pfSEFPminusFPwins{} of \pfNdrops{} drops) replicates the $+2.996$ ($60/60$) reported in \cite[Table~II]{Jiao26}. ICT adds a further \pfICTminusSEFP{} over SEFP (\pfICTminusSEFPwins/\pfNdrops{} drops, 95\% CI [\pfICTminusSEFPlo, \pfICTminusSEFPhi]) and \pfICTminusFP{} over FP (\pfICTminusFPwins/\pfNdrops{}); that is, the gain of the tightest member of the family over SEFP is as large as the gain of SEFP over FP. ICT also uses the fewest iterations (\pfItersICT{} per slot against \pfItersSEFP{} and \pfItersFP{}, reaching the 50-iteration cap in \pfCapICT\% of the slots against \pfCapSEFP\% and \pfCapFP\%). The row ``cells switched off'' identifies the mechanism of Proposition~\ref{prop:sched}: FP switches off a cell in \pfOffFP\% of the slot decisions, whereas SEFP and ICT essentially never do. The three methods reach different PF operating points (median, mean, geometric-mean and 5th-percentile rates in Table~\ref{tab:sched}); the log-utility that all three maximize follows the geometric mean, \pfGeoICT{}\,Mbps under ICT against \pfGeoSEFP{} and \pfGeoFP{}.

\emph{Is the gain in the budget or in the points reached?} Repeating the PF experiment on \budN{} drops with a cap of 500 iterations, giving FP ten times the budget recovers only \budFPgain{} of its deficit, SEFP and ICT are already converged at 50 (changes of \budSEFPgain{} and \budICTgain{}), and ICT at 50 iterations beats FP at 500 by \budICTfiftyMinusFPfiveh{} (\budICTfiftyMinusFPfivehWins/\budN{}) and SEFP at 500 by \budICTfiftyMinusSEFPfiveh{}: the gain lies in the stationary points reached. On single-slot equal-weight problems (\convN{} drops, 1000 iterations), the median number of iterations to reach 99.9\% of the method's own gain is \convFPitNNN{} (FP), \convSEFPitNNN{} (SEFP) and \convICTitNNN{} (ICT), and the 1000-iteration limit of FP is \convFPlimit\% below the best.

\emph{Equal and random weights.} With equal weights (single-slot problems), the three methods reach nearly the same points: ICT attains the best value in \ewBestICT\% of \ewN{} problems and is \ewICTvsFP\% above FP and \ewICTvsSEFP\% above SEFP on average (Table~\ref{tab:sched}); here the difference between ICT and SEFP is mainly speed. With normalized random priorities (weights random per cell with per-cell sums equal to $|\mathcal K|/|\mathcal B|=12$ as in \cite{Jiao26}), where the weights differ across candidates, the gaps open: \nrpICTvsFP\% over FP and \nrpICTvsSEFP\% over SEFP, with ICT best in \nrpBestICT\% of \nrpN{} problems. This is consistent with Proposition~\ref{prop:sched}: the three metrics differ in how a candidate's weight interacts with the bounded own-signal term, which matters only when the weights are heterogeneous. With Fig.~\ref{fig:snr}, this locates where the exact score pays most: many candidates per cell with widely different weights and potential SINRs.

\emph{Sensitivity to the initial point.} On \initN{} drops, initializing each cell with the user of largest weighted link gain $w_kg_{i,k}$ instead of the highest-weight user costs FP \initFPdrop{} of utility (it then switches off \initOffWGFP\% of the cells per slot), against \initSEFPdrop{} for SEFP and \initICTdrop{} for ICT: the unbounded, positive ICT score always lets the schedule recover from a poor initial anchor within the slot; the bounded LDT score does not.

\emph{SNR sweep and load.} Following \cite[Sec.~V-C]{Jiao26}, Fig.~\ref{fig:snr} varies the nominal SNR at 0.4\,km from \snrLo{} to \snrHi{}\,dB by changing $P_{\max}$, with \snrN{} drops $\times$ \pfT{} slots per point, for the 12 users per cell of \cite{ShenYuII,Jiao26} and for 24. ICT attains the highest utility at every point and in every drop (\snrMinWins{}/\snrN{} paired wins over SEFP at every point and load). Expressed as the gain in geometric-mean user rate, $\exp(\Delta U/|\mathcal K|)-1$, its advantage over SEFP is largest in noise-limited, heavily loaded cells: \snrGmSEFPlo{} at \snrLo{}\,dB against \snrGmSEFPhi{} at \snrHi{}\,dB with 12 users per cell, and \snrGmSEFPloB{} against \snrGmSEFPhiB{} with 24; over FP it is \snrGmFPlo{} to \snrGmFPhi{} and \snrGmFPloB{} to \snrGmFPhiB{}. The mechanism is that of Proposition~\ref{prop:sched}: the RIT score is anchored at the SINR $\gamma_i$ of the user currently scheduled and saturates at $c(\gamma_i)/b(\gamma_i)$---about 2 nats when that user is at low SINR---so that it under-rewards candidates whose best-response SINR is far from $\gamma_i$, which are most numerous in noise-limited, heavily loaded cells; the exact ICT score has no cap. Proposition~\ref{prop:curv} concerns speed rather than the points reached, and its SNR dependence shows in the iteration counts: \snrItFPhi{}, \snrItSEFPhi{} and \snrItICThi{} per slot (FP, SEFP, ICT) at \snrHi{}\,dB against \snrItFPlo{}, \snrItSEFPlo{} and \snrItICTlo{} at \snrLo{}\,dB.

\begin{figure}[t]
\centering
\includegraphics[width=\columnwidth]{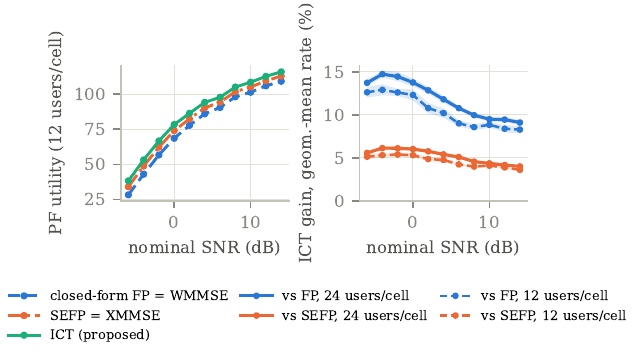}
\caption{PF utility versus nominal SNR with 12 users per cell (left) and paired gain of ICT over FP and SEFP in geometric-mean user rate, with 95\% confidence bands, for 12 and 24 users per cell (right); \snrN{} drops $\times$ \pfT{} slots per point.}
\label{fig:snr}
\end{figure}

\section{Multi-Cell MISO Beamforming}
\label{sec:miso}
\subsection{Proposed Approach}
For MISO beamforming, $a_i=|\mathbf h_{ii}^H\mathbf v_i|^2$ is quadratic and $b_i$ is a sum of quadratics in the other beamformers, so that the ICT surrogate at the anchor $\{\bar{\mathbf v}_j\}$ is
\begin{equation}
\sum_jw_j\Om\big(\kappa_j|\mathbf h_{jj}^H\mathbf v_j|^2\big)-\sum_j\mathbf v_j^H\mathbf C_j\mathbf v_j+\mathrm{const},
\label{eq:bfsurr}
\end{equation}
with $\mathbf C_j=\sum_{i\ne j}w_i\kappa_i\mathbf h_{ij}\mathbf h_{ij}^H$, which separates across BSs. The per-BS problem $\max_{\|\mathbf v\|^2\le P}w\Om(\kappa|\mathbf h^H\mathbf v|^2)-\mathbf v^H\mathbf C\mathbf v$ is not concave in $\mathbf v$; nevertheless, it admits a closed-form global maximizer.
\begin{proposition}[Closed-form ICT beamformer]
\label{prop:bf}
Let $\mathbf C=\mathbf Q\diag(\mathbf c)\mathbf Q^H\succeq0$, $\tilde{\mathbf h}=\mathbf Q^H\mathbf h$ and, for $\eta\ge0$, $q(\eta)=\sum_m|\tilde h_m|^2/(c_m+\eta)$, $\rho(\eta)=w\kappa q(\eta)$, $s(\eta)=\rho^2/(\kappa(1+\rho))$ and
\begin{equation}
\mathbf v(\eta)=\frac{\sqrt{s(\eta)}}{q(\eta)}\,(\mathbf C+\eta\mathbf I)^{-1}\mathbf h.
\label{eq:vbf}
\end{equation}
Then $\|\mathbf v(\eta)\|^2$ is non-increasing in $\eta$, and the global maximizer of the per-BS problem is $\mathbf v(0)$ if $\|\mathbf v(0)\|^2\le P$ and otherwise $\mathbf v(\eta^\star)$, with $\eta^\star\in(0,w/P]$ the root of $\|\mathbf v(\eta)\|^2=P$. The unclipped value is $w\log(1+\rho(0))$.
\end{proposition}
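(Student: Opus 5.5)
We reduce the non-concave per-BS problem $\max_{\|\mathbf v\|^2\le P}w\Om(\kappa|\mathbf h^H\mathbf v|^2)-\mathbf v^H\mathbf C\mathbf v$ to a scalar concave problem. The reduction works through the quantity $t=|\mathbf h^H\mathbf v|^2$, treated as a level set.

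\emph{Step 1: inner problem at fixed $t$.} For fixed $t$, the best beamformer solves $\min\{\mathbf v^H\mathbf C\mathbf v:\ |\mathbf h^H\mathbf v|^2=t,\ \|\mathbf v\|^2\le P\}$. We handle the power constraint with a multiplier $\eta\ge0$ and minimize $\mathbf v^H(\mathbf C+\eta\mathbf I)\mathbf v$ subject to $|\mathbf h^H\mathbf v|^2=t$. By Cauchy--Schwarz in the $(\mathbf C+\eta\mathbf I)$ inner product,
\[
|\mathbf h^H\mathbf v|^2\le q(\eta)\,\mathbf v^H(\mathbf C+\eta\mathbf I)\mathbf v .
\]
Hence the minimum is $t/q(\eta)$, attained by $\mathbf v\propto(\mathbf C+\eta\mathbf I)^{-1}\mathbf h$ with norm scaled so that $|\mathbf h^H\mathbf v|^2=t$. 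This is the direction in \eqref{eq:vbf}, since $\mathbf h^H(\mathbf C+\eta\mathbf I)^{-1}\mathbf h=q(\eta)$. If $\mathbf C$ is singular with $\mathbf h$ having a component in its null space, we use $\eta>0$ or note $q(0)=\infty$; we assume generic positivity.

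\emph{Step 2: outer concave problem.} With $\eta$ fixed, the problem becomes $\max_{t\ge0}\,w\Om(\kappa t)-t/q(\eta)-\eta(\|\mathbf v\|^2-P)$. Since $\|\mathbf v\|^2$ is linear in $t$ along this direction, the Lagrangian is concave in $t$ by Proposition~\ref{prop:omega}(a). The stationarity condition $w\kappa/\rhof(\kappa t)=1/q(\eta)$ (absorbing $\eta\|\mathbf v\|^2$ into the Rayleigh quotient) gives $\rhof(\kappa t)=w\kappa q(\eta)=\rho(\eta)$. Using $\rhof^2/(1+\rhof)=u$, this yields $t=\rho^2/(\kappa(1+\rho))=s(\eta)$, which reproduces \eqref{eq:vbf} with $|\mathbf h^H\mathbf v|=\sqrt{s}$. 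Substituting back gives the value $w\Om(\kappa s)-s/q(0)$ at $\eta=0$. We have $\Om(\kappa s)=\log(1+\rho)+\rho/(1+\rho)$, and $s/q=w\rho/(1+\rho)$ because $s\,w\kappa/\rho=w\rho/(1+\rho)$. The value is therefore $w\log(1+\rho(0))$.

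\emph{Step 3: monotonicity and the root.} To see that $\eta\mapsto\|\mathbf v(\eta)\|^2=s(\eta)\,\|(\mathbf C+\eta\mathbf I)^{-1}\mathbf h\|^2/q(\eta)^2$ is non-increasing, note that $s$ is increasing in $\rho$ and hence in $q$, and $q$ decreases in $\eta$. The remaining factor $\sum|\tilde h_m|^2(c_m+\eta)^{-2}/(\sum|\tilde h_m|^2(c_m+\eta)^{-1})^2$ needs care. We plan to check this factor by differentiation combined with Cauchy--Schwarz/Chebyshev sum inequalities, and we expect this to be the main obstacle, because the two factors move in opposite directions for large spreads in $c_m$.

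\emph{Step 4: global optimality.} If the step-3 monotonicity fails as stated, the fallback for global optimality is to argue through strong duality of the scalarized problem. Over the pair (Rayleigh quotient, $t$), the problem is a concave maximization over a set made convex by the S-lemma, one quadratic constraint plus a rank-one objective. KKT with complementary slackness then selects $\eta=0$ when feasible, and otherwise the $\eta^\star$ with active power. For the bound $\eta^\star\le w/P$, we compare the multiplier with the derivative of the objective: at an active constraint, $\eta P\le\eta\|\mathbf v\|^2\le w\Om'(\kappa t)\kappa t=w\kappa t/\rhof(\kappa t)\le w$. The last inequality holds because $u/\rhof(u)=\rhof/(1+\rhof)<1$.
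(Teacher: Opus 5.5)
Your Steps 1--2 match the paper's proof. Cauchy--Schwarz in the $(\mathbf C+\eta\mathbf I)$ inner product reduces the Lagrangian to a concave scalar problem in $t$, and your stationarity condition, $s(\eta)$ and the unclipped value $w\log(1+\rho(0))$ are correct. There is one slip: the objective displayed in Step 2 subtracts $\eta\|\mathbf v\|^2$ twice. It should read $w\Om(\kappa t)-t/q(\eta)+\eta P$, which is the function your stationarity condition actually differentiates.

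The gap is that the two remaining claims, global optimality and monotonicity, are not proved. You defer global optimality to a ``fallback'' through the S-lemma and strong duality that is never carried out. It is also unnecessary, and KKT conditions alone do not certify a global maximum of a non-concave problem. The missing idea is Lagrangian sufficiency. Steps 1--2 already show that $\mathbf v(\eta)$ maximizes $L(\mathbf v,\eta)=w\Om(\kappa|\mathbf h^H\mathbf v|^2)-\mathbf v^H(\mathbf C+\eta\mathbf I)\mathbf v$ over all of $\C^M$, because $L(\mathbf v,\eta)\le w\Om(\kappa t)-t/q(\eta)$ with $t=|\mathbf h^H\mathbf v|^2$. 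Hence, if $\eta^\star\ge0$ satisfies complementary slackness with $\mathbf v(\eta^\star)$, every feasible $\mathbf v$ obeys $f(\mathbf v)\le L(\mathbf v,\eta^\star)+\eta^\star P\le L(\mathbf v(\eta^\star),\eta^\star)+\eta^\star P=f(\mathbf v(\eta^\star))$. This is how the paper concludes, and it does not depend on monotonicity, contrary to how your Step 4 is framed.

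Existence of the root then needs only continuity plus your own Step-4 inequality. That inequality holds for every $\eta>0$, not only at an active constraint: $\eta\|\mathbf v(\eta)\|^2\le\mathbf v(\eta)^H(\mathbf C+\eta\mathbf I)\mathbf v(\eta)=s/q=w\rho/(1+\rho)<w$. So $\|\mathbf v(w/P)\|^2<P$, and the intermediate value theorem gives $\eta^\star\in(0,w/P]$ whenever $\|\mathbf v(0)\|^2>P$.

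For monotonicity, your expectation that the two factors move in opposite directions is mistaken, and the direct route closes with one more Cauchy--Schwarz. With $\alpha_m=|\tilde h_m|^2$, $x_m=1/(c_m+\eta)$, $N=\sum_m\alpha_mx_m^2=\|(\mathbf C+\eta\mathbf I)^{-1}\mathbf h\|^2$ and $D=\sum_m\alpha_mx_m=q(\eta)$, one has $dD/d\eta=-N$ and
\[
\frac{d}{d\eta}\,\frac{N}{D^2}=\frac{2\big(N^2-D\sum_m\alpha_mx_m^3\big)}{D^3}\le0 ,
\]
since $(\sum_m\alpha_mx_m^2)^2\le(\sum_m\alpha_mx_m)(\sum_m\alpha_mx_m^3)$. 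Thus $\|\mathbf v(\eta)\|^2=s(\eta)\,N/D^2$ is a product of two non-negative non-increasing factors.

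The paper avoids this computation. The map $\eta\mapsto\max_{\mathbf v}L(\mathbf v,\eta)+\eta P$ is a pointwise maximum of affine functions of $\eta$, hence convex, and its derivative is $P-\|\mathbf v(\eta)\|^2$ (the maximizers differ only by a phase). So $\|\mathbf v(\eta)\|^2$ is non-increasing. Either argument completes your Step 3 and justifies the bisection for $\eta^\star$.
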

\begin{IEEEproof}
For fixed $\eta$, write $s=|\mathbf h^H\mathbf v|^2$. Among all $\mathbf v$ with this $s$, $\mathbf v^H(\mathbf C+\eta\mathbf I)\mathbf v$ is minimized by $\mathbf v\propto(\mathbf C+\eta\mathbf I)^{-1}\mathbf h$ with minimum $s/q(\eta)$ (Cauchy--Schwarz), so that the Lagrangian $L(\mathbf v,\eta)=w\Om(\kappa|\mathbf h^H\mathbf v|^2)-\mathbf v^H(\mathbf C+\eta\mathbf I)\mathbf v$ is maximized by solving the concave scalar problem $\max_{s\ge0}w\Om(\kappa s)-s/q$, whose stationarity condition $w\kappa/\rhof(\kappa s)=1/q$ gives $\rhof=\rho(\eta)$ and $s=s(\eta)$; this yields \eqref{eq:vbf}. The function $\eta\mapsto\max_{\mathbf v}L(\mathbf v,\eta)+\eta P$ is convex with derivative $P-\|\mathbf v(\eta)\|^2$, so that $\|\mathbf v(\eta)\|^2$ is non-increasing; the bound $\|\mathbf v(\eta)\|^2\le s/(q\eta)\le w/\eta$ gives the bracket. If $\eta^\star\ge0$ satisfies complementary slackness, then for every feasible $\mathbf v$, $f(\mathbf v)\le L(\mathbf v,\eta^\star)\le L(\mathbf v(\eta^\star),\eta^\star)=f(\mathbf v(\eta^\star))$, which proves global optimality despite the non-concavity.
\end{IEEEproof}
The ICT-BF iteration (Algorithm~\ref{alg:bf}) therefore costs one $M\times M$ eigendecomposition and one scalar bisection per BS, which is the cost of the closed-form FP beamformer $\mathbf v_j=\sqrt{w_j(1+\gamma_j)}\,y_j(\sum_i|y_i|^2\mathbf h_{ij}\mathbf h_{ij}^H+\eta\mathbf I)^{-1}\mathbf h_{jj}$ of \cite{ShenYuI}, of WMMSE, and of XMMSE. Specifically, the per-BS updates of the four methods have the common form $\mathbf v_j=\beta_j(\mathbf C_j+\theta_j\mathbf h_{jj}\mathbf h_{jj}^H+\eta\mathbf I)^{-1}\mathbf h_{jj}$ with $\mathbf C_j=\sum_{i\ne j}w_i\kappa_i^\star\mathbf h_{ij}\mathbf h_{ij}^H$, a scalar $\beta_j$, and an own-channel coefficient $\theta_j$ equal to the curvature of the own-signal term: $\theta_j=w_j\kappa_j^\star$ for FP/WMMSE (the QT flattens the own signal with the same weight as the interference), $\theta_j=w_jb(\gamma_j)/(\gamma_j(1+\gamma_j)b_j)$ for XMMSE/RIT+QT, which is smaller by the factor $b(\gamma)/\gamma^2\approx\log\gamma/\gamma$, and $\theta_j=0$ for ICT, whose $\beta_j=\sqrt{s(\eta)}/q(\eta)$ depends on $\eta$ through $\Om$ instead. This is the algorithmic face of Proposition~\ref{prop:curv}.

\begin{algorithm}[t]
\caption{ICT beamforming (ICT-BF)}
\label{alg:bf}
\begin{algorithmic}[1]
\STATE Initialize $\mathbf v_j$ with $\|\mathbf v_j\|^2\le P$.
\REPEAT
\STATE $a_i\leftarrow|\mathbf h_{ii}^H\mathbf v_i|^2$, $b_i\leftarrow\sum_{j\ne i}|\mathbf h_{ij}^H\mathbf v_j|^2+\sigma^2$, $\kappa_i\leftarrow a_i/(b_i(a_i+b_i))$.
\STATE For each BS $j$: $\mathbf C_j\leftarrow\sum_{i\ne j}w_i\kappa_i\mathbf h_{ij}\mathbf h_{ij}^H$; $\mathbf v_j\leftarrow\mathbf v(\eta^\star)$ from Proposition~\ref{prop:bf} with $(\mathbf C,\mathbf h,w,\kappa)=(\mathbf C_j,\mathbf h_{jj},w_j,\kappa_j)$; rescale to the power floor if $\|\mathbf v_j\|^2<p_{\min}$.
\UNTIL{the WSR has converged}
\end{algorithmic}
\end{algorithm}

The beamforming surrogates of closed-form FP and WMMSE are quadratic in the $\mathbf v_j$: with $z_j=\mathbf h_{jj}^H\mathbf v_j$, the price $|y_i|^2=w_i\kappa_i^\star$ and the own-signal term $2\Real(y_j^*\sqrt{w_j(1+\gamma_j)}\,z_j)-|y_j|^2|z_j|^2$, they are $w_j\ell^{\LQ}$ evaluated at $a=|z_j|^2$, except that the linear term $2\gamma\sqrt\nu$ of \eqref{eq:philq} is replaced by $2\gamma\,\Real(z_{j,0}^*z_j)/|z_{j,0}|^2\le2\gamma\sqrt\nu$. By Proposition~\ref{prop:xmmse}, the single-stream XMMSE surrogate is \eqref{eq:xmmse-vec}, with the same price and the same modification of $\phi_{\RQ}$. Both own-signal terms are therefore bounded above by $w_j\phi_{\LQ}(\nu_j)$ and $w_j\phi_{\RQ}(\nu_j)$ with $\nu_j=|z_j|^2/|z_{j,0}|^2$, which Theorem~\ref{thm:tightest} bounds by $w_j\phi_{\ICT}(\nu_j)$, the ICT-BF own-signal term. Hence, we have the following corollary.
\begin{corollary}
\label{cor:bf}
For every anchor and every set of beamformers, the WSR surrogates satisfy $S_{\mathrm{FP}}=S_{\mathrm{WMMSE}}\le S_{\mathrm{XMMSE}}=S_{\RQ}\le S_{\ICT}\le\mathrm{WSR}$, and the ICT-BF surrogate is the tightest surrogate of the form $\sum_j[\phi_j(\mathbf h_{jj}^H\mathbf v_j)-\lambda_jb_j]$.
\end{corollary}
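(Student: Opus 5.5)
The plan is to establish the chain term by term, because the WSR surrogates are weighted sums over links $j$ and every one of them carries the interference term $-w_j\kappa_j^\star b_j$. It is therefore enough to compare the own-signal terms of each link, written as functions of $z_j=\mathbf h_{jj}^H\mathbf v_j$, and then add the weights $w_j\ge0$.

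\textbf{Step 1: the equalities.} $S_{\mathrm{FP}}=S_{\mathrm{WMMSE}}$ follows from the scalar equivalence of Section~\ref{sec:benchmarks} applied to the complex scalar $z_j$. The MMSE receiver and the QT variable $y_j$ coincide, so the two quadratics in $z_j$ and $b_j$ agree coefficient by coefficient. $S_{\mathrm{XMMSE}}=S_{\RQ}$ is exactly Proposition~\ref{prop:xmmse}, since both equal \eqref{eq:xmmse-vec}.

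\textbf{Step 2: the two inequalities between surrogates.} Write $\nu_j=|z_j|^2/|z_{j,0}|^2$ and use $\Real(z_{j,0}^*z_j)\le|z_{j,0}||z_j|$.

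For FP, the own-signal term equals $w_j\phi_{\LQ}(\nu_j)$ with $2\gamma\sqrt{\nu_j}$ replaced by something no larger, so it is at most $w_j\phi_{\LQ}(\nu_j)$. For XMMSE, \eqref{eq:xmmse-vec} is likewise at most $w_j\phi_{\RQ}(\nu_j)$, because its linear coefficient $2\log(1+\gamma)/(\gamma b_0)$ is positive.

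Proving $S_{\mathrm{FP}}\le S_{\mathrm{XMMSE}}$ needs more than these upper bounds, because it compares two phase-dependent expressions. I would compute the difference directly. With $x=|z|$ and the phase term $t=\Real(z_0^*z)/|z_0|\le x$, the difference is
\[
\frac{(\gamma-\log(1+\gamma))}{\gamma b_0}\big(|z|^2-2\Real(z_0^*z)+|z_0|^2\big)=\frac{\gamma-\log(1+\gamma)}{\gamma b_0}|z-z_0|^2\ge0.
\]
This is the complex version of the square in Proposition~\ref{prop:family}, and it is nonnegative since $\gamma\ge\log(1+\gamma)$. I expect this to be the main obstacle: it is the one place where a magnitude-only bound is insufficient, and the coefficient bookkeeping between \eqref{eq:lq} in complex form and \eqref{eq:xmmse-vec} has to be done exactly.

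For $S_{\RQ}\le S_{\ICT}$, chain the bound above: the own-signal term is at most $w_j\phi_{\RQ}(\nu_j)$, then at most $w_j\phi_{\ICT}(\nu_j)=w_j\Om(\kappa_j^\star|z_j|^2)$ by \eqref{eq:chain}, and the latter is the ICT-BF own-signal term in \eqref{eq:bfsurr}.

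\textbf{Step 3: validity and tightness.} $S_{\ICT}\le\mathrm{WSR}$ is Theorem~\ref{thm:ict} applied link by link. For the tightest-surrogate claim, take any valid surrogate $\sum_j[\phi_j(z_j)-\lambda_jb_j]\le\mathrm{WSR}$. Since $b_j$ ranges over all positive values as the other beamformers vary, the bound must hold for each term separately. The naive way to separate terms is to fix the other links' rates while varying $b_j$, but those rates change when $b_j$ does. To isolate term $j$ cleanly, I would restrict to configurations where the interference seen by the other links is fixed, or simply state the claim termwise, as in Corollary~\ref{cor:dom}. Once separated, Theorem~\ref{thm:tightest}(i) applied to $a\mapsto\sup_{|z|^2=a}\phi_j(z)/w_j$ gives $\phi_j(z)\le w_j\Om(\lambda_j|z|^2/w_j)$, and tightness at the anchor forces $\lambda_j=w_j\kappa_j^\star$ by part (ii). The right-hand side is then the ICT-BF term.
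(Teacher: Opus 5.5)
Your proof follows the paper's route. You compare the own-signal terms link by link under the common price $w_j\kappa_j^\star$, use $\Real(z_{j,0}^*z_j)\le|z_{j,0}||z_j|$ to pass to $\phi_{\LQ}$ and $\phi_{\RQ}$, and then apply \eqref{eq:chain} and Theorem~\ref{thm:tightest}. You also make one real improvement. The paper only bounds the FP and XMMSE own-signal terms above by $w_j\phi_{\LQ}(\nu_j)$ and $w_j\phi_{\RQ}(\nu_j)$, and hence by $w_j\phi_{\ICT}(\nu_j)$. That puts each one below ICT, but it does not order them against each other once the phase of $z_j$ matters, so $S_{\mathrm{FP}}\le S_{\mathrm{XMMSE}}$ is never actually derived there. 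Your identity
\[
S_{\mathrm{XMMSE}}-S_{\mathrm{FP}}=\sum_j w_j\,\frac{\gamma_j-\log(1+\gamma_j)}{\gamma_j\,b_{j,0}}\,|z_j-z_{j,0}|^2\ \ge 0
\]
is correct and closes that step. The $|z|^2$ coefficients differ by $(\gamma^2-b(\gamma))/(\gamma(1+\gamma)b_0)=(\gamma-\log(1+\gamma))/(\gamma b_0)$, and the constant $\gamma-\log(1+\gamma)$ equals that coefficient times $|z_0|^2=\gamma b_0$. For in-phase $z_j$ the identity reduces to the square in Proposition~\ref{prop:family}.

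On the tightness clause, drop the claim that $b_j$ ranges over all positive values and that the bound therefore holds term by term. In fact $b_j\ge\sigma^2$, it is bounded above under the power constraints, and moving it moves other terms of the sum. Restricting to configurations with fixed interference at the other links cannot separate the sum either. Read literally for the whole sum, the clause is actually false. With a single link, $b_1\equiv\sigma^2$, and $\phi_1(z)=w_1\log(1+|z|^2/\sigma^2)+\lambda_1\sigma^2$ gives a surrogate of the stated form that equals the WSR for every $\lambda_1$. It is strictly above ICT-BF whenever $|z|^2\ne|z_{1,0}|^2$, and the price is not forced. The clause holds in the termwise sense: each $\phi_j(z)-\lambda_jb$ is a valid minorizer of $w_j\log(1+|z|^2/b)$ for all $z\in\C$ and $b>0$, and is tight at the anchor. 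That is the sense in which the paper's argument works. In that sense your reduction via $a\mapsto\sup_{|z|^2=a}\phi_j(z)/w_j$ to Theorem~\ref{thm:tightest}(i)--(ii) is correct.
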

We have verified numerically that closed-form FP and WMMSE produce identical beamformers (maximum relative deviation \bfMfourEqFPW{} over 300 iterations, consistent with \cite{Zhang23}) and, which is new, that XMMSE and RIT+QT beamforming produce identical beamformers (\bfMfourEqXR{}), so that the MISO comparison below has two baselines.

\begin{remark}[Single-stream MIMO users]
\label{rem:mimo}
If user $i$ has $N$ receive antennas and one stream, its rate is $\log(1+\boldsymbol\alpha^H\mathbf B^{-1}\boldsymbol\alpha)$ with $\boldsymbol\alpha=\mathbf H_{ii}\mathbf v_i\in\C^N$ and $\mathbf B\succ0$ the interference-plus-noise covariance. Among surrogates $\phi(\boldsymbol\alpha)-\tr(\mathbf K\mathbf B)$ affine in $\mathbf B$, tightness at $(\boldsymbol\alpha_0,\mathbf B_0)$ forces the rank-one price $\mathbf K_0=\mathbf B_0^{-1}\boldsymbol\alpha_0\boldsymbol\alpha_0^H\mathbf B_0^{-1}/(1+\gamma_0)$, and by Cauchy--Schwarz, $\inf_{\mathbf B\succ0}[\log(1+\boldsymbol\alpha^H\mathbf B^{-1}\boldsymbol\alpha)+\tr(\mathbf K_0\mathbf B)]=\Om(|\mathbf u_0^H\boldsymbol\alpha|^2/(1+\gamma_0))$ with $\mathbf u_0=\mathbf B_0^{-1}\boldsymbol\alpha_0$ (Appendix~\ref{app:mimo}). The matrix case thus collapses to the scalar $\Om$ with no Riccati equation, the ICT beamformer is \eqref{eq:vbf} with $\mathbf h$ replaced by $\mathbf H_{ii}^H\mathbf u_0$ up to scaling, and the single-stream MIMO version of XMMSE, whose price is $w_i\mathbf K_0$ by inspection of its coefficients, is dominated as well. Multi-stream users with $\log\det$ rates are not covered: there the infimum over $\mathbf B$ does not collapse to a scalar function.
\end{remark}

\subsection{Simulation Results}
\label{sec:num-bf}
We use the setting of \cite[Sec.~VI]{ShenYuI} in the Case~1 layout: one user per cell, $M=4$ (\bfMfourN{} drops) or $M=8$ (\bfMeightN{} drops) antennas per BS, 46\,dBm per BS, equal weights, and all methods initialized with maximum-ratio transmission at full power. The SINRs are high in this setting (median SINR of the ICT solution \bfMfourSINR{}\,dB and \bfMeightSINR{}\,dB, 90th percentiles \bfMfourSINRp{} and \bfMeightSINRp{}\,dB), which is the regime in which Proposition~\ref{prop:curv} predicts the largest differences. All methods are run for \bfMfourT{} ($M=4$) and \bfMeightT{} ($M=8$) iterations; Fig.~\ref{fig:bf} and Table~\ref{tab:bf} report the results. A third configuration with $M=4$ and weights drawn uniformly in $[0.1,1]$ (\bfMfourWN{} drops) gives the same picture (median iterations to 99.9\%: \bfMfourWFPitNNN{} / \bfMfourWXMitNNN{} / \bfMfourWICTitNNN{}; final value of ICT \bfMfourWICTvsFPmean\% above FP/WMMSE and \bfMfourWICTvsXMmean\% above XMMSE) and is omitted from the table.

\begin{figure}[t]
\centering
\includegraphics[width=\columnwidth]{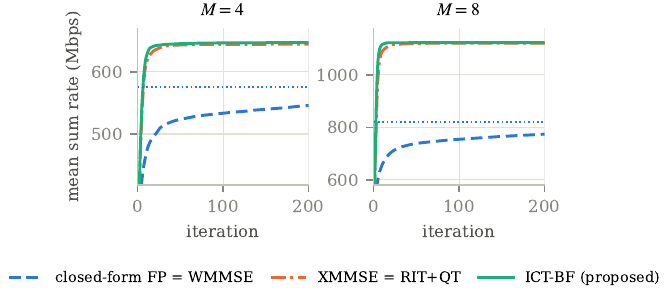}
\caption{MISO beamforming: mean sum rate over the first 200 iterations for $M=4$ (left) and $M=8$ (right); the dotted line is the mean final value of FP/WMMSE after \bfMfourT{} / \bfMeightT{} iterations.}
\label{fig:bf}
\end{figure}

\begin{table*}[t]
\centering\small
\caption{MISO beamforming ($M=4$: \bfMfourN{} drops, \bfMfourT{} it.; $M=8$: \bfMeightN{} drops, \bfMeightT{} it.): median iterations to 99\% / 99.9\% of the method's own final value; mean final value relative to the best of the three on the same drop; drops within 0.1\% of the best; value after 50 and 300 iterations relative to the best final value; mean time per iteration per drop (NumPy, one core).}
\label{tab:bf}
\begin{tabular}{llcccrrrc}
\toprule
& method & 99\% & 99.9\% & final vs best & within 0.1\% & at 50 & at 300 & ms/it.\\
\midrule
\multirow{3}{*}{$M=4$}
& FP=WMMSE & \bfMfourFPitNN & \bfMfourFPitNNN & \bfMfourFPrelBest\% & \bfMfourFPwithin\% & \bfMfourFPatFifty\% & \bfMfourFPatThreeH\% & \bfMfourFPtime\\
& XMMSE=RIT+QT & \bfMfourXMitNN & \bfMfourXMitNNN & \bfMfourXMrelBest\% & \bfMfourXMwithin\% & \bfMfourXMatFifty\% & \bfMfourXMatThreeH\% & \bfMfourXMtime\\
& \textbf{ICT-BF} & \textbf{\bfMfourICTitNN} & \textbf{\bfMfourICTitNNN} & \bfMfourICTrelBest\% & \bfMfourICTwithin\% & \bfMfourICTatFifty\% & \bfMfourICTatThreeH\% & \bfMfourICTtime\\
\midrule
\multirow{3}{*}{$M=8$}
& FP=WMMSE & \bfMeightFPitNN & \bfMeightFPitNNN & \bfMeightFPrelBest\% & \bfMeightFPwithin\% & \bfMeightFPatFifty\% & \bfMeightFPatThreeH\% & \bfMeightFPtime\\
& XMMSE=RIT+QT & \bfMeightXMitNN & \bfMeightXMitNNN & \bfMeightXMrelBest\% & \bfMeightXMwithin\% & \bfMeightXMatFifty\% & \bfMeightXMatThreeH\% & \bfMeightXMtime\\
& \textbf{ICT-BF} & \textbf{\bfMeightICTitNN} & \textbf{\bfMeightICTitNNN} & \bfMeightICTrelBest\% & \bfMeightICTwithin\% & \bfMeightICTatFifty\% & \bfMeightICTatThreeH\% & \bfMeightICTtime\\
\bottomrule
\end{tabular}
\end{table*}

\emph{Convergence.} Fig.~\ref{fig:bf} illustrates the convergence of the mean sum rate. The median number of iterations to 99.9\% of the method's own final value is \bfMfourFPitNNN{} / \bfMfourXMitNNN{} / \bfMfourICTitNNN{} for $M=4$ and \bfMeightFPitNNN{} / \bfMeightXMitNNN{} / \bfMeightICTitNNN{} for $M=8$ (FP/WMMSE, XMMSE, ICT). FP/WMMSE is still increasing at the end of its budget in \bfMfourFPrising\% ($M=4$) and \bfMeightFPrising\% ($M=8$) of the drops, with a relative gain per iteration of about $10^{-5}$; at the budget of 300 iterations that one would typically allow, its sum rate is \bfMfourFPatThreeH\% ($M=4$) and \bfMeightFPatThreeH\% ($M=8$) below the best final value. XMMSE removes most of this slowness, as its curvature factor $(\log\gamma)/2$ predicts, and ICT removes the rest: at 50 iterations ICT is within \bfMfourICTatFifty\% / \bfMeightICTatFifty\% of the best final value, against \bfMfourXMatFifty\% / \bfMeightXMatFifty\% for XMMSE.

\emph{Points reached.} After the full budget, the final value of ICT exceeds that of FP/WMMSE by more than 0.1\% in \bfMfourICTvsFPwins\% ($M=4$) and \bfMeightICTvsFPwins\% ($M=8$) of the drops and is lower in \bfMfourICTvsFPlosses\% and \bfMeightICTvsFPlosses\%; its mean final value is \bfMfourICTvsFPmean\% and \bfMeightICTvsFPmean\% higher. Against XMMSE the differences are small but one-sided: \bfMfourICTvsXMmean\% ($M=4$; better by more than $0.1\%$ in \bfMfourICTvsXMwins\% of the drops, worse in \bfMfourICTvsXMlosses\%) and \bfMeightICTvsXMmean\% ($M=8$; \bfMeightICTvsXMwins\% / \bfMeightICTvsXMlosses\%). ICT at 50 iterations already exceeds FP/WMMSE at the full budget by \bfMfourICTatFiftyVsFPfinal\% ($M=4$) and \bfMeightICTatFiftyVsFPfinal\% ($M=8$), and the per-iteration times of the three methods are within a factor of 1.5 (Table~\ref{tab:bf}), so that iteration counts translate into wall-clock time.

\section{Solver-Free Percentile Optimization}
\label{sec:slqp}
\subsection{Problem Formulation}
The sum-least-$q$th-percentile (SLqP) rate problem of Khan and Adve \cite{Khan24} maximizes the sum of the $K_q$ smallest user rates,
\begin{equation}
\max_{\mathbf p\in[p_{\min},P_{\max}]^K}\ f_{K_q}(\mathbf r(\mathbf p)),\qquad f_{K_q}(\mathbf r)=\min_{\mathbf x\in\mathcal X}\mathbf x^T\mathbf r,
\label{eq:slqp}
\end{equation}
with $\mathcal X=\{\mathbf x\in[0,1]^K:\mathbf 1^T\mathbf x=K_q\}$ the capped simplex (its vertices are the $0/1$ selections of $K_q$ users, so that the minimum is the sum of the $K_q$ smallest rates). It interpolates between max-min ($K_q=1$) and sum-rate ($K_q=K$) and is strongly NP-hard for $K_q>1$ \cite{Khan24}. The QFT and LFT of \cite{Khan24} are MM methods whose subproblems are convex but admit no closed form and are solved by a conic solver at every iteration; the authors state that closed-form algorithms for SLqP are unknown \cite{Khan24}.

\subsection{Proposed Approach}
The ICT surrogate supplies such an algorithm. At the anchor $\bar{\mathbf p}$, replace $r_k$ by $S_k(\mathbf p)=\Om(\kappa_kg_{kk}p_k)-\kappa_kb_k(\mathbf p)$. The MM subproblem $\max_{\mathbf p}\min_{\mathbf x\in\mathcal X}\mathbf x^T\mathbf S(\mathbf p)$ is concave in $\mathbf p$ and linear in $\mathbf x$ on compact convex sets, so that by Sion's minimax theorem it equals $\min_{\mathbf x\in\mathcal X}d(\mathbf x)$ with the dual function
\begin{equation}
d(\mathbf x)=\max_{\mathbf p}\sum_k\big[x_k\Om(\kappa_kg_{kk}p_k)-p_k\mu_k(\mathbf x)\big]-\sigma^2\boldsymbol\kappa^T\!\mathbf x
\label{eq:dual}
\end{equation}
with $\mu_k(\mathbf x)=\sum_{j\ne k}x_j\kappa_jg_{jk}$ and the maximization over $[p_{\min},P_{\max}]^K$. Because the surrogate is affine in the interference and separable in the own powers, the maximization in \eqref{eq:dual} is \eqref{eq:pc} with the weights $w_k$ replaced by $x_k$: $\mathbf p(\mathbf x)$ is closed form, and so are the derivatives of $d$.
\begin{proposition}[Dual of the ICT-SLqP subproblem]
\label{prop:dual}
$d$ is convex on $\mathcal X$, with $\nabla d(\mathbf x)=\mathbf S(\mathbf p(\mathbf x))$ and $\nabla^2d(\mathbf x)=\mathbf J_S\,\partial\mathbf p/\partial\mathbf x$, where $(\mathbf J_S)_{kj}=\delta_{kj}\phi_k'(p_k)-(1-\delta_{kj})\kappa_kg_{kj}$, $\phi_k(p)=\Om(\kappa_kg_{kk}p)$, and, for the free coordinates, $\partial p_k/\partial x_j=[(1-\delta_{kj})\kappa_jg_{jk}-\delta_{kj}\phi_k'(p_k)]/(x_k\phi_k''(p_k))$. For every $\mathbf x\in\mathcal X$, $f_{K_q}(\mathbf S(\mathbf p(\mathbf x)))\le\max_{\mathbf p}\min_{\mathbf x'}\mathbf x'^T\mathbf S(\mathbf p)\le d(\mathbf x)$, so that the difference $d(\mathbf x)-f_{K_q}(\mathbf S(\mathbf p(\mathbf x)))$ certifies the suboptimality of $\mathbf p(\mathbf x)$ for the subproblem.
\end{proposition}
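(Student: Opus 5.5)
The plan is to treat $d$ as a pointwise maximum of affine functions of $\mathbf x$, and then to get the derivatives from the envelope theorem and implicit differentiation. Write the Lagrangian-type function
\[
\Phi(\mathbf p,\mathbf x)=\sum_k\big[x_k\Om(\kappa_kg_{kk}p_k)-p_k\mu_k(\mathbf x)\big]-\sigma^2\boldsymbol\kappa^T\mathbf x .
\]
Since $\mu_k$ is linear in $\mathbf x$, this equals $\sum_k x_kS_k(\mathbf p)=\mathbf x^T\mathbf S(\mathbf p)$. To check this, exchange the double sum $\sum_k p_k\sum_{j\ne k}x_j\kappa_jg_{jk}=\sum_j x_j\kappa_j\sum_{k\ne j}g_{jk}p_k$ and recall $b_j=\sum_{k\ne j}g_{jk}p_k+\sigma^2$. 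Hence $d(\mathbf x)=\max_{\mathbf p\in[p_{\min},P_{\max}]^K}\mathbf x^T\mathbf S(\mathbf p)$, a supremum of linear functions of $\mathbf x$, so $d$ is convex.

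For the gradient, I would use Danskin's theorem. The maximizer $\mathbf p(\mathbf x)$ is unique when every $x_k>0$, because the objective is separable and strictly concave in each $p_k$ by Proposition~\ref{prop:omega}(a). It is given in closed form by \eqref{eq:pc} with $w_k\to x_k$. Danskin then gives $\nabla d(\mathbf x)=\mathbf S(\mathbf p(\mathbf x))$. The Hessian follows from the chain rule, $\nabla^2 d=\mathbf J_S\,\partial\mathbf p/\partial\mathbf x$, where $\mathbf J_S=\partial\mathbf S/\partial\mathbf p$. Directly, $\partial S_k/\partial p_k=\phi_k'(p_k)$ and $\partial S_k/\partial p_j=-\kappa_kg_{kj}$ for $j\ne k$, which is the stated $\mathbf J_S$. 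Coordinates clipped at $p_{\min}$ or $P_{\max}$ have zero derivative. For a free coordinate, I would implicitly differentiate the stationarity condition $x_k\phi_k'(p_k)=\mu_k(\mathbf x)=\sum_{j\ne k}x_j\kappa_jg_{jk}$ with respect to $x_j$:
\[
\delta_{kj}\phi_k'(p_k)+x_k\phi_k''(p_k)\,\partial p_k/\partial x_j=(1-\delta_{kj})\kappa_jg_{jk} .
\]
Solving for $\partial p_k/\partial x_j$ gives the stated formula. It is valid because $\phi_k''<0$ by Proposition~\ref{prop:omega}(a) and $x_k>0$.

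The main obstacle is the regularity at the boundary of $\mathcal X$. There $x_k=0$ makes $\mathbf p(\mathbf x)$ non-unique and the Hessian formula singular, and at kinks the clipping switches between active and inactive. My first attempt would be to state the derivative formulas on the relative interior with strictly free or strictly clipped coordinates, where $\mathbf p(\mathbf x)$ is $C^1$ by the implicit function theorem. Elsewhere I would interpret $\mathbf S(\mathbf p(\mathbf x))$ as a subgradient, which Danskin still provides for any maximizer. Convexity and the certificate do not need differentiability.

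For the certificate, fix $\mathbf x\in\mathcal X$ and write $\hat{\mathbf p}=\mathbf p(\mathbf x)$. The first inequality holds because $\hat{\mathbf p}$ is feasible, so $f_{K_q}(\mathbf S(\hat{\mathbf p}))=\min_{\mathbf x'}\mathbf x'^T\mathbf S(\hat{\mathbf p})\le\max_{\mathbf p}\min_{\mathbf x'}\mathbf x'^T\mathbf S(\mathbf p)$. The second is weak duality: $\min_{\mathbf x'}\mathbf x'^T\mathbf S(\mathbf p)\le\mathbf x^T\mathbf S(\mathbf p)\le d(\mathbf x)$ for every $\mathbf p$. Weak duality is enough here, and Sion's theorem, as invoked before the statement, is needed only to assert that the gap can be closed. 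Consequently, $d(\mathbf x)-f_{K_q}(\mathbf S(\hat{\mathbf p}))$ bounds the subproblem suboptimality of $\hat{\mathbf p}$.
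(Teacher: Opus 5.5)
Your proposal is correct and follows the paper's proof essentially step for step. Convexity comes from $d$ being a supremum of functions affine in $\mathbf x$, the gradient from the envelope (Danskin) theorem, the Hessian from the chain rule plus implicit differentiation of $x_k\phi_k'(p_k)=\mu_k(\mathbf x)$ on the free coordinates, and the two inequalities from feasibility of $\mathbf p(\mathbf x)$ and weak duality.

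Your explicit check that the inner objective equals $\mathbf x^T\mathbf S(\mathbf p)$ is a useful addition. Your boundary concern is milder than you suggest. On $\mathcal X$ with positive gains, $x_k=0$ forces $\mu_k(\mathbf x)>0$, since the other $x_j$ sum to $K_q\ge1$. So $p_k=p_{\min}$ is the unique maximizer, and that coordinate is clipped rather than singular. The only genuine points of non-smoothness are therefore the clipping switches.
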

\begin{IEEEproof}
$d$ is a maximum of functions affine in $\mathbf x$, hence convex; the gradient follows from the envelope theorem (the maximizer is unique, and clipped coordinates have zero derivative); the Hessian follows by differentiating the stationarity condition $x_k\phi_k'(p_k)=\mu_k(\mathbf x)$ of the free coordinates. The bounds are weak duality and the definition of $f_{K_q}$.
\end{IEEEproof}
Consequently, the subproblem is a $K$-dimensional smooth convex problem over a capped simplex, which Algorithm~\ref{alg:slqp} solves by a barrier Newton method: each step is one $K\times K$ linear solve, the gradient and Hessian are closed form, and the certified gap of Proposition~\ref{prop:dual} serves as the stopping rule. No conic solver and no modeling layer are involved. We emphasize that the construction uses exactly the two properties of Section~\ref{sec:ict}: affineness in the interference (so that the inner maximization separates and the dual has closed-form derivatives) and tightness (so that the outer MM loop needs as few iterations as the solver-based transforms). The first property is shared by the LDT+QT and RIT+QT surrogates, which can be used in the same way; Section~\ref{sec:num-slqp} compares them. Convergence to directional stationary points of \eqref{eq:slqp} follows as for QFT in \cite[Thm.~4]{Khan24}, since the ICT surrogate is a tangent minorant with matching directional derivatives. The same dual applies to SLqP beamforming \cite{KhanAdveII}: for fixed $\mathbf x$ the maximization separates across BSs, each beamformer is given by Proposition~\ref{prop:bf} with $w_j=x_j$, and $\nabla d=\mathbf S(\mathbf v(\mathbf x))$ as before.

\begin{algorithm}[t]
\caption{Solver-free ICT-SLqP}
\label{alg:slqp}
\begin{algorithmic}[1]
\STATE Initialize $\mathbf p$; $\mathbf x\leftarrow(K_q/K)\mathbf 1$.
\REPEAT
\STATE $\kappa_k\leftarrow a_k/(b_k(a_k+b_k))$ at the current $\mathbf p$ (ICT surrogate).
\STATE Minimize $d$ over $\mathcal X$ by barrier Newton steps (warm start $\mathbf x$): at each step evaluate $\mathbf p(\mathbf x)$ by \eqref{eq:pc} with $w_k=x_k$, $\nabla d$ and $\nabla^2d$ by Proposition~\ref{prop:dual}; stop when $d(\mathbf x)-f_{K_q}(\mathbf S(\mathbf p(\mathbf x)))\le10^{-9}(1+|d|)$.
\STATE $\mathbf p\leftarrow\mathbf p(\mathbf x)$.
\UNTIL{$f_{K_q}(\mathbf r(\mathbf p))$ has converged}
\end{algorithmic}
\end{algorithm}

\subsection{Simulation Results}
\label{sec:num-slqp}
Table~\ref{tab:slqp} compares Algorithm~\ref{alg:slqp} with the QFT and LFT algorithms of \cite{Khan24}, whose convex subproblems we solve by the interior-point conic solver Clarabel through CVXPY (with the variables rescaled so that the noise power is one; without rescaling, the exponential-cone problems fail), and with the LDT+QT and RIT+QT surrogates solved through the same dual machinery (ablation). Two settings follow \cite[Sec.~IV-C]{Khan24}: 7 wrap-around cells with 2\,km inter-site distance, path loss $(1+d/d_0)^{-\zeta}$ with $d_0=0.392$\,m and $\zeta=3.76$, Rayleigh fading, $-143$\,dBm/Hz noise over 20\,MHz, 43\,dBm, uplink, with $K=56$ users and $K_q=14$ (25th percentile) or $K=70$ and $K_q=7$ (10th percentile), and random initial powers; this regime is noise-limited (median SINR of the solution \slqpKAfiftysixSINR{}\,dB). The third setting is interference-limited: 4 uplink users per cell with the propagation model of Section~\ref{sec:num-pc}, 23\,dBm, $K=28$, $K_q=7$ (median SINR \slqpSYSINR{}\,dB). Every MM loop stops when the relative change of the SLqP objective is below $10^{-6}$ on two consecutive iterations or after 200 iterations, with \slqpN{} drops in the first two settings and \slqpSYN{} in the third.

\begin{table*}[t]
\centering\scriptsize
\caption{SLqP rate maximization. Median outer iterations to 99\% / 99.9\% of the method's own final value; mean final value relative to the best of the five methods on the same drop and fraction of drops within 0.1\% of it; mean time per outer iteration and total time to the stopping rule (NumPy / CVXPY+Clarabel, one core).}
\label{tab:slqp}
\begin{tabular}{llcccccrr}
\toprule
& method & inner solver & it.\ 99\% & it.\ 99.9\% & final vs best & within 0.1\% & ms/it. & total s\\
\midrule
\slqpTableRows
\bottomrule
\end{tabular}
\vspace{1.5em}
\end{table*}

We draw three conclusions. First, in the settings of \cite{Khan24}, the solver-free algorithm reaches the same values as the solver-based QFT (within 0.1\% of the best of the five methods in \slqpKAfiftysixICTwithin\% and \slqpKAseventyICTwithin\% of the drops, against \slqpKAfiftysixQFTwithin\% and \slqpKAseventyQFTwithin\% for QFT) in fewer outer iterations (median \slqpKAfiftysixICTitNNN{} and \slqpKAseventyICTitNNN{} to 99.9\%, against \slqpKAfiftysixQFTitNNN{} and \slqpKAseventyQFTitNNN{}); LFT does not converge within 200 iterations in this noise-limited regime and ends \slqpKAfiftysixLFTrelBest\% and \slqpKAseventyLFTrelBest\% below the best. In the interference-limited setting, the SLqP landscape has many stationary points and the methods land on different ones: ICT is better than QFT by more than 0.1\% in \slqpSYICTvsQFTwins\% of \slqpSYN{} drops and worse in \slqpSYICTvsQFTlosses\% (mean difference \slqpSYICTvsQFTmean\%), and the same holds between ICT and the two other dual methods (\slqpSYICTvsSEFPwins\% / \slqpSYICTvsSEFPlosses\% against RIT+QT). Cross-restarts confirm that these are genuinely different stationary points (QFT started from an ICT solution does not move, and conversely). It is worth emphasizing that tightness within the interference-affine family orders surrogates and certified ascent, not the stationary points of a non-smooth non-convex objective, and that the QFT surrogate lies outside the family. Second, the dual machinery works for any interference-affine surrogate, and the subproblem values are ordered as Proposition~\ref{prop:ascent} predicts: at a common anchor, the maximum of the ICT subproblem exceeds that of the RIT+QT subproblem by \slqpAscentSEFP\% and that of the LDT+QT subproblem by \slqpAscentFP\% on average, with no violation of the ordering in \slqpAscentN{} anchors, and the LDT+QT surrogate needs the most outer iterations (median \slqpKAfiftysixFPitNNN{} and \slqpSYFPitNNN{}, against \slqpKAfiftysixICTitNNN{} and \slqpSYICTitNNN{} for ICT). Third, regarding the computational cost, at $K=56$--$70$ the dual Newton subproblem (\slqpKAfiftysixICTNewton{} Newton steps on average, each a $K\times K$ solve) is \slqpKAfiftysixICTvsQFTtimeRatio{}--\slqpKAseventyICTvsQFTtimeRatio{} times faster than one CVXPY/Clarabel solve, and Table~\ref{tab:scaling} shows how this advantage grows with the network size: at $K=140$ and $K=280$, the certified dual solve is \scalingRatioOneForty{}$\times$ and \scalingRatioTwoEighty{}$\times$ faster than the interior-point solve of Clarabel, which at $K=280$ reports an inaccurate solution on every instance, whereas the first-order conic solver SCS is \scalingScsRatioFiftySix{}$\times$ slower than the dual solve at $K=56$ and \scalingScsRatioOneForty{}$\times$ at $K=140$. For reference, \cite[Table~I]{Khan24} reports 13\,s per instance for QFT with CVX at $K=21$ on a 12-core workstation, whereas Algorithm~\ref{alg:slqp} converges in \slqpKAfiftysixICTtotal{}\,s per instance at $K=56$ on one core. While the comparison across software stacks is only indicative, the structural point is not: the SLqP subproblem, which previously required a general conic solver, is solved to a certified $10^{-9}$ gap with closed-form formulas and dense linear algebra, because the ICT surrogate is separable in the own signal and affine in the interference.

\begin{table}[t]
\centering\scriptsize\setlength{\tabcolsep}{4pt}
\caption{Time for one SLqP subproblem versus the number of links $K$ (medians over 3 instances, ms): dual Newton solve of Algorithm~\ref{alg:slqp} to a certified $10^{-9}$ gap (cold / warm start), and the QFT subproblem with Clarabel (interior point) and SCS (first order) at their default accuracies; $^\dagger$: at least one instance reported as inaccurate by the solver.}
\label{tab:scaling}
\begin{tabular}{rrrrrr}
\toprule
$K$ & $K_q$ & dual (cold) & dual (warm) & Clarabel & SCS\\
\midrule
\scalingTableRows
\bottomrule
\end{tabular}
\end{table}

\section{Conclusion}
\label{sec:concl}
In this paper, we have shown that every closed-form algorithm for weighted sum-rate maximization that we are aware of---closed-form FP, WMMSE, interference pricing, SEFP and XMMSE---maximizes a surrogate that is linear in the interference power, because that is the structure that decouples the links, and that this family has a unique tightest member, the interference-conjugate transform. Tightness forces the interference price, which explains why all of these methods charge the same one; the single-stream XMMSE surrogate coincides with that of SEFP; and the benchmarks lose everything they lose on the own-signal side, by curvature factors of $\gamma/2$ (FP/WMMSE) and $(\log\gamma)/2$ (SEFP/XMMSE) against $1$ for ICT. The resulting power-control, scheduling and MISO beamforming algorithms have the per-iteration cost of their FP counterparts, and the same structure yields the first solver-free algorithm for sum-least-$q$th-percentile rate maximization, with a certified duality gap in place of a conic solver. In our simulations, ICT needed the fewest iterations in every setting, reached equal or better stationary points, and under proportional-fairness scheduling improved the utility over SEFP by as much as SEFP improves over FP.

The ICT updates use the same information as interference pricing and closed-form FP, so that the algorithms are as amenable to distributed implementation as the benchmarks. Three limitations should be stated. (i) The guarantees require $p_{\min}>0$ (or a power floor for beamformers) to avoid the degenerate anchor $a_i=0$; this is shared with FP, WMMSE, SEFP and XMMSE. (ii) Dominance orders certified ascent and asymptotic rates rather than realized iterates; the simulations show that ICT reaches better points on average but not on every drop. (iii) Multi-stream users with $\log\det$ rates are not covered: the partial conjugate in the interference covariance does not collapse to a scalar function when the signal covariance has rank larger than one, and the quadratic XMMSE surrogate remains the best closed-form choice there; the tightest interference-affine surrogate for the $\log\det$ rate is the natural next step.

\appendices
\section{Proofs of Propositions~\ref{prop:curv} and \ref{prop:ascent}}
\label{app:curv}
\emph{Curvature.} Differentiating \eqref{eq:philq} and \eqref{eq:phirq} twice at $\nu=1$ gives $\phi_{\LQ}''(1)=-\gamma/2$ and $\phi_{\RQ}''(1)=-\log(1+\gamma)/2$ (only the $\sqrt\nu$ terms contribute). For \eqref{eq:phiict}, $\phi_{\ICT}''(1)=c^2\Om''(c)$ with $c=\gamma^2/(1+\gamma)$ and $\rhof(c)=\gamma$ by Proposition~\ref{prop:omega}(d), so that Proposition~\ref{prop:omega}(a) gives $\phi_{\ICT}''(1)=-\gamma/(\gamma+2)$. Dividing by $f''(1)=-\gamma^2/(1+\gamma)^2$ gives \eqref{eq:curv}, and $d^2/da^2=a_0^{-2}d^2/d\nu^2$ gives the expressions in the original variable.

\emph{Certified ascent.} $f(\mathbf x_1^+)\ge S_1(\mathbf x_1^+;\bar{\mathbf x})=\max_{\mathcal X}S_1\ge\max_{\mathcal X}S_2\ge S_2(\bar{\mathbf x};\bar{\mathbf x})=f(\bar{\mathbf x})$, by minorization, by $S_1\ge S_2$, and by tightness, respectively.

\section{Proof of Proposition~\ref{prop:rate}}
\label{app:rate}
Tightness at every anchor means $S_i(\mathbf x;\bar{\mathbf x})\le f(\mathbf x)$ with equality at $\mathbf x=\bar{\mathbf x}$, so that $f-S_i(\cdot;\bar{\mathbf x})$ has a minimum at $\bar{\mathbf x}$: $\nabla_{\mathbf x}S_i(\bar{\mathbf x};\bar{\mathbf x})=\nabla f(\bar{\mathbf x})$ and $\mathbf H_f\succeq\mathbf H_i$. Likewise, $S_1-S_2\ge0$ with equality at the anchor gives $\mathbf H_1\succeq\mathbf H_2$. Differentiating the identity $\nabla_{\mathbf x}S_i(\mathbf x;\mathbf x)=\nabla f(\mathbf x)$ in $\mathbf x$ gives $\partial^2S_i/\partial\mathbf x\partial\bar{\mathbf x}=\mathbf H_f-\mathbf H_i$ at $(\mathbf x^\star;\mathbf x^\star)$. The MM map is defined implicitly by $\nabla_{\mathbf x}S_i(T_i(\bar{\mathbf x});\bar{\mathbf x})=0$, so that $\mathbf H_i\,\mathbf J_i+(\mathbf H_f-\mathbf H_i)=0$, i.e., $\mathbf J_i=\mathbf I-\mathbf H_i^{-1}\mathbf H_f$. With $\mathbf P_i=(-\mathbf H_i)^{1/2}$, $\mathbf J_i$ is similar to $\mathbf I-\mathbf P_i^{-1}(-\mathbf H_f)\mathbf P_i^{-1}$, a symmetric matrix whose eigenvalues are $1-\lambda$ for the generalized eigenvalues $\lambda$ of $(-\mathbf H_f,-\mathbf H_i)$, all in $(0,1]$ because $0\prec-\mathbf H_f\preceq-\mathbf H_i$. The spectral radius is therefore $1-\lambda_{\min}$, with $\lambda_{\min}=\min_{\mathbf d}\mathbf d^T(-\mathbf H_f)\mathbf d/\mathbf d^T(-\mathbf H_i)\mathbf d$. Since $-\mathbf H_1\preceq-\mathbf H_2$, each Rayleigh quotient is larger for $i=1$; hence $\lambda_{\min}^{(1)}\ge\lambda_{\min}^{(2)}$ and $\varrho(\mathbf J_1)\le\varrho(\mathbf J_2)$. In SISO power control the interference terms of all three surrogates are affine in $\mathbf p$, so that $\mathbf H_i=\diag(w_j\phi_j''(a_j)g_{jj}^2)$.

\section{Proof of Proposition~\ref{prop:xmmse}}
\label{app:xmmse}
For one stream ($D_{i_k}=1$), the quantities of \cite[Sec.~III]{YangZhao26} become scalars: $\boldsymbol\Phi=\gamma$, the eigenvector is $1$, and $\tilde{\mathbf P}=\mathbf F^{-1}\mathbf H\mathbf V\mathbf P$ reduces to the scalar $c_0=z_0/b_0$ (in MISO, $\mathbf F=b_0$ and $\mathbf H\mathbf V=z_0=\mathbf h^H\bar{\mathbf v}$). The transmit-side coefficient matrices of \cite[eq.~(6)]{YangZhao26} evaluate to $\tilde{\mathbf A}=w|c_0|^2\tilde c(\gamma)\,\mathbf h\mathbf h^H$ with $\tilde c(\gamma)=\gamma^{-1}[\log(1+\gamma)/\gamma-1/(1+\gamma)]$, $\mathbf A_{i,l}=w|c_0|^2(1+\gamma)^{-1}\mathbf h_{il}\mathbf h_{il}^H$ and $\mathbf B=w\,c_0\,[\log(1+\gamma)/\gamma]\,\mathbf h$. The surrogate to be maximized in $\mathbf v$ is $2\Real(\mathbf B^H\mathbf v)-\mathbf v^H\tilde{\mathbf A}\mathbf v$ minus the interference terms, i.e.,
\begin{multline*}
\frac{2w\log(1+\gamma)}{\gamma}\Real(c_0^*z)-w|c_0|^2\tilde c(\gamma)|z|^2\\
-\sum_{i\ne j}\frac{w_i|c_{0,i}|^2}{1+\gamma_i}|\mathbf h_{ij}^H\mathbf v|^2 .
\end{multline*}
Using $|c_0|^2=\gamma/b_0$ and $|c_0|^2/(1+\gamma)=\gamma/(b_0(1+\gamma))=\kappa^\star$ gives the price $w\kappa^\star$ and, for the own-signal part, $\frac{2w\log(1+\gamma)}{\gamma b_0}\Real(z_0^*z)-\frac{w}{b_0}\big[\frac{\log(1+\gamma)}{\gamma}-\frac1{1+\gamma}\big]|z|^2$; since $\log(1+\gamma)/\gamma-1/(1+\gamma)=b(\gamma)/(\gamma(1+\gamma))$, this is \eqref{eq:xmmse-vec}. For RIT+QT, the complex QT applied to $w c(\gamma)|z|^2/(\gamma^2b+b(\gamma)|z|^2)$ gives $2\Real(y^*\sqrt{wc(\gamma)}\,z)-|y|^2(\gamma^2b+b(\gamma)|z|^2)$ with $y^\star=\sqrt{wc(\gamma)}z_0/(\gamma^2b_0+b(\gamma)|z_0|^2)=\sqrt w\,z_0/(\gamma b_0\sqrt{1+\gamma})$, where we have used $\gamma^2b_0+b(\gamma)|z_0|^2=\gamma b_0(1+\gamma)\log(1+\gamma)$. Then $2\Real(y^{\star*}\sqrt{wc}\,z)=\frac{2w\log(1+\gamma)}{\gamma b_0}\Real(z_0^*z)$, $|y^\star|^2b(\gamma)=\frac{wb(\gamma)}{\gamma(1+\gamma)b_0}$ and $|y^\star|^2\gamma^2=\frac{w\gamma}{(1+\gamma)b_0}=w\kappa^\star$, which is again \eqref{eq:xmmse-vec}. Setting $z=x$, $z_0=x_0$ real and $b=b_0$ gives the scalar identities, and subtracting the corresponding LDT+QT/WMMSE expression gives $(1-\log(1+\gamma)/\gamma)(x-x_0)^2$ in the notation of both papers.

\section{Proof of Proposition~\ref{prop:sched} and Validity of the Scheduling Surrogate}
\label{app:sched}
\emph{Validity.} For fixed $\lambda_i\ge0$ and any candidate $k\in\mathcal K_i$ with power $p_k$, Theorem~\ref{thm:ict} with $\kappa=\lambda_i/w_k$ gives $w_k\Om(\lambda_ig_{i,k}p_k/w_k)-\lambda_ib_i\le w_k\log(1+g_{i,k}p_k/b_i)$. Summing over cells gives a minorizer of the WSR for every schedule and power vector in which the price $\lambda_i$ multiplies $b_i=\sum_{j\ne i}g_{i,s_j}p_{s_j}+\sigma^2$; rearranging, user $k$ scheduled in cell $i$ pays $\mu_kp_k$, so that the minorizer is a sum over cells of $w_k\Om(\lambda_ig_{i,k}p_k/w_k)-\mu_kp_k$ and the cells decouple. With $\lambda_i=w_{s_i}\kappa_i^\star$ it is tight at the current schedule. Maximizing over $p_k$ gives $\lambda_ig_{i,k}/\rhof(\lambda_ig_{i,k}p_k/w_k)=\mu_k$, hence \eqref{eq:score}, with unclipped value $w_k[\log(1+\rho_k)+\rho_k/(1+\rho_k)]-\mu_kp_k=w_k\log(1+\rho_k)$ because $\mu_kp_k=w_k\rho_k/(1+\rho_k)$.

\emph{FP and SEFP.} In \cite{ShenYuII}, $y_i^2=w_{s_i}(1+\gamma_i)a_i/(a_i+b_i)^2=\lambda_i$; the candidate's reward $2y_i\sqrt{w_k(1+\gamma_i)g_{i,k}p_k}-y_i^2g_{i,k}p_k$ minus the penalty $\mu_kp_k$ is maximized at $p_k=w_k(1+\gamma_i)\lambda_ig_{i,k}/(\lambda_ig_{i,k}+\mu_k)^2$ with value $w_k(1+\gamma_i)\rho_k/(1+\rho_k)$, and adding the constant $w_k(\log(1+\gamma_i)-\gamma_i)$ gives $w_k\ell_{\mathrm{LDT}}(\rho_k;\gamma_i)$. In \cite{Jiao26}, $\alpha_i=\gamma_i$ and $\alpha_i^2y_i^2=\lambda_i$; the reward $2y_i\sqrt{w_kc(\gamma_i)g_{i,k}p_k}-b(\gamma_i)y_i^2g_{i,k}p_k$ minus $\mu_kp_k$ has maximum $w_kc(\gamma_i)\rho_k/(\gamma_i^2+b(\gamma_i)\rho_k)=w_k\ell_{\mathrm{RIT}}(\rho_k;\gamma_i)$. The ordering $\ell_{\mathrm{LDT}}\le\ell_{\mathrm{RIT}}\le\log(1+\cdot)$ is \cite[Prop.~1, Cor.~1]{Jiao26} and \eqref{eq:rit}, or Corollary~\ref{cor:dom} after maximizing over the power; the threshold in Section~\ref{sec:sched} is where $\ell_{\mathrm{LDT}}=0$.

\section{Proof of the Statements in Remark~\ref{rem:mimo}}
\label{app:mimo}
$R(\boldsymbol\alpha,\mathbf B)=\log\det(\mathbf B+\boldsymbol\alpha\boldsymbol\alpha^H)-\log\det\mathbf B$ is convex in $\mathbf B\succ0$: along $\mathbf B+s\boldsymbol\Delta$, $\partial_s^2R=\tr[(\mathbf B^{-1}\boldsymbol\Delta)^2]-\tr[((\mathbf B+\boldsymbol\alpha\boldsymbol\alpha^H)^{-1}\boldsymbol\Delta)^2]\ge0$, because $\mathbf X\mapsto\tr(\mathbf X\boldsymbol\Delta\mathbf X\boldsymbol\Delta)$ is monotone on the positive semidefinite cone and $\mathbf B^{-1}\succeq(\mathbf B+\boldsymbol\alpha\boldsymbol\alpha^H)^{-1}$. A minorizer $\phi(\boldsymbol\alpha)-\tr(\mathbf K\mathbf B)$ tight at $(\boldsymbol\alpha_0,\mathbf B_0)$ makes $\mathbf B\mapsto R(\boldsymbol\alpha_0,\mathbf B)+\tr(\mathbf K\mathbf B)$ attain its minimum at the interior point $\mathbf B_0$, so that $\mathbf K=-\nabla_{\mathbf B}R(\boldsymbol\alpha_0,\mathbf B_0)=\mathbf K_0$. With $\mathbf u_0=\mathbf B_0^{-1}\boldsymbol\alpha_0$ and $t=\mathbf u_0^H\mathbf B\mathbf u_0$, $\tr(\mathbf K_0\mathbf B)=t/(1+\gamma_0)$, and Cauchy--Schwarz gives $|\mathbf u_0^H\boldsymbol\alpha|^2\le t\,\boldsymbol\alpha^H\mathbf B^{-1}\boldsymbol\alpha$; hence $R+\tr(\mathbf K_0\mathbf B)\ge\log(1+|\mathbf u_0^H\boldsymbol\alpha|^2/t)+t/(1+\gamma_0)\ge\Om(|\mathbf u_0^H\boldsymbol\alpha|^2/(1+\gamma_0))$ by Theorem~\ref{thm:ict}. Equality is approached by $\mathbf B_\epsilon=c\,\boldsymbol\alpha\boldsymbol\alpha^H+\epsilon\mathbf I$, $\epsilon\to0$, with $c$ chosen so that $t$ equals the minimizer of the scalar problem, so that the infimum equals $\Om(\cdot)$.

\section*{Acknowledgment}
The interference-conjugate transform, its tightness theorem, the algorithms and the code of this paper were developed in a custom workflow in which Claude Fable 5.1 (Anthropic) was directed by the author toward the final transform. In the author's view, this indicates that AI systems can now take part substantively in mathematical discovery in signal processing and optimization---identifying the right abstraction, proving optimality within it, and turning it into working algorithms---when guided by a researcher who asks the right questions.

\end{document}